\newcommand*{\Let}{\textbf{let}}
\newcommand*{\Increment}{\textbf{increment}}
\newcommand{\algmargin}{\the\ALG@thistlm}
\newlength{\whilewidth}
\algnewcommand{\parState}[1]{\State%
  \parbox[t]{\dimexpr\linewidth-\algmargin}{\strut #1\strut}}
\newlist{enumerata}{enumerate}{1}
\setlist[enumerate]{label=\upshape{(\roman*)}}
\setlist[enumerata]{label=\upshape{(\alph*)}}
\setlist[description]{labelindent=\parindent}
\def\namedlabel#1#2{\begingroup
    #2%
    \def\@currentlabel{#2}%
    \phantomsection\label{#1}\endgroup
}
  \let\thepdftitle\@title
  \let\thepdfauthors\shortauthors
\newtheorem{theorem}{Theorem}
\newtheorem{lemma}[theorem]{Lemma}
\newtheorem{proposition}[theorem]{Proposition}
\theoremstyle{definition}
\newtheorem{definition}[theorem]{Definition}
\newcommand{\EE}{\mathbb{E}}
\newcommand{\cA}{\mathcal{A}}
\newcommand{\cC}{\mathcal{C}}
\newcommand{\bF}{\mathbf{F}}
\newcommand{\cF}{\mathcal{F}}
\newcommand{\sH}{\mathscr{H}}
\newcommand{\bI}{\mathbf{I}}
\newcommand{\cI}{\mathcal{I}}
\newcommand{\bJ}{\mathbf{J}}
\newcommand{\cP}{\mathcal{P}}
\newcommand{\bS}{\mathbf{S}}
\newcommand{\bU}{\mathbf{U}}
\newcommand{\bX}{\mathbf{X}}
\newcommand{\bY}{\mathbf{Y}}
\newcommand{\Lam}{\Lambda}
\newcommand{\lam}{\lambda}
\newcommand{\Gam}{\Gamma}
\newcommand{\gam}{\gamma}
\renewcommand{\epsilon}{\varepsilon}
\newcommand{\eps}{\varepsilon}
\newcommand*{\blank}{\mathfrak{B}}
\DeclareMathOperator{\bla}{bla}
\DeclareMathOperator{\col}{col}
\DeclareMathOperator{\In}{In}
\DeclareMathOperator{\ind}{ind}
\DeclareMathOperator{\mad}{mad}
\DeclareMathOperator{\unc}{unc}
\newcommand*{\fix}{\mathrm{\normalfont\textsc{AddressB}}}
\newcommand*{\remove}{\mathrm{\normalfont\textsc{Remove}}}
\newcommand*{\sample}{\mathrm{\normalfont\textsc{Sample}}}
\title{An algorithmic framework for colouring locally sparse graphs}
\author[E.\ Davies]{Ewan Davies}
\address{Department of Computer Science, University of Colorado Boulder, USA}
\email{maths@ewandavies.org}
\thanks{(E.\ Davies) The research leading to these results has received funding from the European Research Council under the European Union's Seventh Framework Programme (FP7/2007-2013) / ERC grant agreement \textnumero{} 339109. Part of this work was done while the author was visiting the Simons Institute for the Theory of Computing.}
\author[R.\ J.\ Kang]{Ross J. Kang}
\address{Department of Mathematics, Radboud University Nijmegen, Netherlands.}
\email{ross.kang@gmail.com}
\thanks{(R.\ J.\ Kang) Supported by a Vidi grant (639.032.614) of the Netherlands Organisation for Scientific Research (NWO)}
\author[F.\ Pirot]{Fran\c{c}ois Pirot}
\address{G-SCOP, CNRS, Univ. Grenoble Alpes, Grenoble, France}
\email{francois.pirot@grenoble-inp.fr}
\author[J.-S.\ Sereni]{Jean-S\'{e}bastien Sereni}
\address{Service Public Fran\c{c}ais de la Recherche, Centre National de la Recherche Scientifique, CSTB (ICube), Strasbourg, France}
\email{sereni@kam.mff.cuni.cz}
\begin{document}

\begin{abstract}
We develop an algorithmic framework for graph colouring that reduces the problem to verifying a local probabilistic property of the independent sets.

With this we give, for any fixed $k\ge 3$ and $\varepsilon>0$, a randomised polynomial-time algorithm for colouring graphs of maximum degree $\Delta$ in which each vertex is contained in at most $t$ copies of a cycle of length $k$, where $1/2\le t\le \Delta^\frac{2\varepsilon}{1+2\eps}/(\log\Delta)^2$, with $\lfloor(1+\varepsilon)\Delta/\log(\Delta/\sqrt t)\rfloor$ colours.  

This generalises and improves upon several notable results including those of Kim (1995) and Alon, Krivelevich and Sudakov (1999), and more recent ones of Molloy (2019) and Achlioptas, Iliopoulos and Sinclair (2019). This bound on the chromatic number is tight up to an asymptotic factor $2$ and it coincides with a famous algorithmic barrier to colouring random graphs.
\end{abstract}

\maketitle

%!TEX root = hcm_cs.tex
%%%%%%%%%%%%%%%%%%%%%%%%%%%%%%%%%%%%%%%%%%%%%%%%%%%%%%%%%%%%%%%%%%%%%%hcm_cs_intro.tex

\section{Introduction}\label{sec:intro}

Let $G=(V,E)$ be a graph. An \emph{independent set} of $G$ is a vertex subset that induces an edgeless subgraph of $G$. The \emph{independence number}~$\alpha(G)$ of $G$ is the cardinality of a largest independent set of $G$. The \emph{chromatic number}~$\chi(G)$ of $G$ is the least number of parts in a partition of $V$ into independent sets of $G$.
Determining or bounding these structural parameters have been of fundamental importance to algorithms, optimisation, and operations research~\cite{Kar72}. Moreover, they have been central in the development of combinatorial mathematics, especially with respect to random graphs and extremal combinatorics~\cite{Ram29,ErSz35,Erd47,Erd59}.

The algorithmic and combinatorial perspectives are inextricably linked.
As an example, with an interpretation of the random graph $G_{n,1/2}$ as a model of average-case behaviour, Karp asked in 1976~\cite{Kar76} if for some positive~$\eps$ there is a polynomial-time algorithm that outputs an independent set in $G_{n,1/2}$ of size $(1+\eps)\log_2 n$ with probability tending to~$1$ as the number of vertices~$n$ tends to infinity, that is, \emph{with high probability (w.h.p.)}. (It is a basic fact that existentially we have $\alpha(G_{n,1/2}) \sim 2\log_2 n$ w.h.p.) Karp's question remains open and has helped to provoke an influential, sustained series of investigations in random graph theory, cf.~e.g.~\cite{KaMc15} for a survey from the perspective of graph colouring.
As another example, there are notorious gaps between the best-known upper and lower estimates on classical Ramsey numbers, but bounds have nevertheless proven useful towards approximation algorithms, cf.~e.g.~\cite{HaRa94,BGG15}.

Our main contribution is a novel framework for the asymptotic global structure---in terms of independent sets or colourings---of graphs that satisfy some local sparsity condition, having e.g.\ few edges in any induced neighbourhood subgraph. This framework is built around the establishment of elementary local properties of the so-called \emph{hard-core model} on a graph, a probabilistic approach having its roots in statistical physics. Our work lies near the interface between the above-mentioned parallel perspectives, and in fact is closely related to the two examples above. In this extended abstract we focus on algorithmic aspects of our framework, through one specific (and important) application, and show a comfortable incorporation of modern stochastic local search machinery to improve on the state of the art. In a companion paper~\cite{DKPS20main} we explore a broader but also more combinatorial array of applications, prioritising not-necessarily-algorithmic existential results.

One old and basic starting point for this research is the pursuit of global asymptotic structure in triangle-free graphs, that is, in graphs having no edges whatsoever in any induced neighbourhood subgraph. The search for large independent sets in this context corresponds to the classic off-diagonal case of Ramsey numbers~\cite{Ram29,ErSz35,AKS80,AKS81,She83,Kim95a,Boh09,BoKe13+,FGM20,DJPR18}, a foundational and profoundly difficult problem in combinatorics. The search for good colourings in this context, a related but more delicate task, is also an important challenge of classic origins, cf.~\cite{Zyk49,UD54}.

There is particular interest in graphs of bounded maximum degree, with natural links to approximation algorithms, cf.~e.g.~\cite{HaRa94,BGG15}. For colouring this interest originated in a question of Vizing from 1968~\cite{Viz68}: what is the largest chromatic number taken over all triangle-free graphs of maximum degree $\Delta$?
(Even without the triangle-free condition a trivial greedy argument yields an upper bound of $\Delta+1$, which is sharp for odd cycles and cliques.)
Simultaneously strengthening a seminal result of Ajtai, Koml\'{o}s and Szemer\'{e}di~\cite{AKS81} for the independence number and answering Vizing's question up to the choice of leading asymptotic constant, Johansson~\cite{Joh96} devised a sophisticated semirandom colouring procedure to establish an upper bound of $O(\Delta/\log \Delta)$ as $\Delta\to\infty$.  
Recently, in a dramatic advance, Molloy~\cite{Mol19} employed \emph{entropy compression} for a simplified proof and an intriguing improvement over Johansson's result, quantitatively matching an analogous independence number bound of Shearer~\cite{She83}.

\begin{theorem}[Molloy~\cite{Mol19}]\label{thm:molloy}
  For all $\eps > 0$, there exists $\Delta_0$ such that if $\Delta \ge \Delta_0$, then $\chi(G)\le (1+\eps)\Delta/\log\Delta$ for any given triangle-free graph $G$ of maximum degree $\Delta$.
  There is a randomised algorithm that in polynomial time w.h.p.~constructs a certificate colouring of $G$.
\end{theorem}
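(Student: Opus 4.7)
The plan is to instantiate the algorithmic framework announced in the abstract---reducing colouring to verifying a local probabilistic property of independent sets sampled from the hard-core model---for the triangle-free case. Concretely, I would run a semirandom iterative colouring procedure in which every vertex $v$ starts with a list $L(v)$ of $k=\lfloor(1+\eps)\Delta/\log\Delta\rfloor$ candidate colours. In each round, independently for every colour $c$, let $G_c$ denote the subgraph of $G$ induced by the vertices that still have $c$ in their list; since $G$ is triangle-free, so is $G_c$. Sample a hard-core--distributed independent set $I_c$ on $G_c$ at fugacity $\lam$ with $\lam\Delta\asymp\log\Delta$, tentatively assign colour $c$ to every $v\in I_c$, and resolve conflicts by keeping at most one assignment per vertex.

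The key local property to verify is that, for every pair $(v,c)$ with $c\in L(v)$, the colour $c$ is ``resolved'' at $v$---either $v\in I_c$ and retains the colour $c$, or some neighbour of $v$ lies in $I_c$ and blocks $c$ from $L(v)$ in the next round---with probability at least a target $p(\Delta)$ yielding the right shrinkage ratios for both lists and effective degrees. In a triangle-free graph this is tractable because $N(v)$ is itself independent, so the hard-core marginals on $N(v)\cup\{v\}$ are close to a product form: a direct computation (or a short cluster-expansion argument) gives $\Pr[v\in I_c]\approx \lam/(1+\lam)$ and $\Pr[N(v)\cap I_c\neq\emptyset]\approx 1-(1+\lam)^{-\deg(v)}$, and the choice $\lam\Delta\approx\log\Delta$ makes both quantities large enough to drive the recursion.

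Given the local property, I would iterate for $T=O(\log\Delta)$ rounds, maintaining coupled invariants $|L(v)|\gtrsim \ell_r$ and effective degree in each $G_c$ at most $d_r$, with the ratio $\ell_r/d_r$ bounded below so that the framework's analysis applies uniformly. Concentration of list sizes and degrees around their expected trajectories comes from a Talagrand or bounded-differences inequality adapted to hard-core samples, and the conjunction of all local bad events is controlled by an algorithmic Lov\'asz Local Lemma. Once the parameters reach a regime with a fixed gap between available colours and effective degree, a standard greedy completion finishes the proper colouring.

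The main obstacle is the algorithmic realisation of the hard-core step. Because $\lam\Delta$ exceeds the hard-core uniqueness threshold $\sim e/\Delta$, Glauber dynamics need not mix in polynomial time on a general triangle-free graph of maximum degree $\Delta$, so one cannot simply sample $I_c$ from the true hard-core distribution. The resolution---the ``modern stochastic local search machinery'' the abstract alludes to---is to replace exact sampling by a Moser--Tardos-style resampling procedure that reproduces only the local marginals the verification requires, and to prove it terminates in polynomial time while retaining enough local independence from round to round. Designing this substitute sampler, and checking that its bounded local discrepancies do not accumulate over the $O(\log\Delta)$ rounds, is the technical heart of the argument.
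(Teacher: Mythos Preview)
Your proposal describes a multi-round semirandom nibble in the style of Johansson, and that is \emph{not} how Molloy's proof or the paper's framework proceeds. The paper (following Molloy and Bernshteyn) works in a single partial-colouring state space with no iteration over rounds: one starts with the all-blank colouring and, while some vertex $u$ carries the flaw $B_u$ (too few available colours, or some available colour with too many competing blank neighbours), one resamples the colours of the \emph{entire neighbourhood} $N(u)$ in one step according to the hard-core model on the cover graph $H_{\sigma'}[L(N(u))]$; the Achlioptas--Iliopoulos--Sinclair algorithmic local lemma then shows this resampling loop terminates. There are no shrinkage ratios $\ell_r/d_r$ to track and no $O(\log\Delta)$ rounds; the ``finishing blow'' (\cref{lem:phase2}) is a separate trivial LLL once the flawless partial colouring is in hand.

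The crucial point you miss is \emph{where} the hard-core model lives. You place it on $G_c$, a global subgraph of $G$, and then rightly worry that $\lambda\Delta\asymp\log\Delta$ is far above the uniqueness threshold so no polynomial-time sampler is known. But the paper's sampler lives on the cover restricted to the lists of a \emph{single neighbourhood} $N(u)$. Since $G$ is triangle-free, $G[N(u)]$ is edgeless, so the relevant cover is a disjoint union of cliques (one per neighbour $v\in N(u)$), and the hard-core model on it is an exact product: independently for each $v\in N(u)$, set $v$ blank with probability $1/(1+|L_{\sigma'}(v)|\lambda)$ and otherwise pick a uniform colour from $L_{\sigma'}(v)$. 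No approximate sampler, no mixing-time argument, and no ``Moser--Tardos substitute that reproduces local marginals'' is required; the obstacle you identify as ``the technical heart of the argument'' simply does not arise in the triangle-free case. (Incidentally, your marginal $\Pr[v\in I_c]\approx\lambda/(1+\lambda)$ is off: in the hard-core model this equals $\tfrac{\lambda}{1+\lambda}\Pr[N(v)\cap I_c=\varnothing]$, which is much smaller when $v$ has many neighbours in $G_c$.)

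There is also a quantitative gap. Multi-round nibble arguments of the type you sketch are how Johansson obtained $O(\Delta/\log\Delta)$, but pushing the leading constant down to $1+\eps$ by that route is not known and would be delicate at best. Molloy's improvement to $(1+\eps)$ came precisely from abandoning the iterative scheme for the one-shot entropy-compression/LLL argument; in the paper's language, the local-occupancy inequality of \cref{def:localocc} applied with $a=0$ (edgeless neighbourhoods) is what delivers the sharp constant via \cref{thm:main,thm:main:alg}.
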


\noindent
For a hint of how difficult it might be to improve on this result, particularly with respect to the asymptotic leading constant of $1$, one can take two issues into consideration.
First, lowering the constant appreciably would by the same token improve upon the best to date lower bounds on the classical off-diagonal Ramsey numbers (which are due to Shearer~\cite{She83} as alluded to above), and it would constitute a breakthrough in quantitative Ramsey theory.
Second, a lowering of the constant \emph{and} with a polynomial-time algorithm would essentially imply a positive answer to the direct analogue of Karp's question above, for the random $\Delta$-regular rather than binomial random graph. Indeed, the following result is well known in random graph theory.

\begin{proposition}%[\cite{Bol80,Wor81,Bol81,FrLu92}]
\label{prop:regular}
For all $\eps>0$, there exists some $\Delta_0$ such that for all fixed $\Delta\ge\Delta_0$, we have the following for all $n$ sufficiently large. With probability at least $1-\eps$, the random $\Delta$-regular graph~$G_{n,\Delta}$ on $n$ vertices is triangle-free and satisfies $\alpha(G_{n,\Delta}) \in (2\pm\eps)(n\log \Delta)/\Delta$.
\end{proposition}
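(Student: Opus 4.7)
The plan is to combine two classical facts about the uniform random $\Delta$-regular graph: concentration of $\alpha(G_{n,\Delta})$ at its first-moment threshold, and the Poisson limit for counts of short cycles. Throughout I would work in the configuration model, which for fixed $\Delta$ is contiguous to the uniform $\Delta$-regular model, so w.h.p.\ statements transfer directly.

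For the independence number I would carry out a two-sided moment argument. The upper bound is a first-moment calculation: the expected number $\EE[X_k]$ of independent sets of size $k$ has a clean closed form in the configuration model, and setting $k=\lceil(2+\eps)(n\log\Delta)/\Delta\rceil$ gives $\EE[X_k]=o(1)$ provided $\Delta\ge\Delta_0(\eps)$, whence $\alpha(G_{n,\Delta})<k$ w.h.p.\ by Markov. The matching lower bound $\alpha(G_{n,\Delta})\ge(2-\eps)(n\log\Delta)/\Delta$ comes from the second-moment method, or more cleanly from the small-subgraph-conditioning analysis of Bollob\'as and Frieze for independent sets in $G_{n,\Delta}$, which shows that the count of independent sets of the critical size concentrates around its divergent expectation.

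In parallel, by the Poisson paradigm of Bollob\'as and Wormald, for each fixed $\Delta$ the joint distribution of the $j$-cycle counts in $G_{n,\Delta}$ tends as $n\to\infty$ to independent Poisson variables with means $(\Delta-1)^j/(2j)$, and the triangle count is asymptotically independent of the coarse global statistic $\alpha(G_{n,\Delta})$. Hence conditioning on triangle-freeness---an event of positive limiting probability---preserves the w.h.p.\ concentration of $\alpha$, and the claimed conclusion follows.

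The main obstacle is the second-moment lower bound for $\alpha$: one has to show that pairs of overlapping near-extremal independent sets do not swamp $\EE[X_k]^2$, which calls for a careful entropy or convexity estimate in the configuration model. Every other ingredient---the first-moment upper bound, the Poisson convergence of triangle counts, and the conditioning step---is by now standard in random graph theory and can largely be cited from Bollob\'as, Frieze, Wormald, and Janson--\L{}uczak--Ruci\'nski.
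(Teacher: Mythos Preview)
The paper does not prove this proposition; it is introduced as ``well known in random graph theory'' and no argument is supplied. So there is no proof in the paper to compare your proposal against.

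That said, there is a genuine gap---not in your method, but in the match between your argument and the proposition as literally stated. You correctly recall that the triangle count in $G_{n,\Delta}$ is asymptotically Poisson with mean $(\Delta-1)^3/6$, so triangle-freeness has limiting probability $e^{-(\Delta-1)^3/6}$. For $\Delta\ge\Delta_0$ with $\Delta_0$ large this probability is tiny, certainly not at least $1-\eps$. Hence the event ``$G_{n,\Delta}$ is triangle-free'' already fails the bound once $\Delta$ is moderately large, and a fortiori so does the conjunction. Your final sentence (``the claimed conclusion follows'') glosses over this: conditioning on a positive-probability event preserves w.h.p.\ statements \emph{in the conditional measure}, but it does not lift the unconditional joint probability to $1-\eps$.

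What your argument does establish---and what the paper actually uses the proposition for---is the existence statement: for each large enough fixed $\Delta$ and all large $n$, there exist $n$-vertex $\Delta$-regular triangle-free graphs with $\alpha\in(2\pm\eps)(n\log\Delta)/\Delta$. This follows because triangle-freeness has positive limiting probability while $\alpha$ lies in the stated window w.h.p., so the intersection is nonempty. For that purpose your ingredients (first moment for the upper bound on $\alpha$, second moment or the Frieze--\L{}uczak concentration for the lower bound, and the Bollob\'as--Wormald Poisson limit for short cycles) are exactly the standard ones and are correct.
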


\noindent
Since $\alpha(G) \ge |V|/\chi(G)$ for all $G=(V,E)$,
this shows the asymptotic leading term in Molloy's result (and the corresponding result of Shearer) to be correct up to a factor $2$.

We offer a more general principle behind \cref{thm:molloy}, through locally-defined probabilistic properties of the independent sets. 
Through this, one may witness that certain methods behind \cref{thm:molloy} are sharp and cannot be improved asymptotically; we discuss this in Subsection~\ref{sub:optimality}.
Important too is that the principle is flexible enough for a host of applications, which we partially present through this extended abstract (with more treated in the companion paper~\cite{DKPS20main}).
To give a first flavour of the extra breadth in our approach, here is a prototypical version of our main result in this extended abstract. 
For~$k\ge3$, let us define the \emph{fan} $F_k$ of order~$k$ as the graph formed from a path on $k-1$ vertices by adding a vertex joined to all vertices of the path.
We call a graph \emph{$F_k$-free} if it does not contain the fan~$F_k$ as a subgraph.

\begin{theorem}\label{thm:Ckfree}
Fix an integer $k\ge 3$. For all $\eps>0$, there exists some $\Delta_0$ such that if $\Delta\ge\Delta_0$, then $\chi(G)\le (1+\eps)\Delta/\log\Delta$ for any given $F_k$-free graph $G$ of maximum degree $\Delta$.
  There is a randomised algorithm that in polynomial time w.h.p.~constructs a certificate colouring of $G$.
\end{theorem}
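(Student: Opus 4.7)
The structural observation at the heart of the argument is a classical extremal one: if $G$ is $F_k$-free, then for every vertex $v$ the induced neighbourhood graph $G[N(v)]$ can contain no copy of the path $P_{k-1}$ on $k-1$ vertices, because any such path together with $v$ (which is joined to all of $N(v)$) would form an $F_k$. By the Erd\H{o}s--Gallai theorem, a graph on $n$ vertices with no $P_{k-1}$ subgraph has at most $(k-3)n/2$ edges, so $G[N(v)]$ has at most $(k-3)\Delta/2$ edges, i.e.\ average degree at most $k-3$. For fixed $k$, the neighbourhoods of $G$ are therefore \emph{uniformly sparse}: they have bounded average degree rather than merely $o(\Delta)$.

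The plan is to feed this local sparsity into the algorithmic framework promised earlier in the paper, which reduces colouring with $(1+\eps)\Delta/\log\Delta$ colours to verifying a local probabilistic property of the hard-core model on $G$ at a fugacity $\lambda$ of order $\log\Delta/\Delta$. Concretely, one needs to show that for each vertex $v$, a hard-core sample restricted to $N(v)$ has expected size essentially $\log\Delta$, matching the triangle-free calculation underlying \cref{thm:molloy}. In the triangle-free case this is immediate since $G[N(v)]$ is edgeless; here $G[N(v)]$ is merely sparse, and the task is to quantify that the edges of $G[N(v)]$ perturb the local marginals by a negligible amount.

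I would carry this out by a coupling (or direct first/second moment) comparison between the hard-core measure on $G[N(v)]$ at fugacity $\lambda$ and the product measure on the edge-free graph on the same vertex set. The expected number of edges of $G[N(v)]$ both of whose endpoints appear in a hard-core sample is at most $\tfrac{(k-3)\Delta}{2}\cdot \lambda^2 = O((\log\Delta)^2/\Delta) = o(1)$, so the sparse edges contribute a vanishing correction to the expected size of the sampled independent set, which can be absorbed into the $(1+\eps)$ slack by a small adjustment to $\lambda$. With the local property established, the framework supplies both the existential bound $\chi(G) \le (1+\eps)\Delta/\log\Delta$ and, through its built-in stochastic local search (an LLL/entropy-compression style procedure), a randomised polynomial-time algorithm producing the colouring w.h.p.

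The main obstacle is precisely this transfer step: controlling how much the $O(1)$-average-degree perturbation of each local hard-core model can shift the leading asymptotic constant. The argument must remain robust enough that constants depending on $k$ (coming from Erd\H{o}s--Gallai) do not leak into the leading term, only into the threshold $\Delta_0$; this is plausible because the relevant expectations are polynomially small in $\Delta$, but a careful concentration/coupling computation is required to make the reduction to the framework's hypothesis clean.
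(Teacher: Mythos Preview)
Your structural observation is correct and matches the paper exactly: $F_k$-freeness forces every neighbourhood $G[N(u)]$ to be $P_{k-1}$-free, and Erd\H{o}s--Gallai then bounds the maximum average degree of any subgraph of $G[N(u)]$ by $k-3$. This is precisely the content of \cref{lem:Ckmad} in the case $t<1$.

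Your route to local occupancy, however, diverges from the paper and is underspecified. The framework does not merely ask that the hard-core sample in $N(u)$ have expected size about $\log\Delta$; it asks for the quantitative inequality of \cref{def:localocc} to hold for \emph{every} subgraph $F\subset G[N(u)]$, with parameters $\beta,\gamma$ good enough that $q=r(\beta+\gamma\Delta/r)\le(1+\eps)\Delta/\log\Delta$. The paper establishes this directly (\cref{lem:mad}, via \cref{lem:sparsehcm}): for each vertex $v$ of $F$ one has $\Pr(v\in\bI)\ge\frac{\lambda}{1+\lambda}(1+\lambda)^{-\deg_F(v)}$, and convexity plus the $\mad$ bound gives the required lower bounds on $\lambda Z_F'/Z_F$ and $\log Z_F$. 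Your proposed coupling to the edgeless product measure, together with the ``$O((\log\Delta)^2/\Delta)$ expected bad edges'' heuristic at fugacity $\lambda\asymp\log\Delta/\Delta$, does not obviously yield these inequalities with the required uniformity over all $F$, and that choice of $\lambda$ is incompatible with the hypothesis $Z_F(\lambda)\ge 8\Delta^4$ in \cref{thm:main} (one needs $\lambda$ of order $1/\log\Delta$, as in the proof of \cref{thm:Ckapplication}). This part is repairable, but not along the lines you sketch.

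The genuine gap is the algorithmic claim. You write that the framework's stochastic local search ``supplies'' the polynomial-time algorithm, but \cref{thm:main:alg} has an explicit hypothesis you have not addressed: a procedure $\sample(\hat H,\lambda)$ that draws from the hard-core model on list-covers $\hat H$ of neighbourhood subgraphs, in polynomial time. For $k=3$ this is trivial because the neighbourhoods are edgeless, but for $k>3$ the neighbourhood covers can have up to $\Delta^2$ vertices and sampling from the hard-core model is in general intractable. Supplying this sampler is one of the paper's main technical contributions (\cref{thm:sampling}): it exploits $P_{k-1}$-freeness via \cref{lem:longestPintersect}, recursively stripping a longest path from each component so that the remainder is $P_{k-2}$-free, and enumerating the $(1+q)^{k-2}$ possible restrictions of the sample to that path. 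Without this ingredient your argument gives only the existential bound, not the randomised polynomial-time algorithm that the theorem asserts.
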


\noindent
Note $F_k$ contains a cycle of each length between $3$ and $k$, and so this strengthens \cref{thm:molloy} in a natural way.
For $k>3$, earlier work in this direction~\cite{AKS99,Vu02,AIS19} was not enough to obtain a leading asymptotic constant of $1$ (even without demanding a polynomial-time algorithm).
Keeping in mind \cref{prop:regular}, this constant is at most twice the optimal value, just as for \cref{thm:molloy}.

Another basic but more modern starting point for this research is the investigation of stochastic local search algorithms. In broad terms, given a state space equipped with a probability measure that has designated \emph{flawed subsets} (or \emph{flaws}), under what circumstances is there an efficient randomised algorithm, performing \emph{local} moves, to arrive at a flawless state? (One can think of a flawless state as, say, a satisfying assignment or a colouring.)
In a remarkable breakthrough, Moser~\cite{Mos09} (cf.~\cite{MT10}), showed that the Lov\'{a}sz local lemma~\cite{EL75}---a fundamental result for proving the existence of combinatorial structures with the probabilistic method---follows from an elementary stochastic search algorithm based on resampling parts of the current state. In his analysis, Moser devised the entropy compression method mentioned earlier, and this has since found wide applicability to various search algorithms that backtrack to avoid problematic regions of the state space, cf.~e.g.~\cite{EsPa13}.
Achlioptas, Iliopoulos, and Sinclair~\cite{AIS19} recently gave a powerful algorithmic form of the local lemma that permits the analysis of \emph{hybrid} algorithms, that can both resample and backtrack.
As their main application, they gave the following generalisation of \cref{thm:molloy}, under a smooth relaxation of the triangle-free condition.

\begin{theorem}[Achlioptas, Iliopoulos, and Sinclair~\cite{AIS19}]\label{thm:AIS}
    For all $\eps > 0$, there exists $\Delta_0$ such that if $\Delta \ge \Delta_0$ and $1/2\le t\le \Delta^{\frac{2\eps}{1+2\eps}}/{(\log\Delta)}^2$, then $\chi(G)\le (1+\eps)\Delta/\log(\Delta/\sqrt t)$ for any given graph~$G$ of maximum degree $\Delta$ where each vertex of $G$ is contained in at most $t$
  triangles.
  There is a randomised algorithm that in polynomial time w.h.p.~constructs a certificate colouring of $G$.
\end{theorem}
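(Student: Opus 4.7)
The plan is to deduce Theorem~\ref{thm:AIS} from our algorithmic framework, which reduces the construction of a good colouring to verifying a purely local probabilistic property of the hard-core model $\mu_{G,\lam}$. Concretely, I would take $\lam := (1+\eps')\log(\Delta/\sqrt t)/\Delta$ for a small $\eps' = \eps'(\eps) > 0$, the fugacity at which the target independent-set density matches the colouring we wish to produce. The framework then demands only that, at every vertex $v$, a local occupancy quantity --- essentially the log hard-core partition function on the induced subgraph $G[N(v)]$, uniformly over conditionings outside a constant-radius neighbourhood --- exceed an explicit threshold of roughly $\log(\Delta/\sqrt t)$.

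The first main step is to verify this local estimate using only the $t$-triangle-sparsity condition. The edges of $G[N(v)]$ are in bijection with the triangles through $v$, so there are at most $t$ of them; a direct inclusion-exclusion on these edges gives
\[
Z_{G[N(v)],\lam} \ge (1+\lam)^{|N(v)|}\bigl(1 - t\lam^2/(1+\lam)^2\bigr).
\]
The hypothesis $t \le \Delta^{2\eps/(1+2\eps)}/(\log\Delta)^2$ is exactly what is needed to make the correction factor negligible after taking logarithms, since then $t\lam^2 = O\!\bigl(\Delta^{2\eps/(1+2\eps)}(\log\Delta)^{-2}\cdot(\log\Delta/\Delta)^2\bigr) = o(1)$. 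Combined with $(1+\lam)^{|N(v)|} \ge (1+\lam)^\Delta \approx e^{\lam \Delta}$ and the choice of $\lam$, this comfortably meets the framework's threshold.

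The algorithmic conclusion is then handled by the local-search layer of the framework, which is modelled on AIS-style stochastic local search: one iterates a semi-random colouring procedure driven by samples from a truncated hard-core measure, using hybrid resampling and backtracking moves to repair flaws --- the rare events where a single iteration fails to contract palettes in line with its expectation. Efficient sampling from $\mu_{G,\lam}$ at $\lam = O(\log\Delta/\Delta)$ is available because the measure lies well within the spatial-mixing regime on graphs of maximum degree $\Delta$, so all relevant probabilistic quantities can be approximated in polynomial time.

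The main obstacle is ensuring that the second-order triangle correction survives the cumulative analysis over all $\Theta(\Delta/\log\Delta)$ iterations of the semi-random procedure. This is precisely why the exponent of $\Delta$ in the constraint on $t$ must be exactly $2\eps/(1+2\eps)$: the triangle error term must vanish faster than $1/\log\Delta$ per iteration to guarantee that the palette-contraction bound does not degrade below $(1+\eps)\Delta/\log(\Delta/\sqrt t)$ colours in total. Beyond this delicate accounting, which is the technical heart of the reduction, the framework supplies all the necessary algorithmic and probabilistic infrastructure.
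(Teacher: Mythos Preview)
Your proposal misidentifies the algorithm, the sampling mechanism, and the reason for the constraint on~$t$.

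First, the framework here is \emph{not} an iterated semi-random nibble with $\Theta(\Delta/\log\Delta)$ rounds and cumulative error. It is a single-phase stochastic local search (\cref{alg:addressb}) that seeks a \emph{flawless partial colouring}: starting from the all-blank colouring, one repeatedly addresses the least flaw $B_u$ or $U_u^e$ by resampling the colours of $N(u)$ from the hard-core model on the relevant cover graph. Termination is proved via the charge machinery of \cref{thm:algorithmicLLLL}, not by tracking palette contraction over many iterations. Consequently your ``technical heart'' --- that the triangle correction must vanish faster than $1/\log\Delta$ per round --- is addressing a nonexistent accumulation.

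Second, the upper bound $t\le\Delta^{2\eps/(1+2\eps)}/(\log\Delta)^2$ does not come from the occupancy estimate at all. Strong local occupancy (\cref{lem:mad}) degrades smoothly in $t$ and would tolerate $t$ up to roughly $\Delta^2/(\log\Delta)^{O(1)}$; see the remarks in \cref{sec:conclusions}. The stated bound on~$t$ is forced by the purely algorithmic condition~\ref{itm:main-alg-tvsell} of \cref{thm:main:alg}, namely $t\le\ell/40$, which is what the backtracking analysis (\cref{lem:chargeboundsB}) needs to control the charges $c^S(B_u)$ arising from uncolouring steps. Your inclusion--exclusion bound $Z_F(\lam)\ge(1+\lam)^y(1-t\lam^2/(1+\lam)^2)$ is correct but plays no role in pinning down the exponent $2\eps/(1+2\eps)$.

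Third, the sampling is not from $\mu_{G,\lam}$ and has nothing to do with spatial mixing. The procedure $\sample$ operates on the cover graph $\hat H$ of a neighbourhood subgraph, and in the triangle case ($k=3$) the graph $F\subset G[N(u)]$ after $\remove$ is edgeless, so sampling on $\hat H$ is trivial (independent per-vertex choices). Finally, the fugacity actually used is $\log(1+\lam)=1/(a\log(\Delta/\sqrt t))$ with $a=\sqrt{2t}$, giving $\lam=\Theta(1/(\sqrt t\log\Delta))$, not $\Theta(\log\Delta/\Delta)$; and local occupancy is the two-term inequality of \cref{def:localocc}, not merely a threshold on $\log Z_{G[N(v)]}(\lam)$.
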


\subsection{Our contributions}

As mentioned above, our main achievement is the development of a general framework for global graph structure that significantly strengthens the results stated above (Theorems~\ref{thm:molloy},~\ref{thm:Ckfree}, and~\ref{thm:AIS}). It encompasses or improves upon a long line of earlier work in this area~\cite{AIS19,AKS99,Ber19,DJKP18,DJKP18a,Joh96,Kim95,Mol19,Vu02}.
The framework in general reduces the main task to the verification of a probabilistic property of the independent sets that we call \emph{local occupancy}. In several applications this verification is straightforward, resulting in simplified proofs for existing results with matching or improved bounds, cf.~\cite{DKPS20main}. 
Moreover, subject to mild extra conditions, we can give polynomial-time constructions, which is our focus here.
Our main application is a common generalisation of Theorems~\ref{thm:Ckfree} and~\ref{thm:AIS} (and it implies the announced result since $C_k\subset F_k$).

\begin{theorem}\label{thm:Ckapplication}
    Fix an integer~$k\ge3$. For all $\eps > 0$, there exists $\Delta_0$ such that if $\Delta \ge \Delta_0$ and $1/2\le t\le \Delta^{\frac{2\eps}{1+2\eps}}/{(\log\Delta)}^2$, then $\chi(G)\le (1+\eps)\Delta/\log(\Delta/\sqrt t)$ for any given graph $G$ of maximum degree $\Delta$ where each vertex of $G$ is contained in at most $t$
 copies of the fan $F_k$.
  There is a randomised algorithm that in polynomial time w.h.p.~constructs a certificate colouring of $G$.
\end{theorem}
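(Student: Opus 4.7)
The plan is to apply the paper's algorithmic framework directly, which reduces the construction of a good colouring to verifying a local probabilistic property of the hard-core model on (sub)graphs encountered during the algorithm. With fugacity $\lam := \Delta/\sqrt t$, I would seek the smallest $\alpha$ such that, for every relevant induced subgraph $H$ of $G$ and every vertex $v$ of $H$, a random independent set $\bI$ drawn from the hard-core measure of $H$ at fugacity $\lam$ satisfies a local occupancy inequality of the shape
\[
\lam\cdot\Pr[v\in\bI] + \alpha\cdot\EE\bigl[|N_H(v)\cap\bI|\bigr]\le\alpha.
\]
The framework then delivers, in randomised polynomial time w.h.p., a proper colouring using roughly $\lceil\alpha\log(1+\lam)\rceil$ colours. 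Choosing $\alpha=(1+\eps)\Delta/\log(1+\lam)$ gives the desired bound $\lfloor(1+\eps)\Delta/\log(\Delta/\sqrt t)\rfloor$, so the task reduces to verifying local occupancy under the hypothesis that at most $t$ copies of $F_k$ pass through any vertex.

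The key combinatorial input is that controlling fans $F_k$ through $v$ controls path structure inside $G[N(v)]$: indeed, if $v$ is the apex of an $F_k$, its $k-1$ other vertices form a path in $G[N(v)]$, so the hypothesis bounds the number of paths of length $k-2$ in $G[N(v)]$ by $t$, and by repeated application one obtains control on shorter walks in the neighbourhood. To bound $\EE[|N_H(v)\cap\bI|]$, I would expand the hard-core partition function over independent sets of $G[N(v)]$ and split the sum according to how these sets interact with the few short paths forced by the fan bound. A Molloy-type calculation, in the spirit of the triangle-sparse analysis of Achlioptas--Iliopoulos--Sinclair but upgraded to exploit longer forbidden configurations, should then yield the required inequality for the chosen $\alpha$, provided $t\le\Delta^{2\eps/(1+2\eps)}/(\log\Delta)^2$; this is precisely the range appearing in the statement, so the constraint on $t$ will emerge naturally from where the calculation breaks.

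The algorithmic part of the proof is then supplied by the framework's hybrid stochastic local search, which alternates resampling and backtracking steps à la Achlioptas--Iliopoulos--Sinclair and is driven by approximate hard-core sampling on small neighbourhoods (feasible in polynomial time because each step only touches a subgraph of bounded radius, and the fugacity $\lam$ is at most polynomial in $\Delta$). Local occupancy translates into a one-step progress estimate that, together with the general flaws/charges analysis of the framework, gives termination in polynomial expected time. The main obstacle I anticipate is the combinatorial step: deducing from the sole hypothesis ``few fans through $v$'' a hard-core occupancy bound on $G[N(v)]$ that is tight enough, uniformly over all subgraphs arising in the algorithm, to achieve the leading constant $1+\eps$. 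Bridging this gap---passing from a single forbidden structure $F_k$ to a weighted bound on all independent sets in a neighbourhood with $\Theta(\Delta)$ vertices---is the heart of the argument and the reason the regime of admissible $t$ is exactly as stated.
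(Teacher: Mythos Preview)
Your proposal follows the right high-level architecture (verify local occupancy, feed it into the hybrid resample/backtrack framework), but there are concrete errors in the parameters and, more seriously, a genuine gap in the algorithmic step.

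First, the fugacity. You set $\lam=\Delta/\sqrt t$, but the paper takes $\lam=o(1)$, namely $\log(1+\lam)=1/\bigl(a\log(\Delta/\sqrt t)\bigr)$ with $a=k-3+\sqrt{2t}$. The colouring bound in the framework is not $\alpha\log(1+\lam)$ for a single parameter $\alpha$; it is $q=r(\beta+\gam\Delta/r)$ coming from \emph{two}-parameter strong local $(\beta,\gam)$-occupancy (Definition~\ref{def:localocc}), and the minimisation over $\beta,\gam$ is what forces the small fugacity. Your single-parameter inequality with the wrong sign does not match the framework, and with $\lam$ of order $\Delta/\sqrt t$ the required occupancy simply does not hold. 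Separately, the paper does not establish occupancy by ``splitting the partition function according to short paths''; it first proves the clean structural bound $\mad(G[N(u)])\le k-3+\sqrt{2t}$ via Erd\H{o}s--Gallai on $P_{k-1}$-free graphs (Lemma~\ref{lem:Ckmad}), and then invokes a general occupancy lemma for graphs of bounded neighbourhood mad (Lemma~\ref{lem:mad}).

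Second, and this is the real gap, your justification for polynomial-time sampling is incorrect. You write that sampling is ``feasible in polynomial time because each step only touches a subgraph of bounded radius, and the fugacity $\lam$ is at most polynomial in $\Delta$''. Neither reason helps: the neighbourhood subgraphs on which one must sample have up to $\Delta$ vertices, and exact (or even approximate) hard-core sampling on arbitrary graphs of that size is not known to be polynomial in $\Delta$ for the fugacities involved. This is precisely the obstacle the paper has to overcome. Its solution has two parts: the procedure $\remove$ deletes one edge from each of the at most $t$ copies of $P_{k-1}$ in the neighbourhood, leaving a $P_{k-1}$-free graph; and Theorem~\ref{thm:sampling} gives an \emph{exact} sampler for the hard-core model on list-covers of $P_{k-1}$-free graphs in time $\Delta^{O(k^3)}$, via a recursive longest-path decomposition (any two longest paths in a connected graph intersect, so peeling off a longest path strictly lowers $k$). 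The backtracking/uncolouring steps then handle the at most $t$ removed edges, and the constraint $t\le\ell/40$ in Theorem~\ref{thm:main:alg} is exactly what produces the bound $t\le\Delta^{2\eps/(1+2\eps)}/(\log\Delta)^2$. Without this $\remove$/$\sample$ pair you have no polynomial-time algorithm, and your proposal does not supply it.
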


Our method builds upon Molloy's proof of \cref{thm:molloy}, starting with a `blank' partial colouring and resampling the colours of neighbourhoods until a flawless partial colouring is found.
We have distilled the graph structure necessary for this resampling to eventually succeed, namely local occupancy, and this strategy alone suffices for the \emph{existence} of colourings as guaranteed by \cref{thm:Ckapplication}.
Resampling is performed according to the hard-core model and in general this is not known to be possible in polynomial time.
A crucial innovation we develop to raise~$k$ from~$3$ (\cref{thm:AIS}) to an arbitrary integer is an efficient resampling if the neighbourhood contains no long path (\cref{thm:sampling}).
These ideas already suffice for \cref{thm:Ckfree} (which corresponds to the case $t<1$),
but for \cref{thm:Ckapplication} we incorporate an adaptation of the backtracking steps used in~\cite{AIS19} for \cref{thm:AIS} with the stated upper bound on~$t$.
Thus we handle few copies of~$F_k$ by `removing' an edge in each copy
and show that the removed edges are unlikely to stall the algorithm.
That is, we can successfully backtrack away from colourings that make the removed edges monochromatic and still show the algorithm will terminate.
We remark that the upper bound condition on $t$ in Theorems~\ref{thm:AIS} and~\ref{thm:Ckapplication} stems from the demand for a polynomial-time construction.
For existence alone, one can not only substantially relax the condition on $t$ but also allow $k$ to increase as a modest function of $\Delta$, as we show in the companion paper~\cite{DKPS20main}.

\subsection{Optimality}\label{sub:optimality}

As intimated earlier, two important and related facts support the idea that our method is optimal.
The first concerns Karp's longstanding question mentioned earlier. The difficulty in this is now recognised as deriving from \emph{shattering} (or \emph{dynamic replica
symmetry breaking}, as it is referred to in statistical physics) in the collection of independent sets of a given size~\cite{ZK07,AC08,CoEf15}.
The rough intuition (stated in terms of Karp's question) is that as the desired set size increases from $(1-\eps)\log_2 n$ to $(1+\eps)\log_2 n$, the collection of independent sets of $G_{n,1/2}$, as considered under a suitable and natural metric, abruptly transitions from a well-connected space into one with exponentially many well-separated pieces. After this transition, any algorithm for finding independent sets of the desired size (let alone colourings whose average part size is at least the desired set size) must ably navigate this shattered space. Thus $\log_2 n$ is considered an intuitive algorithmic barrier for the independent set problem in $G_{n,1/2}$, and analogously $n/\log_2 n$ is an algorithmic barrier for colouring $G_{n,1/2}$. A similar intuition should hold for binomial random graphs throughout the range of choices for the edge probability $p=p(n)$ satisfying $np = \Omega(1)$ and $p=o(1)$ and analogously also for random $\Delta$-regular graphs with $\Delta$ fixed, with thresholds at around $\frac{1}{p}\log (np)$ (or $np/\log (np)$) and $\frac{n}{\Delta}\log \Delta$ (or $\Delta/\log \Delta$), respectively. An affirmative answer to Karp's question, or its analogue for random regular graphs, would be considered an unexpected and sensational achievement. \cref{thm:Ckapplication} (just as does Theorem~\ref{thm:molloy} or~\ref{thm:AIS}) precisely matches this algorithmic barrier. In particular, by \cref{prop:regular} the random $\Delta$-regular graph provides examples of triangle-free, and so $F_k$-free, graphs $G$ such that  $\chi(G) \ge (1/2 - o(1))\Delta/\log\Delta$ as $\Delta\to\infty$. By comparison, \cref{thm:Ckapplication} with a choice of $\eps=\eps(\Delta) = o(1)$ (and so $t=\Delta^{o(1)}$) as $\Delta\to\infty$ (covering a much more general class of graphs) efficiently certifies $\chi(G)\le (1+o(1))\Delta/\log\Delta$.

The second and more concrete fact is that our framework (see \cref{sec:framework}) incorporates the quantitative
probabilistic property of local occupancy. 
We find asymptotically tight parameters for local occupancy in graphs
of maximum degree $\Delta$ where each vertex is contained in $\Delta^{o(1)}$ copies of $F_k$ (for any fixed $k\ge 3$) as $\Delta\to\infty$, see \cref{sec:tightness}.
Then to improve the leading order of the upper bound in \cref{thm:Ckapplication} for $t=\Delta^{o(1)}$ must
require novel techniques that use more sophisticated knowledge of the underlying structure of the graph.
The methods used to prove \cref{thm:molloy,thm:AIS,thm:Ckapplication} all hit the same obstruction, and our reduction of the problem to local occupancy suggests that surpassing these bounds will require a more global approach. 
Showing even only the existence of a colouring (so without requiring a polynomial-time construction) that betters these bounds by some constant factor, e.g.~via some global data that bypasses the local occupancy bottleneck, would be a breakthrough in classical Ramsey theory and graph colouring. 

An alluring feature of our work is the suggestion that local occupancy and the algorithmic barrier might be two sides of the same coin. Could it be the case that for locally sparse $n$-vertex graphs $G$, given $q$ such that there are enough independent sets of size at least $n/q$ in $G$ so that $q$-colouring can be performed efficiently, the collection of independent sets in the graph will be sufficiently rich and `well connected' to permit local occupancy with parameters that enable our method to show $\chi(G) \le q$?

\subsection{Organisation}

In~\cref{sec:framework} we introduce the main concepts in our framework and state the key results that establish its efficacy.
In \cref{sec:method} we give a general algorithm for graph colouring and analyse it with our framework.
In~\cref{sec:application} we verify the probabilistic information necessary to apply our framework in the case of graphs with few fans, completing the proof of \cref{thm:Ckapplication}, and in~\cref{sec:tightness} we discuss barriers to improving our results.
We defer several technical proofs to \cref{sec:mainproofs,sec:appproofs}.

%!TEX root = hcm_cs.tex
%%%%%%%%%%%%%%%%%%%%%%%%%%%%%%%%%%%%%%%%%%%%%%%%%%%%%%%%%%%%%%%%%%%%%%hcm_cs_framework.tex

\section{The framework}\label{sec:framework}

We introduce some extra notation for two concepts central to our framework.
First we do so for an important strengthened form of colouring, through which we prove all of our results. A \emph{$q$-list assignment} of $G$ is a function $L$ such that
$L(u)$, for each vertex~$u\in V(G)$, is a list of colours (natural numbers) of size $q$, and an \emph{$L$-colouring} of $G$ is a colouring with no monochromatic edges such that the colour of $u$ is a member of $L(u)$ for each vertex $u$.
Then the \emph{list chromatic number}~$\chi_\ell(G)$ of $G$ is the least integer $q$ such that every $q$-list assignment $L$ admits an $L$-colouring.
By taking $L(u)=\{1,\dotsc,q\}$ for each $u$ we see that $\chi(G)\le\chi_\ell(G)$ always.
Note that Johansson's and Molloy's bounds mentioned above were also shown in terms of $\chi_\ell$.

Second we write $\cI(G)$ for the set of independent sets in a graph $G$, and $\mu_{G,\lam}$ for the \emph{hard-core model on $G$ at fugacity $\lam$},
 the probability measure on $\cI(G)$ with
\[ \mu_{G,\lam}(I) \coloneqq \frac{\lam^{|I|}}{Z_G(\lam)}, \]
where $\lam>0$ is the \emph{fugacity parameter}, and $Z_G(\lam)\coloneqq \sum_{I\in \cI(G)}\lam^{|I|}$ is
the \emph{partition function}.  Its
\emph{occupancy fraction} is~$\EE |\bI|/|V(G)|$, where $\bI\sim \mu_{G,\lam}$, i.e.~the
expected fraction of the vertices in a random sample from $\mu_{G,\lam}$.  We frequently drop
subscripts when they are clear from context.

Our framework outputs a range of structural information for a graph with the verification of a condition in terms of the hard-core model, which we call {\em local occupancy}.
This systematic approach began with a number of previous works~\cite{DJPR17,DJPR18,MR02,DJKP18,DJKP18a} on occupancy fraction and fractional colouring. Here we focus on our framework's consequences for efficient (list) colouring, and refer the reader to the companion paper~\cite{DKPS20main} for more general structural implications.

\begin{definition}\label{def:localocc}
  We say that the hard-core model on a graph $G$ at fugacity $\lam$ has \emph{strong local
    $(\beta,\gam)$-occupancy} if, for every vertex $u\in V(G)$ and subgraph $F\subset G[N(u)]$ we have
  \[ \beta \frac{\lam}{1+\lam}\frac{1}{Z_F(\lam)} + \gam \frac{\lam Z'_F(\lam)}{Z_F(\lam)} \ge 1.  \]
\end{definition}

\noindent
One easily checks that this condition implies $\beta \Pr(u \in \bI) + \gam \EE|\bI \cap N(u)| \ge 1$ for any $u$, motivating the label.
The reason for including the adjective `strong' relates to some technical subtleties we discuss later as well as in~\cite{DKPS20main}. In an abbreviated form, our framework is as follows.

\begin{theorem}\label{thm:main}
  Let $G$ be a graph of maximum degree $\Delta\ge 2^6$ such that the hard-core model
  on $G$ at fugacity $\lam$ has strong local $(\beta,\gam)$-occupancy for
  some $\lam,\beta,\gam>0$. 
  Suppose there exists $\ell > 7\log\Delta$ such that for all vertices $u\in V(G)$ and subgraphs $F\subset G[N(u)]$ on at least $\ell/8$ vertices we have $Z_F(\lam) \ge 8\Delta^4$. Then the list-chromatic number of $G$ is at most $q$, where
          \begin{align*}
            q          & \coloneqq r\left(\beta + \gam\frac{\Delta}{r}\right)                     &
            \text{and} &                                                                          &
            r          & \coloneqq \frac{\lam}{1+\lam}\frac{\ell}{1-\sqrt{7(\log\Delta) / \ell}}.
          \end{align*}
\end{theorem}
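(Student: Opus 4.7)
The plan is to prove this by analysing a single round of a randomised partial list-colouring procedure driven by the hard-core model, using strong local $(\beta,\gam)$-occupancy for an expected lower bound on the surviving list size, independence of the hard-core samples across colours together with the hypothesis $Z_F(\lam)\ge 8\Delta^4$ for concentration, and the symmetric Lovász Local Lemma to obtain a favourable outcome; iteration then finishes the colouring.

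The round is defined as follows. Given any $q$-list assignment $L$, for each colour $c$ form the subgraph $G_c$ of $G$ on vertices whose list contains $c$, and independently across colours sample $\bI_c\sim\mu_{G_c,\lam}$. Every vertex $u\in\bI_c$ becomes permanently coloured by $c$ (ties broken arbitrarily), and the list $L'(u)$ of every still-uncoloured vertex $u$ is obtained from $L(u)$ by deleting each colour $c$ such that some neighbour of $u$ lies in $\bI_c$. Now fix an uncoloured $u$ and a colour $c\in L(u)$, and let $F_c\coloneqq G_c[N(u)]$. The two terms of strong local $(\beta,\gam)$-occupancy match natural quantities of the round: $\tfrac{\lam}{1+\lam}\cdot\tfrac{1}{Z_{F_c}(\lam)}$ lower-bounds the probability that $u$ retains $c$, while $\tfrac{\lam Z'_{F_c}(\lam)}{Z_{F_c}(\lam)}$ equals $\EE|\bI_c\cap N(u)|$. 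Summing the local occupancy inequality over $c\in L(u)$ gives
\[
  \beta\,\EE|L'(u)|\ +\ \gam\sum_{c\in L(u)}\EE|\bI_c\cap N(u)|\ \ge\ |L(u)| = q,
\]
and an averaging argument across the at most $\Delta$ neighbours of $u$ controls the second sum by $\Delta\cdot r$, so the defining equation $q=r\beta+\gam\Delta$ forces $\EE|L'(u)|\ge r$.

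Because the samples $\bI_c$ are independent across colours, the indicators $\mathbf{1}[c\in L'(u)]$ are independent Bernoulli variables and $|L'(u)|$ enjoys classical Chernoff tails. The hypothesis $Z_F(\lam)\ge 8\Delta^4$ for all $F\subset G[N(u)]$ of size at least $\ell/8$ guarantees that on most colours the per-colour survival probability is not too small (the few colours with $|V(F_c)|<\ell/8$ contribute to $|L'(u)|$ only negligibly), and plugging into Chernoff yields $\Pr(|L'(u)| < (1-\sqrt{7(\log\Delta)/\ell})\,r)\le\Delta^{-\Omega(1)}$, which is the origin of the deviation factor in $r$. Each bad event ``some vertex's list drops below $\tfrac{\lam}{1+\lam}\ell$'' depends only on the samples $(\bI_c)_c$ for colours within a bounded-radius neighbourhood of that vertex, so the dependency degree is $\poly(\Delta)$ and the symmetric Lovász Local Lemma produces a favourable outcome; iterating on the uncoloured residual, or finishing greedily once list sizes dominate residual degrees, gives a proper $L$-colouring and establishes $\chi_\ell(G)\le q$. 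I expect the main obstacle to be the joint calibration of local occupancy, Chernoff concentration, and LLL failure probability, so that the precise constants $7$ and $8\Delta^4$ in the hypotheses line up with the advertised form of $r$; this quantitative tuning is the technical heart of the proof.
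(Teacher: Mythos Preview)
Your proposal has two genuine gaps that prevent the argument from going through as stated.

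\textbf{The bound on the second sum.} You claim an ``averaging argument'' gives $\sum_{c\in L(u)}\EE|\bI_c\cap N(u)|\le \Delta\cdot r$, but with independent per-colour hard-core samples no such bound is available: a fixed neighbour $v$ can lie in $\bI_c$ for many colours $c$ simultaneously, and the expected count $\sum_{c\in L(u)\cap L(v)}\Pr(v\in\bI_c)$ is of order $q\lambda/(1+\lambda)$, not $O(1)$. The paper avoids this by sampling \emph{once} from the hard-core model on the cover graph $H$, where each $L(v)$ is a clique; this forces $|\bJ_0\cap L(v)|\le 1$ and hence $\sum_{x\in L(u)}|\bJ_0\cap\Lambda_x|\le\Delta$ deterministically. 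Combined with the local-occupancy inequality (whose $\tfrac{\lambda}{1+\lambda}$ factor multiplies $1/Z_F$, not the retention probability itself---your identification of the two terms is slightly off) this yields $\EE|L_\tau(u)|\ge m=\tfrac{1+\lambda}{\beta\lambda}(q-\gamma\Delta)=\ell/(1-\eta)$. Even if your sum bound were $\Delta$ rather than $\Delta r$, your displayed inequality omits the $\tfrac{\lambda}{1+\lambda}$ factor, so the algebra producing $\EE|L'(u)|\ge r$ does not match $q=r\beta+\gamma\Delta$.

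\textbf{The local-lemma dependency.} You assert the dependency degree is $\poly(\Delta)$, but the bad event at $u$ is a function of the global samples $(\bI_c)_{c\in L(u)}$, and the events at $u$ and $w$ are dependent whenever $L(u)\cap L(w)\ne\varnothing$, which can happen for vertices at arbitrary distance. So the symmetric LLL does not apply. The paper instead uses the lopsided local lemma against the hard-core measure on the cover graph (in the Bernshteyn style, or here the algorithmic framework of Achlioptas--Iliopoulos--Sinclair), where the crucial ingredient is Lemma~\ref{lem:key}: resampling $\bJ_0$ on $L(N(u))$ alone fixes the flaw at $u$ with probability $\ge 1-1/(4\Delta^3)$ \emph{regardless} of the rest of the partial colouring. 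That conditional robustness, not independence, is what makes the local lemma go through. Your scheme also omits the second half of the flaw (bounding $\deg^*_{\sH_\sigma}(x)$ by $\ell/8$), which is where the hypothesis $Z_F(\lambda)\ge 8\Delta^4$ is actually used (Lemma~\ref{lem:deg}) and which is needed for the finishing phase.
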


\noindent
In this form, the framework essentially reduces the task of bounding the chromatic number from above to minimising $\beta+\gam \Delta/r$ subject to local $(\beta,\gam)$-occupancy, an optimisation which can be routinely performed to yield several other applications, cf.~\cite{DKPS20main}.
 We have given a conceptually elegant proof of \cref{thm:main} (in the style of Bernshteyn~\cite{Ber19}) in~\cite{DKPS20main}.
For polynomial-time constructions, here we need a more involved proof that requires some additional assumptions.

For an organic assimilation of the hard-core model in our arguments,
it will be helpful to represent list colourings through an auxiliary \emph{cover graph} as in the work of Dvo\v{r}\'{a}k and Postle~\cite{DP18}, on a stronger variant of list colouring called correspondence colouring.  
(This is a key insight.)

\begin{definition}
  Given a graph $G$, a \emph{cover} of $G$ is a pair $\sH = (L, H)$, consisting of a graph $H$ and a mapping $L \colon V(G) \to 2^{V(H)}$, satisfying the following requirements:
  \begin{enumerate}
    \item the sets $\{L(u) \,:\,u \in V(G)\}$ form a partition of $V(H)$;
    \item for every $u \in V(G)$, the graph $H[L(u)]$ is complete;
    \item if $E_H(L(u), L(v)) \neq \emptyset$, then either $u = v$ or $uv \in E(G)$;
    \item\label{item:matching} if $uv \in E(G)$, then $E_H(L(u), L(v))$ is a matching (possibly empty).
  \end{enumerate}
  A cover $\sH = (L, H)$ of $G$ is \emph{$q$-fold} if $|L(u)| = q$ for all $u \in V(G)$.
  An \emph{$\sH$-colouring} of $G$ is an independent set in $H$ of size $|V(G)|$.
\end{definition}

\noindent
Although covers as defined here capture a more general notion, most of our results here will remain restricted to list colouring. (We discuss the subtleties at the end.)
Given a~$q$-list assignment $\tilde L$ of $G$ we create a $q$-fold cover $\sH=(L,H)$ of $G$ such that $\sH$-colourings
of~$G$ correspond to $L$-colourings of $G$ by making the sets $L(u)$ formally disjoint copies of the lists
$\tilde L(u)$, and for every edge $uv\in E(G)$ adding an edge between $x\in L(u)$ and $y\in L(v)$ whenever $x$ and $y$ are two copies of the same colour. In
an attempt to avoid confusion we will refer to elements of the lists~$\tilde L(u)$ as natural numbers,
and we will refer to vertices of $H$ as colours.
We shorten the phrase `cover that arises from a list assignment' to `list-cover'.

We now state the extra assumptions needed for our framework to yield an efficient algorithm.

\begin{theorem}\label{thm:main:alg}
  Suppose that the conditions of \cref{thm:main} hold, and let~$n\coloneqq|V(G)|$.
  Suppose also that there is a class of graphs $\cC$ and an integer $t$ such that the following hold.
  \begin{enumerate}
    \item\label{itm:main-alg-remove}
          For each $u\in V(G)$, each induced subgraph $F\subset G[N(u)]$, and any list-cover $\sH'=(L',H')$ of $F$ with at most $\Delta$ colours in each list, we have a procedure $\remove(F,H')$, running in time
          $T_r\ge\Delta$, for finding a set $R$ of edges in $F$ such that $|R|\le t$ and the graph $\hat H$ obtained from $H'$ by removing any edge between $L'(v)$ and $L'(w)$ for $vw\in R$ satisfies
          $\hat H\in \cC$.
    \item\label{itm:main-alg-sample}
          For each $\hat H\in \cC$ we have a procedure $\sample(\hat H,\lam)$ for sampling from the
          hard-core model on $\hat H$ at fugacity $\lam$ in time $T_s$.
    \item\label{itm:main-alg-tvsell}
          The integer $t$ satisfies $0\le t \le \ell/40$.
  \end{enumerate}
  Then, for any $q$-list-assignment $\tilde L$ of $G$, there is a randomised algorithm that constructs,
  for any~$c\in(0,1)$, an $\tilde L$-colouring of $G$ as guaranteed by \cref{thm:main} in time
  \[ O\big((T_r+T_s)\Delta\log(\max(\ell/\lam,\ell))n+n^{1+c}\big) \]
  with probability at least $1-2/n^c$.
\end{theorem}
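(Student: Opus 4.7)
The plan is to run a hybrid stochastic local search on partial list-colourings that combines Molloy's blank-colouring resampling strategy with the Achlioptas--Iliopoulos--Sinclair backtracking innovation~\cite{AIS19}. I first pass to the cover graph picture: from $\tilde L$ build the $q$-fold list-cover $\sH=(L,H)$, and let a state be a partial $\sH$-colouring, in which each $u\in V(G)$ is either assigned a vertex of $L(u)$ or the blank symbol~$\blank$. The main loop, in phases indexed by a current \emph{root} vertex $u$, constructs the induced subgraph $F\subset G[N(u)]$ on the still-blank neighbours of $u$, extracts the induced list-cover $\sH'=(L',H')$ of $F$ on at most $\Delta$ colours per list, calls $\remove(F,H')$ to obtain an edge set $R$ of size at most $t$ and a cover $\hat H\in\cC$, calls $\sample(\hat H,\lam)$ to draw a hard-core sample $\bI\sim\mu_{\hat H,\lam}$, and uses $\bI$ to propose colours for $u$ and some of its neighbours; any monochromatic edge forced by the removed edges $R$ is repaired by a local backtracking step that re-blanks one endpoint.

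For the analysis I would follow an AIS-style potential argument, tracking two families of \emph{flaws} on a per-vertex basis: a \emph{sampling flaw}, fired when the hard-core sample $\bI$ on $\hat H$ at fugacity $\lam$ fails to give $u$ a colour; and a \emph{removal flaw}, fired when one of the at most $t$ edges of $R$ becomes monochromatic. The strong local $(\beta,\gam)$-occupancy property unpacks to the inequality $\beta\Pr(u\in\bJ)+\gam\EE|\bJ\cap N(u)|\ge 1$ for $\bJ\sim\mu_{F,\lam}$, and a monotonicity argument for the hard-core model under edge removal transfers this bound to $\hat H$. This is the progress estimate needed to show that each resample round at the root reduces the expected deficit of an appropriately chosen potential by a constant factor. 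Removal flaws are charged separately against the slack from $Z_F(\lam)\ge 8\Delta^4$, which bounds the probability of any single removed edge becoming monochromatic by $O(\Delta^{-8})$, so that the condition $t\le\ell/40$ keeps their total contribution lower order. Combining these with the list threshold $r$ from \cref{thm:main} bounds the expected number of resample rounds per root by $O(\Delta\log(\max(\ell/\lam,\ell)))$; summing over the $n$ roots yields the leading $(T_r+T_s)\Delta\log(\max(\ell/\lam,\ell))n$ term, and a standard tail bound with slack $n^c$ converts this expectation into the claimed $1-2/n^c$ success probability, absorbing the additive $n^{1+c}$ term.

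The main obstacle is the coupled analysis of sampling and removal flaws: resampling at $u$ changes the lists, and hence the very covers on which later $\remove$ calls at neighbours of $u$ operate, so the causality graph of flaws is dynamic rather than static and one cannot simply invoke the AIS framework as a black box. I would address this by following AIS in constructing a witness sequence for each run, charging each trajectory as the product of a measure-preserving factor coming from $\sample$ (giving exact hard-core weights) and a deterministic factor coming from $\remove$ together with backtracking (adding only polynomial overhead per flaw). The constraint $t\le\ell/40$ is precisely what is needed for the additional dependence this introduces to be swamped by the $\ell$-sized slack in each neighbourhood cover, mirroring the role of the upper bound on $t$ in the proof of \cref{thm:AIS}.
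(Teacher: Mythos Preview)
Your proposal has the right high-level architecture---hybrid resample/backtrack search in the cover graph, analysed via the AIS charge machinery---but it misidentifies several of the concrete ingredients, and at least two of these gaps would block a correct proof.

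First, you are missing the two-phase structure. The algorithm does \emph{not} produce a full $\sH$-colouring directly: phase~1 only outputs a flawless \emph{partial} colouring, where ``flawless'' means every uncoloured vertex $u$ has $|L_\sigma(u)|\ge\ell$ and every $x\in L_\sigma(u)$ has $\deg^*_{\sH_\sigma}(x)\le\ell/8$. These conditions are engineered precisely so that phase~2 can complete the colouring by a simple Moser--Tardos resample in time $O(n^{1+c})$ with failure probability $1/n^c$. The $n^{1+c}$ term is the cost of phase~2, not slack absorbed by a tail bound; and the failure probability $2/n^c$ is the sum of $2^{-n}$ from phase~1 and $1/n^c$ from phase~2.

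Second, and relatedly, your flaw definitions are off. The resampling action at $u$ does \emph{not} colour $u$; it blanks and resamples the neighbours of $u$, and the flaw $B_u$ is not ``the sample failed to colour $u$'' but rather ``$u$ has fewer than $\ell$ available colours, or some available colour has more than $\ell/8$ competing blank neighbours''. Strong local occupancy is used not as a deficit-reduction progress estimate but, via summing over colours $x\in L(u)$, to show $\EE|L_\tau(u)|\ge m\ge\ell$ after the fix; concentration (negative correlation of the layer events) then gives $\Pr(|L_\tau(u)|<\ell)\le 1/(8\Delta^3)$. The hypothesis $Z_F(\lam)\ge 8\Delta^4$ is not used to bound monochromatic-edge probabilities as you suggest; it bounds the probability that a colour $x$ simultaneously survives in $L_\tau(u)$ \emph{and} has high degree, giving the other half of $\Pr(\tau\in B_u)\le 1/(4\Delta^3)$. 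The uncolouring flaws $U_u^e$ are handled separately with charge $\lam/(1+\ell\lam)$, and the inequality $t\le\ell/40$ enters exactly once: in verifying the AIS condition $\zeta_{B_u}\le 3/4$ via $4t\lam\le(1+\ell\lam)(3/e-1)$. Finally, no monotonicity of the hard-core model under edge removal is needed or used; strong local occupancy already applies to every subgraph of $G[N(u)]$, and this is why the ``strong'' form is required.
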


In general the graphs $F$ can have $\Delta$ vertices so the maximum of $T_s$ and $T_r$ could
be exponential in $\Delta$.  When $G$ has few copies of $F_k$ we show that these conditions hold with
$\cC$ the class of covers of $F_k$-free graphs with at most $\Delta$ colours in each list, and with $T_r$ and $T_s$ polynomial in $\Delta$, leading to the polynomial running time bound announced in \cref{thm:Ckapplication}.

\subsection{An algorithm for graph colouring}\label{sec:method}

First here is an overview of the two-phase method.
We define (precisely later) a \emph{flaw} for $u$ to capture the problem of having too few available colours or too much competition for available colours.
The first and foremost phase of the proof is that subject to the conditions of \cref{thm:main:alg},
a \emph{flawless partial colouring of $G$} can be found efficiently.

\begin{lemma}\label{lem:phase1:alg}
  Suppose that the conditions of \cref{thm:main:alg} hold, and let $\sH=(L,H)$ be a $q$-fold list-cover of $G$.
  Then there is an algorithm that constructs a flawless
  partial $\sH$-colouring of $G$ in time $O\big((T_r+T_s)\Delta\log(\max(\ell/\lam,\ell))\cdot n\big)$ with
  probability at least $1-2^{-n}$.
\end{lemma}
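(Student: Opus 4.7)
The plan is a stochastic local search on partial $\sH$-colourings. Start from the all-blank state, where every vertex carries the marker $\blank$; at each iteration, pick a flawed vertex and resample the colours of its neighbourhood through the hard-core model, halting when the state is flawless. The notion of \emph{flaw} should be designed so that (a) flawless partial colourings can later be extended to a full $\sH$-colouring using the bound on $q$ in \cref{thm:main}, and (b) strong local $(\beta,\gam)$-occupancy implies that a single resample step has a good chance of eliminating the flaw it targets. Concretely, a vertex $u$ is flawed when the set of colours in $L(u)$ still permitted by the current colouring of $N(u)$ is smaller than a threshold determined by $\beta$, $\gam$, $\lam$.

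A single resample at $u$ proceeds as follows. Form $F = G[N(u)]$ and let $\sH'=(L',H')$ be the restriction of $\sH$ to $N(u)$, further pruning from each $L(v)$ the colours that are already blocked by a coloured vertex outside $N(u)$; in particular each of the resulting lists has at most $\Delta$ colours, so the hypotheses of \cref{itm:main-alg-remove} apply. Invoke $\remove(F,H')$ to obtain a set $R \subseteq E(F)$ of at most $t$ edges and a cover $\hat H \in \cC$, and invoke $\sample(\hat H,\lam)$ to draw $\bI\sim\mu_{\hat H,\lam}$. Blank every vertex of $N(u)$, colour each $v\in N(u)$ with its image under $\bI$ if any, and finally reblank both endpoints of every edge $vw\in R$ that $\bI$ made monochromatic. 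One resample runs in time $O(T_r+T_s)$ and touches at most $\Delta$ vertices.

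The analysis fits the Achlioptas-Iliopoulos-Sinclair hybrid resampling-plus-backtracking paradigm. The inequality in \cref{def:localocc} gives precisely the charge that certifies one resample on $\hat H$ clears the flaw at $u$ with adequate probability, while the partition-function bound $Z_F(\lam)\ge 8\Delta^4$ on large neighbourhood subgraphs absorbs the $\Delta$-fold dependency coming from flaws at vertices of distance at most $2$ from $u$; the assumption $t\le\ell/40$ ensures that the at most $2t\le\ell/20$ reblanks caused by backtracking are overwhelmed by the $\ell$ fresh colour choices each resample provides. A Moser-Tardos-style entropy compression then shows that if the algorithm performs more than $C\Delta\log(\max(\ell/\lam,\ell))\cdot n$ resamples, for a sufficiently large constant $C$, the random bits consumed by the $\sample$ calls would admit an encoding shorter than their entropy, an event of probability at most $2^{-n}$, delivering the claimed running time and success probability.

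The principal obstacle is showing that both the charge condition and the compression survive the presence of backtracking. Each resample carries a hard-core sample together with a record of which edges of $R$ became monochromatic, and this record must be reconstructible from the flaw trace plus a short certificate without blowing up the information budget. The cover-class hypothesis $\cC$ (which tightly constrains $\hat H$ so that $\remove$'s output is essentially a function of $F$) together with $t\le\ell/40$ is precisely what makes this bookkeeping work.
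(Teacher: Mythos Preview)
Your high-level plan---stochastic local search from the all-blank state, resampling neighbourhoods via the hard-core model on a cover from which $\remove$ has excised a few edges, and analysing termination in the Achlioptas--Iliopoulos--Sinclair style---matches the paper. But two concrete ingredients in your proposal diverge from what the paper actually does, and both matter.

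First, your flaw definition is incomplete. You declare $u$ flawed only when $|L_\sigma(u)|$ drops below a threshold, but the paper's flaw $B_u$ also fires when some colour $x\in L_\sigma(u)$ has $\deg^*_{\sH_\sigma}(x)>\ell/8$. Without that second clause a ``flawless'' partial colouring need not satisfy the hypotheses of \cref{lem:phase2}, so requirement (a) in your own plan fails. The degree clause is also where the hypothesis $Z_F(\lam)\ge 8\Delta^4$ is actually used (see \cref{lem:deg}): it bounds the probability that a colour survives in $L_\tau(u)$ while still having many conflicts.

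Second, and more seriously, your backtracking mechanism---``reblank both endpoints of every edge $vw\in R$ that $\bI$ made monochromatic''---is not what the paper does and would break the charge analysis. The paper introduces a third vertex state, \emph{uncoloured}, in which $\sigma(v)=e$ records the offending edge; it uncolours only one endpoint per monochromatic edge, and it leaves already-uncoloured neighbours of $u$ untouched during $\fix(u,\sigma)$. This is not cosmetic. The edge marker creates a separate family of flaws $U_v^e$ with their own cheap action (resample $\sigma(v)$ alone), and the modified measure $\tilde\mu(\sigma)\propto\lam^{|\col(\sigma)|+|\unc(\sigma)|}$ is calibrated so that the charge ratio $\tilde\mu(\sigma)\rho_{B_u}(\sigma,\tau)/\tilde\mu(\tau)$ collapses to a term computable from $(S,\tau)$ alone (this is the content of \cref{lem:chargeboundsB}). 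If you simply reblank, you cannot recover from $\tau$ how many colours the sample $\bJ_0$ placed, so the ratio does not simplify and the sum over $\sigma\in\In^S_{B_u}(\tau)$ cannot be identified with a hard-core probability as in \eqref{eq:cBubound}. Your final paragraph correctly identifies that the record of monochromatic edges must be reconstructible, but reblanking destroys exactly the information the paper stores in the uncoloured state. The bound $t\le\ell/40$ is then used not to compare ``reblanks'' against ``fresh choices'' but to control the product $\prod_{vw\in R}(1+\psi_{U_v^{vw}})$ in the charge verification of \cref{thm:algorithmicLLLL}.
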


\noindent
The second phase is that a flawless partial colouring can be
efficiently completed to a list colouring of $G$. This `efficient finishing blow' is standard
and was established in earlier work~\cite{Mol19,AIS19}.

\begin{lemma}\label{lem:phase2}
  Suppose that the conditions of \cref{thm:main:alg} hold, $\sH=(L,H)$ is a $q$-fold list-cover of $G$,
  and let $\sigma$ be a flawless partial $\sH$-colouring of $G$.
  Then there is an algorithm that finds, for each~$c\in(0,1)$, an
  $\sH$-colouring of $G$ in time $O\left(n^{1+c}\right)$ with probability at least $1-1/n^c$.
\end{lemma}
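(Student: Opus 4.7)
The plan is to adapt the standard ``finishing blow'' argument from Molloy's proof of \cref{thm:molloy} and from Achlioptas--Iliopoulos--Sinclair. After phase~1 the output $\sigma$ is flawless, which by the (to-be-formalised) definition of a flaw translates into quantitative slack in the residual problem: writing $L^*(u)\subseteq L(u)$ for the colours at an uncoloured vertex $u$ whose corresponding vertex in $H$ has no $H$-neighbour in $\sigma$, and $d^*(u)$ for the number of uncoloured neighbours of $u$ in $G$, flawlessness forces $|L^*(u)|\ge (1+\eta)\,d^*(u)$ for some absolute $\eta>0$ depending on the parameters $\lam,\beta,\gam,\ell$ fixed by \cref{thm:main,thm:main:alg}.

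First I would show that the subgraph $G^*\subseteq G$ induced by uncoloured vertices shatters into small components with high probability over the randomness of phase~1. This is a classical Molloy--Reed-type computation: the flawlessness bookkeeping together with the way phase~1 samples colours makes each vertex uncoloured with probability at most $1/\Delta^{1+\eta'}$ for some $\eta'>0$, and distant uncoloured events are essentially independent. A standard tree-counting bound then says that any fixed connected subtree of $V(G)$ of size $k$ lies in $G^*$ with probability at most $p^k$ for some $p<1$, while the number of such subtrees rooted at a given vertex is at most $(e\Delta)^k$; a union bound over $n$ roots yields that every component of $G^*$ has size at most $k_0\coloneqq c\log n/\log\Delta$ with probability at least $1-1/n^c$.

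Having localised the task, I would finish each component independently. The slack $|L^*(u)|\ge (1+\eta)d^*(u)$, together with the list-cover structure (property~\ref{item:matching} in the definition of cover), guarantees by a short Hall-type argument applied within each component that a valid extension of $\sigma$ to that component exists. Since each component has at most $k_0$ vertices, each with at most $q\le O(\Delta)$ colours in its list, an exhaustive search over all assignments costs $\Delta^{O(k_0)}\le n^{O(c)}$ per component; absorbing the constant in the exponent of $c$ (or tightening $k_0$), this is $n^{c}$ per component, and summing over at most $n$ components gives total running time $O(n^{1+c})$.

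The main obstacle will be the shattering step: one must extract from ``flawlessness'' a local rarity of uncoloured vertices quantitative enough to beat the $\Delta^k$ factor in the tree count. Once this is in hand, the remainder of the argument is essentially verbatim from the finishing phases of \cite{Mol19,AIS19}, and in particular the extension to list-covers (rather than plain list colouring) goes through unchanged because property~\ref{item:matching} ensures that each coloured neighbour of $u$ blocks at most one colour of $L^*(u)$, preserving the counting on which the slack bound relies.
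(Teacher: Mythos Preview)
Your approach has a genuine gap: the lemma is stated for a \emph{fixed} flawless partial colouring $\sigma$, so there is no randomness from phase~1 to invoke, and flawlessness alone does not force the blank vertices to shatter into small components. Concretely, flawlessness here means $\unc(\sigma)=\varnothing$ and, for every $u\in\bla(\sigma)$, that $|L_\sigma(u)|\ge\ell$ and $\deg^*_{\sH_\sigma}(x)\le\ell/8$ for each $x\in L_\sigma(u)$. This is a bound on degrees in the residual cover $H_\sigma^*$, not on the number of blank $G$-neighbours of $u$, so your slack hypothesis $|L^*(u)|\ge(1+\eta)d^*(u)$ does not follow. For an explicit obstruction, take a list-cover whose underlying lists of natural numbers are pairwise disjoint: then $H^*$ is edgeless, the all-blank partial colouring is flawless, yet $G[\bla(\sigma)]=G$ can be a single connected component of size $n$. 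The shattering step, which you rightly flag as the main obstacle, therefore cannot be carried out from the stated hypotheses, and the subsequent brute-force bound has no footing.

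The paper's proof bypasses any structural argument on $G[\bla(\sigma)]$ and applies the Moser--Tardos algorithmic local lemma directly to the residual cover. After trimming each $L_\sigma(u)$ to exactly $\ell$ colours, one samples a uniform colour for every blank vertex; the bad events are the monochromatic edges $A_{xy}$ of $H_\sigma^*$, each of probability $\ell^{-2}$, and by the $\ell/8$ bound each $A_{xy}$ depends on at most $2\cdot\ell\cdot(\ell/8)=\ell^2/4$ others. Taking $\psi_{A_{xy}}=4\ell^{-2}$ verifies the LLL condition with an expected $O(n)$ resamplings, and Markov's inequality yields the stated time and success probability. The only facts about $\sigma$ used are the two numerical consequences of flawlessness above; how $\sigma$ was produced is irrelevant.
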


Note that \cref{lem:phase1:alg} followed by \cref{lem:phase2} directly implies \cref{thm:main:alg}.

The algorithm for \cref{lem:phase2} selects a colour for the
remaining vertices uniformly at random, resampling if there are any conflicts.  For completeness we give
a sketch proof in \cref{sec:phase2}.

The algorithm for \cref{lem:phase1:alg} explores the space of partial colourings of $G$,
starting with a trivial colouring in which every vertex is \emph{coloured blank}.  We define
an order on flaws, and while the current partial colouring $\sigma$ is flawed we execute
a resampling action to address the least flaw present in $\sigma$ and move to a new partial colouring.
We give the proof of \cref{lem:phase1:alg} in the rest of this section with some details deferred to
\cref{sec:mainproofs}.  

\subsubsection{Notation}\label{sec:notation}

Given the setup of \cref{thm:main}, we work with a $q$-fold list-cover~$\sH=(L,H)$ of~$G$.
For a set $S\subset V(G)$ we write $L(S) \coloneqq \bigcup_{u\in S}L(u)$.  We refer to the
vertices of $H$ as colours, and write $H^*$ for the graph obtained from $H$ by removing all edges inside
the sets $L(u)$ for all~$u\in V(G)$.  Then for $u\in V(G)$ and $x\in L(u)$ we write $\deg^*_{\sH}(x)$ for the
degree in $H^*$ of a colour $x$, which is the number of colours on lists of neighbours of $u$ that
conflict with $x$.

Writing $\blank$ for a special blank colour, and borrowing from~\cite{AIS19},
a \emph{partial $\sH$-colouring~$\sigma$} of $G$ is a function from $V(G)$ to $\{\blank\}\cup
  V(H)\cup E(G)$ such that the following hold.
\begin{enumerate}
  \item For all $u\in V(G)$, either $\sigma(u)=\blank$, $\sigma(u)\in L(u)$, or $\sigma(u)=e\in E(G)$
        with $u\in e$.
  \item Restricting the image of~$\sigma$ to~$V(H)$ gives an independent set; 
        $\sigma(V(G))\cap V(H) \in \cI(H)$.
\end{enumerate}
We write $\Omega$ for the set of such partial $\sH$-colourings $\sigma$, and omit the prefix
$\sH$ when it is clear.

Given $\sigma\in\Omega$ we have \emph{blank vertices} $\bla(\sigma) \coloneqq \{u\in V(G) :
  \sigma(u)=\blank\}$ for which $\sigma(u)=\blank$, \emph{coloured vertices} $\col(\sigma) \coloneqq \{u\in V(G) :
  \sigma(u)\in L(u)\}$ for which $\sigma(u)\in L(u)$, and \emph{uncoloured vertices} $\unc(\sigma) \coloneqq
  \{u\in V(G) : \sigma(u)\in E(G)\}$ for which $\sigma(u)$ is an edge of $G$ containing $u$.  We also write
$\ind(\sigma) \coloneqq \sigma(V(G))\cap V(H) = \sigma(\col(\sigma))$ for the independent set in $H$ signified
by $\sigma$.

We also require some notation for the cover that remains on the blank vertices.  We write $G_\sigma
  \coloneqq G[\bla(\sigma)]$ for the subgraph of $G$ induced by $\bla(\sigma)$, and then
write $\sH_\sigma=(L_\sigma, H_\sigma)$ for the cover of $G_\sigma$ obtained by setting $L_\sigma(u)
  \coloneqq L(u) \setminus N_H[\ind(\sigma)]$ for~$u\in\bla(\sigma)$ and
$H_\sigma \coloneqq H[L_\sigma(\bla(\sigma))]$.
Note that $\sH_\sigma$ is a list-cover of~$G_\sigma$.  When there are no
uncoloured vertices, these definitions agree with those of~\cite{Ber19,DKPS20main} if $\sigma$ is identified with $\ind(\sigma)$.

To interpret this notation, note that the list $L_\sigma(u)$ contains the colours in $L(u)$ that do
not conflict with colours of the vertices in~$\col(\sigma)$.  This means that if $I\in \cI(H_\sigma)$, then $\ind(\sigma)\cup I\in \cI(H)$.  In particular, if $\sigma$ has no
uncoloured vertices, and if we can find $I\in\cI(H_\sigma)$ of size $|V(G_\sigma)|$
then $\ind(\sigma)\cup I$ is an $\sH$-colouring of $G$.  This is exactly how the two-phase method
proceeds.

\subsubsection{The flaws}

We define a flaw for each vertex $u$, writing
\begin{align*}
  B_u \coloneqq \{ \sigma \in \Omega : u\notin\col(\sigma)\text{ and either } & |L_{\sigma}(u)|<\ell,
  \text{ or }\exists x\in L_{\sigma}(u)\text{ with }\deg^*_{\sH_{\sigma}}(x)>\ell/8\}.
\end{align*}
We also define a flaw for each pair $(u, e)$ where $u\in e\in E(G)$, which represents the fact that~$u$ is
an uncoloured vertex with $\sigma(u)=e$, writing $U_u^e \coloneqq \{ \sigma\in\Omega : \sigma(u) = e \}$.
We write $F_B \coloneqq \{B_u : u\in V(G)\}$ and $F_U \coloneqq \{ U_u^e : u\in e\in E(G) \}$ so that
$F\coloneqq F_B\cup F_U$ is the set of all flaws.  Note that $|F_B|=|V(G)|=n$, and $|F_U|= 2|E(G)|\le
  \Delta n$.

It is important that we address the flaws in a sensible order, and any fixed order that puts every flaw
of the form $B_u$ before any flaw of the form $U_v^e$ suffices.  To be explicit, consider an arbitrary
ordering of the vertices, and the induced lexicographic ordering on edges where $uv$ is ordered according
to the pair $(u,v)$ with $u < v$.  We address the flaws consistent with the order that puts flaws of the
form $B_u$ first, ordered according to $u$, and then puts the $U_u^e$ ordered according to $u$ and then
$e$.

\subsubsection{The actions}

To address the flaw $B_u$ at state $\sigma$ we execute the action $\fix(u,\sigma)$ defined in
\cref{alg:addressb} in terms of the procedures $\remove$ and $\sample$ guaranteed by the assumptions of
\cref{thm:main:alg}.

\begin{algorithm}[ht]\footnotesize%\small
  \caption{\label{alg:addressb}}
  \begin{algorithmic}[1]  % 1 puts a number on each line of the algorithm
    \Procedure{$\fix$}{$u, \sigma$}
    \parState{\Let{} $\sigma'$ be obtained from $\sigma$ by setting $\sigma'(v) \coloneqq \blank$ for all $v\in N_G(u)\setminus\unc(\sigma)$, and $\sigma'(v)\coloneqq\sigma(v)$ otherwise}
    \State\Let{} $F\coloneqq G_{\sigma'}[N_G(u)]$, $H' \coloneqq H_{\sigma'}[L(N_G(u))]$, and let $\hat H \coloneqq \remove(F,H')$
    \Statex
    \parState{\Let{} $\bJ_0 \coloneqq \sample(\hat H,\lam)$, and \Let{}
      $\tau_0(v) \coloneqq \sigma'(v)$ unless~$v\in N_G(u)$ and~$\bJ_0\cap L(v)=\{y\}$, in which case
      $\tau_0(v) \coloneqq y$}\label{line:sample} \Statex
    \State \Let{} $i\coloneqq0$\label{line:unc:start}
    \While{$\bJ_i\notin \cI(H')$}
    \parState{\Let{} $vw$ be the lowest-indexed edge in $G[N(u)]$ for which $\bJ_i$ spans an edge of $H'$ going from $L(v)$ to~$L(w)$, and suppose that $v<w$}
    \parState{\Let{} $\tau_{i+1}$ be obtained from $\tau_i$ by setting $\tau_{i+1}(v) \coloneqq vw$ (uncolouring $v$) and setting $\tau_{i+1}$ to agree with~$\tau_i$ elsewhere}
    \State \Let{} $\bJ_{i+1} \coloneqq \bJ_i\setminus L(v)$
    \State \Increment{} $i$
    \EndWhile
    \Statex
    \State \Let{} $\tau\coloneqq\tau_i$ and \Let{} $\bJ \coloneqq \bJ_i$\label{line:unc:end}
    \State \Return{} $\tau$
    \EndProcedure
  \end{algorithmic}
\end{algorithm}

\cref{alg:addressb} has three distinct parts.  The first is some setup in which we define a partial
colouring $\sigma'$ by reassigning coloured vertices in $N_G(u)$ to $\blank$, which gives us an induced
subgraph~$F$ of~$G[N_G(u)]$ and a list-cover $H'\coloneqq H_{\sigma'}[L(N_G(u))]$ of $F$.  We then use the procedure $\remove$ to remove some edges from $F$ and any corresponding
edges in $H'$, which results in a cover $\hat H$ of $F$ in the class $\cC$.  The second is sampling
an independent set $\bJ_0$ in $\hat H$ and a partial colouring-like object $\tau_0$ corresponding to
$\bJ_0$.  Here we say partial colouring-like because although~$\bJ_0$ is independent in $\hat H$, it is
not necessarily independent in $H_{\sigma'}$ so $\tau_0$ is not necessarily a valid partial
$\sH$-colouring of $G$.  In the third part we iterate over a loop variable $i$ starting at $0$ and
uncolour vertices in $\bJ_i$ that participate in edges of $H$, making a
sequence of corresponding~$\tau_i$ as we go.  When the loop exits $\bJ_i$ is
independent in $H_{\sigma'}$, and so the final $\tau_i$ is a valid partial
colouring.

To address the flaw $U_u^e$ at state $\sigma$ we simply resample $\sigma(u)$ from the hard-core model as
follows.  Let $\sigma'$ be obtained from $\sigma$ by letting $\sigma'(v) \coloneqq \sigma(v)$ for $v\ne u$, and letting $\sigma'(u)\coloneqq\blank$ with probability $1/(1+|L_\sigma(u)|\lam)$, and otherwise setting
$\sigma'(u)$ to be a uniform colour from $L_\sigma(u)$.

\subsubsection{Proving termination}

In the analysis of the algorithm we discuss the transition probabilities induced by these actions,
writing $\rho_f(\sigma,\tau)$ for the probability that the final state is $\tau$ when addressing the
flaw~$f$ at state $\sigma$.  Let $\tilde\mu$ be the probability measure on $\Omega$ given by
\[ \tilde\mu(\sigma) \coloneqq
  \frac{\lam^{|\col(\sigma)|+|\unc(\sigma)|}}{\sum_{\tau\in\Omega}\lam^{|\col(\tau)|+|\unc(\tau)|}}, \]
and write $\tilde Z_H(\lam)$ for the denominator.
We note that $\tilde\mu$ is inspired by the hard-core model on~$H$; its definition is motivated by the
fact that creating an uncoloured vertex costs weight~$\lam$ when sampling $\bJ_0$ in the procedure
$\fix(u,\sigma)$.

Given subsets of flaws $S,S'\subset F$, we say that $S'$ \emph{covers} $S$ if\footnote{For readers
  familiar with the definitions of~\cite{AIS19}, this is because flaws in $F_U$ are \emph{primary} while
  those in $F_B$ are not.  We avoid making precise what primary means here, see~\cite{AIS19} for details.}
\[ S'\cap F_U = S\cap F_U \;\text{ and }\; S' \cap F_B \supset S\cap F_B.  \]
For any flaw $f$ and subset $S\subset F$ of flaws, we define\footnote{Note that this definition differs
  from that given in~\cite{AIS19} as we have additionally restricted to mappings~$\sigma$ such that
  $\rho_f(\sigma,\tau) > 0$.  The other states contribute zero to the charges we need to analyse so this
  change simply means that we avoid explicitly having to exclude such states in our analysis of charges.}
\begin{align*}
  \In_f^S(\tau) \coloneqq \{\sigma\in f :{} & \text{$\rho_f(\sigma,\tau)>0$ and the set of flaws}
  \\& \text{introduced by the transition $\sigma\to\tau$ covers $S$}\},
\end{align*}
and we define the \emph{charge} $c^S(f)$ to be
\begin{equation}\label{eq:charge}
  c^S(f) \coloneqq
  \max_{\tau\in\Omega}\Bigg\{\sum_{\sigma\in\In_f^S(\tau)}\frac{\tilde\mu(\sigma)}{\tilde\mu(\tau)}\rho_f(\sigma,\tau)\Bigg\},
\end{equation}
which represents a kind of compatibility between the measure $\tilde\mu$ and the transitions~$\rho_f$
induced by the actions for flaws~$f\in F$.

We can now state the main theorem of Achlioptas et al.~\cite[Theorem~2.4]{AIS19}, specialised to
our setting, to show that \cref{alg:addressb} terminates quickly with high
probability.  To compare with the original, more general statement, we point out that we use their
Remark~2.4 and that we start our algorithm in
the all-$\blank$ partial colouring which has measure $1/\tilde Z_H(\lam)$ and such that the only flaws present in the initial state are of the form $B_u$ (since there are no uncoloured vertices).

\begin{theorem}[Achlioptas, Iliopoulos, and Sinclair~\cite{AIS19}]\label{thm:algorithmicLLLL}
    If there exist positive numbers~${(\psi_f)}_{f\in F}$ such that for every $f\in F$ we have
  \[ \zeta_f \coloneqq \frac{1}{\psi_f}\sum_{S\subset F}c^S(f)\prod_{g\in S}\psi_g < 1, \]
  then for $s\ge 0$ \cref{alg:addressb} reaches a flawless state in $(T_0 + s)/\delta$ steps with
  probability at least $1-2^{-s}$, where $\delta \coloneqq 1 - \max_{f\in F}\{\zeta_f\}$ and
  \[ T_0 \coloneqq \log_2\tilde Z_H(\lam) + \sum_{u\in V(G)}\log_2(1+\psi_{B_u})
    + \log_2\left(\max_{S\subset F}\frac{1}{\prod_{f\in S}\psi_f}\right).  \]
\end{theorem}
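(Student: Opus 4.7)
The plan is to adapt the measure-changing witness-sequence argument of Achlioptas--Iliopoulos, specialised to our hard-core-based $\tilde\mu$ and our two species of flaws. Fix $T\ge 1$ and bound the probability $p_T$ that \cref{alg:addressb} has not reached a flawless state after $T$ steps. Since the flaw to address at each step is determined by the current partial colouring (we always address the minimum one present), a length-$T$ trajectory $\sigma_0=\blank\to\sigma_1\to\cdots\to\sigma_T$ is specified by the random choices made by $\sample$ and $\remove$, and its probability factorises as $\prod_{i=1}^{T}\rho_{f_i}(\sigma_{i-1},\sigma_i)$, where $f_i$ is the minimum flaw present in $\sigma_{i-1}$.

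First I would multiply and divide by $\tilde\mu$ at each step, writing
\[p_T\;\le\;\sum_{W,\,\tau}\frac{\tilde\mu(\sigma_0)}{\tilde\mu(\tau)}\sum_{\sigma_1,\dots,\sigma_{T-1}}\prod_{i=1}^{T}\frac{\tilde\mu(\sigma_{i-1})\,\rho_{f_i}(\sigma_{i-1},\sigma_i)}{\tilde\mu(\sigma_i)},\]
the outer sum taken over witness sequences $W=(f_1,\dots,f_T)$ and terminal states $\tau=\sigma_T$. Grouping the inner sum according to the sequence $(S_1,\dots,S_T)\subset F^T$, where $S_i$ records the set of flaws newly introduced at step $i$, the definition of $\In^{S_i}_{f_i}(\sigma_i)$ together with the charge~\eqref{eq:charge} bounds each product by $\prod_i c^{S_i}(f_i)$. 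Then the hypothesis $\sum_{S}c^S(f)\prod_{g\in S}\psi_g\le\psi_f\zeta$ (with $\zeta\coloneqq 1-\delta$) can be applied step by step to absorb the sum over $S_i$ into a per-step factor $\psi_{f_i}\zeta$.

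The remaining $\prod_i\psi_{f_i}$ must be absorbed into the enumeration of witness sequences, and this is where the ``covers'' relation comes in. Primary flaws $U_u^e\in F_U$ are addressed at most once along any trajectory (after being addressed the corresponding vertex $u$ is no longer uncoloured) and so correspond bijectively with the $F_U$-flaws present in $\sigma_T$; secondary flaws $B_u\in F_B$ may be addressed many times but each is either present in $\sigma_T$ or was introduced during the run. Summing $\prod_i\psi_{f_i}$ over sequences $W$ compatible with a given $\sigma_T$ and a given set of initially-present flaws then collapses to $\prod_{u\in V(G)}(1+\psi_{B_u})\cdot\max_{S\subset F}\prod_{f\in S}\psi_f^{-1}$, where the first factor enumerates which $B_u$-flaws are ever addressed and the second accounts for flaws present in $\sigma_0$ that the run does not clear. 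Combining this with $\tilde\mu(\sigma_0)^{-1}=\tilde Z_H(\lam)$ yields $p_T\le 2^{T_0}\zeta^T$, and $p_T\le 2^{-s}$ follows from $T\ge(T_0+s)/\delta$ since $\log_2(1/\zeta)\ge\delta$.

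The main obstacle is the combinatorial bookkeeping in the witness enumeration: one must verify that the sets $S_1,\dots,S_T$, together with the initially-present flaws, decompose the family of flaws addressed during the run without overcounting, and that the ``covers'' relation is tailored precisely so that primary flaws enter the weight product at most once while secondary flaws contribute the clean $(1+\psi_{B_u})$ factor per vertex. Once this decomposition is in place the rest of the estimate is a one-line geometric sum in $\zeta^T$, and the remark in~\cite{AIS19} referenced above is what ensures the argument still goes through in our setting despite the $B_u$-flaws being non-primary.
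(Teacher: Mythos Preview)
The paper does not prove this theorem. It is quoted from~\cite{AIS19} as a black box (``We can now state the main theorem of Achlioptas et al.~\cite[Theorem~2.4]{AIS19}, specialised to our setting\ldots''), with the only adaptations being the use of their Remark~2.4 and the observation that the all-$\blank$ initial state has measure $1/\tilde Z_H(\lam)$ and contains only $B_u$-flaws. There is therefore no proof in the paper to compare your proposal against.

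As for the proposal itself: the overall shape---telescoping $\tilde\mu$ along a trajectory, bounding the inner sum by a product of charges, then absorbing the witness enumeration into the $\psi_f$---is indeed the strategy of~\cite{AIS19}. But one claim is incorrect: it is not true that a primary flaw $U_u^e$ is addressed at most once along a trajectory. After $U_u^e$ is addressed, the vertex $u$ is blank or coloured, but a later execution of $\fix(w,\cdot)$ with $u\in N(w)$ can uncolour $u$ again (setting $\sigma(u)=e'$ for some edge $e'$), reintroducing a $U_u^{e'}$-flaw. The role of the ``covers'' relation is not to ensure primary flaws are one-shot, but to force the set $S_i$ recorded at step $i$ to contain \emph{exactly} the primary flaws introduced (while merely containing the non-primary ones), so that in the witness-forest accounting each occurrence of a primary flaw is charged to a unique parent step. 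Your final paragraph acknowledges that this bookkeeping is the crux and defers it; that is fair, but the specific mechanism you describe for it is not the right one.
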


\subsubsection{Bounding charges}

By design the compatibility between $\tilde\mu$ and our algorithm is good enough for the following result
to control the charges $c^S(B_u)$.

\begin{lemma}\label{lem:key}
  Suppose that the conditions of \cref{thm:main:alg} hold, and let $\sH=(L,H)$ be a $q$-fold list-cover
  of $G$.  Then for any partial $\sH$-colouring~$\sigma$ of~$G$
  and any $u\in V(G)$ such that $u\notin\col(\sigma)$, the following holds.  If~$\tau$ is the random
  partial colouring of $G$ that results from the procedure $\fix(B_u,\sigma)$, then $\Pr(\tau \in B_u)
    \le 1/(4\Delta^3)$.
\end{lemma}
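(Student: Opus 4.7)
My plan is to decompose $\{\tau\in B_u\}$ into the two defining failure modes of $B_u$ for $\tau$, namely (i)~$|L_\tau(u)|<\ell$ and (ii)~some $x\in L_\tau(u)$ has $\deg^*_{\sH_\tau}(x)>\ell/8$, and bound each by $1/(8\Delta^3)$. Note that the algorithm never modifies $u$ and $u\notin\col(\sigma)$, so automatically $u\notin\col(\tau)$; and the coloured vertices of $\tau$ inside $N_G(u)$ are exactly those $v$ with $\bJ\cap L(v)\ne\emptyset$.

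The first reduction I would make is to pass from $(\bJ,\tau)$ to $(\bJ_0,\tau_0)$, where $\bJ_0\sim\mu_{\hat H,\lam}$. The while-loop deletes at most $t$ colours from $\bJ_0$---at most one per uncoloured vertex, itself at most one per edge of $R$---so $|\bJ_0\setminus\bJ|\le t\le\ell/40$. This shrinkage only enlarges $L_\tau(u)$ compared with $L_{\tau_0}(u)$, and it inflates each $\deg^*_{\sH_\tau}(x)$ by at most $t$, so it suffices to prove the same tail bounds for $\tau_0$ with degree threshold $\ell/8-t\ge\ell/10$. The second reduction replaces $\hat H$ by the list-cover $H'$ of $F$: since $|R|\le t$, a monotone coupling of $\mu_{\hat H,\lam}$ and $\mu_{H',\lam}$ perturbs the two statistics by $O(t)$, and the analysis then lives entirely on a genuine hard-core sample on $F$.

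Strong local $(\beta,\gam)$-occupancy applied to $F$ directly yields a lower bound on $\EE|L_{\tau_0}(u)|$ and an upper bound on $\EE|\{x\in L_{\tau_0}(u):\deg^*_{\sH_{\tau_0}}(x)>\ell/10\}|$. The calibration $q=r(\beta+\gam\Delta/r)$ with $r=\frac{\lam}{1+\lam}\cdot\frac{\ell}{1-\sqrt{7(\log\Delta)/\ell}}$ is designed precisely so that the former is comfortably above~$\ell$ and the latter comfortably below~$1$. I would upgrade these expectations to tail bounds of order $1/(8\Delta^3)$ by exposing $\bJ_0\cap L(v)$ one $v\in V(F)$ at a time along a canonical Glauber ordering for the hard-core law; each exposure changes both statistics by $O(1)$, so Azuma--Hoeffding delivers a Gaussian tail with exponent $\Omega(\log\Delta)$ once $Z_F(\lam)\ge 8\Delta^4$ is used to certify enough spread in $\bJ_0$. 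The sub-case $|V(F)|<\ell/8$ where this hypothesis is unavailable is immediate, since then at most $\ell/8$ colours of $L(u)$ can ever be killed. The slack $\sqrt{7(\log\Delta)/\ell}$ baked into $r$ is there to accommodate exactly this concentration step.

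The main obstacle I foresee is the joint accounting of the three perturbations---the edges removed by $\remove$, the shrinkage $\bJ_0\to\bJ$, and the preconditioning by $\sigma$ on $V(G)\setminus N_G[u]$ through the restricted list-cover on $F$---against the idealised hard-core sample on $F$. Writing every relevant quantity as a ratio of partition functions on $F$, so that the bound $Z_F(\lam)\ge 8\Delta^4$ is consumed at exactly one place, looks like the cleanest route.
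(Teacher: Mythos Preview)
Your concentration step does not go through. Exposing $\bJ_0\cap L(v)$ one vertex $v\in V(F)$ at a time gives a Doob martingale with $|V(F)|$ increments, and even granting bounded differences of size $O(1)$, Azuma yields only $\Pr\bigl(|L_{\tau_0}(u)|\le(1-\eta)m\bigr)\le\exp\bigl(-c(\eta m)^2/|V(F)|\bigr)$. Here $(\eta m)^2\asymp \ell\log\Delta$ while $|V(F)|$ can be as large as $\Delta$, so the exponent is $O\bigl((\ell\log\Delta)/\Delta\bigr)$; since the hypotheses only guarantee $\ell>7\log\Delta$, this is far from the required $\Omega(\log\Delta)$. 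The paper obtains the needed multiplicative Chernoff bound $e^{-\eta^2 m/2}$ (independent of $|V(F)|$) by a different route: for each $x\in L(u)$ it defines the \emph{layer} $\Lam_x=N_{H^*_{\sigma'}}(x)$, observes that because $\sH$ is a list-cover the layers are pairwise disjoint with no $H'$-edges between them, and shows that the indicators $\bY_x=\mathbb{1}[\bJ_0\cap\Lam_x\ne\varnothing]$ are negatively correlated. Strong local occupancy is applied not once to $F$ but to each random subgraph $\bF_0(x)=\hat H[\Lam_x\setminus N_{\hat H}(\bJ_0\setminus\Lam_x)]$, which is isomorphic to a subgraph of $G[N(u)]$; summing the resulting inequality over $x$ gives $\EE|L_\tau(u)|\ge m$.

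Two further points. Your ``monotone coupling'' reduction from $\hat H$ to $H'$ is neither justified (the hard-core model does not couple this simply under edge insertion) nor needed: the paper works on $\hat H$ throughout, and the only fact used is that each $\hat H[\Lam_x]$ is a subgraph of $G[N(u)]$, which is why \emph{strong} local occupancy (over all subgraphs, not just induced ones) is required. And the degree condition is not handled via expectation plus concentration at all; the paper shows directly that if $|\bU_0(x)|>\ell/8$ then $\Pr(x\in L_\tau(u))=1/Z_{\bF_0(x)}(\lam)\le 1/(8\Delta^4)\le 1/(8q\Delta^3)$ by the hypothesis $Z_F(\lam)\ge 8\Delta^4$, after which a union bound over $x\in L(u)$ suffices. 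The layer decomposition is the idea you are missing.
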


A similar version of this is key to the lopsided local lemma formulation of our methods~\cite{DKPS20main}. To
obtain an algorithm we essentially take advantage of some subtle extra strength from the fact
that the lemma holds for any $\sigma$ rather than only when $\sigma$ has no uncoloured vertices and $\ind(\sigma)$ is sampled from the hard-core model on
$H$.
The proof of \cref{lem:key} is given in
\cref{sec:key}.

The following results comprise the bounds on charges we need to apply \cref{thm:algorithmicLLLL}, and the
proofs are in \cref{sec:chargeboundsB,sec:chargeboundsU}.  Let $S(U_u^e) \coloneqq \{ B_v : v\in N^2[u]
  \}$ and $S(B_u) \coloneqq \{ B_v : v\in N^3[u] \} \cup \{ U_v^{e} : v\in e\in E(G[N(u)]) \}$.  We will
see that these are the only flaws that addressing $U_u^e$ and $B_u$ can introduce, respectively.

\begin{lemma}\label{lem:chargeboundsB}
  For every vertex $u\in V(G)$, the following hold.
  \begin{enumerate}
    \item
          If $S\not\subset S(B_u)$ then $c^S(B_u) = 0$.
    \item
          If $S$ contains more than $t$ flaws of the form $U_v^{vw}$ with $v,w\in N(u)$ then $c^S(B_u) = 0$.
    \item
          $\max_{S\subset F}\{ c^S(B_u) \} \le 1/(4\Delta^3)$.
  \end{enumerate}
\end{lemma}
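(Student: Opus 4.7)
The plan is to handle parts (i) and (ii) with a locality analysis of which flaws $\fix(u,\sigma)$ can introduce, and to prove part (iii) by unwinding the charge sum into a hard-core sampling probability already controlled by \cref{lem:key}.

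For (i) and (ii), note that $\fix(u,\sigma)$ only alters $\sigma$ on $N_G(u)$: it blanks the coloured vertices there, resamples colours via $\bJ_0$, and may uncolour some of them in the loop. Any newly introduced flaw $B_v$ therefore requires $L_\tau(v)$ or some $\deg^*_{\sH_\tau}(x)$ with $x\in L_\tau(v)$ to be affected by changes on $N_G(u)$, which propagates at most two edges and so forces $v\in N^3[u]$. The loop uncolours only vertices $v\in N_G(u)$ along edges $vw\in E(G[N_G(u)])$, and each such uncolouring is triggered by an edge of $H'$ spanned by $\bJ_i$ but missing from $\hat H$, so the total count is at most $|R|\le t$. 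Consequently the set of flaws introduced by any transition $\sigma\to\tau$ is contained in $S(B_u)$ and contains at most $t$ flaws of the form $U_v^{vw}$ with $v,w\in N_G(u)$. If $S$ violates either constraint, then $\In_{B_u}^S(\tau)=\emptyset$ for every $\tau$, and so $c^S(B_u)=0$.

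For part (iii), fix $\tau$ and an admissible $S$. Each preimage $\sigma\in\In_{B_u}^S(\tau)$ agrees with $\tau$ outside $N_G(u)$, has $\unc(\sigma)\cap N_G(u)$ determined by $\tau$ and $S\cap F_U$ (specifically, the uncoloured part of $\tau|_{N_G(u)}$ minus the newly uncoloured vertices prescribed by $S$), with matching edge labels, and its free coordinates are the blanks or colours on the remainder of $N_G(u)$. Since $\tilde\mu(\sigma)\propto\lam^{|\col(\sigma)|+|\unc(\sigma)|}$ and $\rho_{B_u}(\sigma,\tau)\propto\lam^{|\bJ_0|}/Z_{\hat H_{\sigma'}}(\lam)$, the ratio $\tilde\mu(\sigma)\rho_{B_u}(\sigma,\tau)/\tilde\mu(\tau)$ collapses to $\lam^{|\col(\sigma)\cap N_G(u)|}/Z_{\hat H_{\sigma'}}(\lam)$ up to factors depending only on $\tau$. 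Summing over the free coordinates assembles the hard-core partition function of the available list-cover on $N_G(u)$, and the surviving quantity is the probability, under hard-core sampling at fugacity~$\lam$, that the configuration on $N_G(u)$ leaves $u$ flawed. This is precisely the event bounded by $1/(4\Delta^3)$ in \cref{lem:key}.

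The main obstacle will be a clean execution of this identity when $R$ is nonempty and the uncolouring loop has triggered, since then preimages $\sigma$ can have strictly smaller uncoloured sets on $N_G(u)$ than $\tau$ and the sampling distribution $\mu_{\hat H_{\sigma'},\lam}$ differs from $\mu_{H',\lam}$. I would handle this by tracking each newly uncoloured vertex: its $\lam$-weight contribution to $\tilde\mu(\tau)$ exactly cancels the $\lam$ paid by $\bJ_0$ at the corresponding colour, while $Z_{H'}(\lam)\le Z_{\hat H}(\lam)$ absorbs the effect of removing edges. Since \cref{lem:key} is stated for arbitrary starting $\sigma$, it closes the bound irrespective of the choices made by $\remove$, completing part (iii).
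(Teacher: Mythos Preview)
Your proposal is correct and follows essentially the same route as the paper: locality of $\fix$ for (i)–(ii), and for (iii) the exact cancellation $\tilde\mu(\sigma)\rho_{B_u}(\sigma,\tau)/\tilde\mu(\tau)=\lam^{|\col(\sigma)\cap N(u)|}/Z_{\hat H}(\lam)$ followed by an interpretation of the charge sum as a hard-core probability bounded via \cref{lem:key}. Two small clarifications: the identity is exact (no ``up to factors depending only on $\tau$''), and the inequality $Z_{H'}(\lam)\le Z_{\hat H}(\lam)$ is unnecessary---the key point (which you have implicitly but should make explicit) is that $\hat H$ is determined by $(S,\tau)$ alone, so the denominator is constant across the sum and the sum is directly a probability under $\mu_{\hat H,\lam}$.
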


\begin{lemma}\label{lem:chargeboundsU}
  For every $u$ and $e$ such that $u\in e\in E(G)$, the following hold.
  \begin{enumerate}
    \item
          If $S\not\subset S(U_u^e)$ then $c^S(U_u^e)=0$.
    \item
          $\max_{S\subset F}\{ c^S(U_u^e) \} \le \lam/(1+\ell\lam)$.
  \end{enumerate}
\end{lemma}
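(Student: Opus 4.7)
The plan is to prove the two items by direct analysis of the action defining $U_u^e$-addressing, which resamples $\sigma(u)$ from a two-point mixture on $\{\blank\}\cup L_\sigma(u)$. For (i), I would first note that this action touches only coordinate $u$, so $\tau$ equals $\sigma$ on $V(G)\setminus\{u\}$. The coloured, blank, and uncoloured status of every $v\ne u$ is therefore preserved, which rules out introducing any $U_v^{e'}$ flaw. A flaw $B_v$ is controlled by $|L_\tau(v)|$ and by the degrees $\deg^*_{\sH_\tau}(x)$ for $x\in L_\tau(v)$; the list can change only when $v\in N[u]$, and the degrees only when $v$ has a neighbour in $N[u]$, i.e.\ $v\in N^2[u]$. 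Hence any introduced flaw lies in $\{B_v:v\in N^2[u]\}=S(U_u^e)$, and whenever $S\not\subset S(U_u^e)$ the covers condition fails for every $\sigma\in U_u^e$, giving $c^S(U_u^e)=0$.

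For (ii), the first step is to observe that given $\tau$, the conditions $\sigma\in U_u^e$ and $\rho_{U_u^e}(\sigma,\tau)>0$ determine $\sigma$ uniquely: $\sigma(u)=e$ and $\sigma(v)=\tau(v)$ for $v\ne u$. Thus the sum in \cref{eq:charge} has at most one summand. I then split on whether $\tau(u)=\blank$ or $\tau(u)=x\in L_\sigma(u)$, and in each subcase compute $\tilde\mu(\sigma)/\tilde\mu(\tau)$ using the change in $|\col|+|\unc|$, together with $\rho_{U_u^e}(\sigma,\tau)$ read directly from the definition of the action. Both subcases cleanly yield the single-term value $\lam/(1+|L_\sigma(u)|\lam)$.

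The step I expect to be the main subtlety is upgrading this to the announced bound $\lam/(1+\ell\lam)$, which requires $|L_\sigma(u)|\ge\ell$ in every summand that actually contributes to the charge. The intended reason is our convention that all $B$-flaws are addressed before any $U$-flaw: when $U_u^e$ is the current flaw at $\sigma$, $B_u$ must be absent from $\sigma$, so by the very definition of $B_u$ the list $L_\sigma(u)$ has size at least $\ell$. Once this is carefully incorporated into the charge bookkeeping, combining with the computation above completes the proof.
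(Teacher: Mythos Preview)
Your proposal is correct and follows essentially the same route as the paper: for (i) you use that the action changes only $\sigma(u)$, so no $U$-flaw can be introduced and only $B_v$ with $v\in N^2[u]$ can arise; for (ii) you observe that $\sigma$ is uniquely determined from $\tau$ by $\sigma(u)=e$, compute $\tilde\mu(\sigma)\rho_{U_u^e}(\sigma,\tau)/\tilde\mu(\tau)=\lam/(1+|L_\sigma(u)|\lam)$ in the two cases $\tau(u)=\blank$ and $\tau(u)\in L_\sigma(u)$, and then invoke the flaw ordering to conclude $\sigma\notin B_u$ and hence $|L_\sigma(u)|\ge\ell$. This is exactly the paper's argument, and your flagging of the ordering step as the key subtlety matches the one nontrivial point in the paper's proof.
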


\subsubsection{Finishing the proof}

We can now choose parameters~$\psi_f$ for~$f\in F$ such that the desired result follows from
\cref{thm:algorithmicLLLL}.  For this we take a positive real $\psi$ to be determined later and
set $\psi_f\coloneqq\psi/(4\Delta^3)$ for all $f\in F_B$ and~$\psi_f \coloneqq \psi\lam/(1+\ell\lam)$ for
all $f\in F_U$.

By Lemma~\ref{lem:chargeboundsB} we know that $c^S(B_u)=0$ unless all $B_v$ flaws in $S$ correspond to
vertices~$v$ in~$N^3[u]$, and there is a set $R\subset E(G[N(u)])$ of most $t$ edges such that if~$S$
contains a flaw of the form $U_v^{vw}$, then~$vw\in R$.  Since $|N^3[u]|\le\Delta^3$, we deduce from
\cref{lem:chargeboundsB} that for each~$u\in V(G)$,
setting
\[ \psi \coloneqq \frac{4(1+\ell\lam)}{1+\ell\lam +4t\lam}, \]
we have
\begin{align*}
  \frac{1}{\psi_{B_u}}\sum_{S\subset F}c^S(B_u)\prod_{g\in S}\psi_g
   & \le \frac{1}{\psi} \prod_{v\in N^3[u]}\left(1+\frac{\psi}{4\Delta^3}\right) \prod_{vw\in R}\left(1 + \frac{\psi\lam}{1+\ell\lam}\right)
  \\&\le \frac{1}{\psi}\exp\left(\frac{\psi}{4}\right)\exp\left(\frac{t\psi\lam}{1+\ell\lam}\right)
  = \frac{e}{4}\left(1+ \frac{4t\lam}{1+\ell\lam}\right) \le \frac{3}{4},
\end{align*}
because by assumption~$t\le\ell/40$, and hence~$4t\lam \le (1+\ell\lam)(3/e-1)$.

Similarly, by Lemma~\ref{lem:chargeboundsU} and the facts that $\Delta\ge 2$ and $|N^2[u]|\le 1+\Delta^2$, we have
for each pair~$(u,e)$ with~$u\in e\in E(G)$,
\begin{align*}
  \frac{1}{\psi_{U_u^e}}\sum_{S\subset F}c^S(U_u^e)\prod_{g\in S}\psi_g
   & \le \frac{1}{\psi} \prod_{v\in N^2[u]}\left(1+\frac{\psi}{4\Delta^3}\right)
  \le \frac{1}{\psi}\exp\left(\psi\frac{1+\Delta^2}{4\Delta^3}\right)                                  \\
   & = \frac{1}{4}\left(1+ \frac{4t\lam}{1+\ell\lam}\right)
  \exp\left(\frac{1+\Delta^2}{\Delta^3}\left(1-\frac{4t\lam}{1+\ell\lam+4t\lam}\right)\right)          \\
   & \le \frac{e^{(1+\Delta^2)/\Delta^3}}{4}\left(1+ \frac{4t\lam}{1+\ell\lam}\right) \le \frac{3}{4}.
\end{align*}
Hence we can apply \cref{thm:algorithmicLLLL} with parameters $\psi_{B_u}$ and $\psi_{U_u^e}$ such that
\begin{align*}
  \frac{e}{3\Delta^3}          & \le \psi_{B_u} = \frac{1}{\Delta^3}\cdot\frac{1+\ell\lam}{1+\ell\lam+4t\lam} \le \frac{1}{\Delta^3}, \\
  \frac{4e\lam}{3(1+\ell\lam)} & \le \psi_{U_u^e} = \frac{4\lam}{1+\ell\lam+4t\lam}\le \frac{4\lam}{1+\ell\lam},
\end{align*}
giving $\delta=1/4$ and
\[ T_0 \le
  \log_2\tilde Z_H(\lam) + n\log_2\left(1+\frac{1}{\Delta^3}\right) +
  n\log_2\left(\frac{3\Delta^3}{e}\right) + n\Delta\log_2\left(\frac{3(1+\ell\lam)}{4e\lam}\right).  \]
We have
\[ \tilde Z_H(\lam) = \sum_{\tau\in\Omega}\lam^{|\col(\tau)|+|\unc(\tau)|}
  \le {(1+2\Delta\lam)}^n, \]
because for each $u\in V(G)$ we can have either $\tau(u)=\blank$, which does not contribute to
the exponent of~$\lambda$, or $\tau(u)\in L(u)\cup \{e\in E(G) : u\in e\}$, which contributes~$1$.
There are at most $2\Delta$ choices in the latter case.  Then
\[ T_0 = O(n\log \Delta + n\log\lam + n\Delta\log(1/\lam) + n\Delta\log \ell).  \]
Therefore, $ T_0 = O(n\Delta\log(\ell) ) $ if~$\lam\ge1$
while $ T_0 = O(n\Delta\log( \ell/\lam ) )$ if~$\lam<1$.
Since $\delta=1/4$ and $T_0\ge n$, setting~$s\coloneqq n$ yields that the probability that the algorithm
finds a flawless partial colouring in at most $2T_0$ steps is at least $1-2^{-n}$.
Each step takes
time $O(T_r + T_s)$ because if we are addressing a flaw $f\in F_B$ then we execute action
$\fix(u,\sigma)$ which executes $\remove$ and $\sample$ once each in series, while the action to address a
flaw in $F_U$ is simply sampling from a distribution supported on at most $\Delta+1$ outcomes with
probabilities of the form $1/(1+y\lam)$ and $\lam/(1+y\lam)$ where $\ell\le y\le \Delta$.  This completes
the proof of \cref{lem:phase1:alg}, showing that the first, main phase of the algorithm works as desired.

%!TEX root = hcm_cs.tex
%%%%%%%%%%%%%%%%%%%%%%%%%%%%%%%%%%%%%%%%%%%%%%%%%%%%%%%%%%%%%%%%%%%%%%hcm_cs_application.tex

\section{Application to graphs with few fans}\label{sec:application}

To prove~\cref{thm:Ckapplication} we must establish suitable strong local occupancy in graphs with few copies of $F_k$, and we must give suitable implementations of $\remove$ and $\sample$.
We start with the strong local occupancy, which relies on a \emph{maximum average degree} parameter
\[ \mad(G) \coloneqq \max_{\substack{F \subset G \\ |V(F)| \ge 1}}\left\{ \frac{2|E(F)|}{|V(F)|} \right\}, \]
in neighbourhoods.
We also write~$W$ for the (upper real branch of the) \emph{Lambert $W$-function} that is the inverse of
$x\mapsto xe^x$ defined on $\interval[co]{-1}{\infty}$.  We use the basic property that as $x\to\infty$
we have $W(x) = (1-o(1))\log x$, see e.g.~\cite{CGH+96}. 
The following result is proved in \cref{sec:appproofs} (see also~\cite{DJKP18a,DKPS20main}).

\begin{lemma}\label{lem:mad}
  Let~$a\ge0$ and $G$ be a graph such that $\mad(G[N(u)]) \le a$ for each~$u\in V(G)$. Then the
  following statements hold for any~$\lambda>0$.
  \begin{enumerate}
    \item\label{itm:mad-hcm} For any~$d>0$, there exist $\beta,\gam>0$ such that the hard-core model on~$G$ at fugacity~$\lam$ has strong local $(\beta,\gam)$-occupancy and
          \begin{align*}
              \beta+\gam d = \frac{1+\lam}{\lam} \frac{d{(1+\lam)}^{a}\log(1+\lam)}{W(d{(1+\lam)}^{a}\log(1+\lam))}.
          \end{align*}
    \item\label{itm:mad-largeZ} For any vertex~$u\in V(G)$ and any subgraph~$F$ of $G[N(u)]$ on $y$ vertices we
          have
          \begin{align*}
            \log Z_F(\lam) \ge y \log(1+\lam) \left(1-\frac{a}{2}\log(1+\lam)\right).
          \end{align*}
  \end{enumerate}
\end{lemma}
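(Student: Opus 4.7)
For part~(ii), the plan is to invoke the FKG inequality in the product Bernoulli measure. Let $p\coloneqq\lambda/(1+\lambda)$, $L\coloneqq\log(1+\lambda)$, $y\coloneqq|V(F)|$, and let $\bX$ be a random subset of $V(F)$ with each vertex independently included with probability $p$. Then $Z_F(\lambda)=(1+\lambda)^y\Pr[\bX\in\cI(F)]$, and the event $\{\bX\in\cI(F)\}$ is the intersection over $uv\in E(F)$ of the decreasing events $\{\{u,v\}\not\subset\bX\}$, so FKG gives $Z_F(\lambda)\ge(1+\lambda)^y(1-p^2)^{e(F)}$. Since $e(F)\le ay/2$ (from $\mad(F)\le a$), the lemma reduces to the elementary inequality $\log(1-p^2)\ge -L^2$ for all $\lambda>0$. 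Writing $u\coloneqq e^{-L}\in(0,1]$ this becomes $h(L)\coloneqq\log(2u-u^2)+L^2\ge 0$, which follows from $h(0)=0$ together with $h'(L)=2L-2(1-u)/(2-u)\ge 0$; the latter reduces to the quadratic inequality $(2-u)^2\ge u$ on $u\in(0,1]$, i.e.~$(u-1)(u-4)\ge 0$.

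For part~(i), I would lower bound the occupancy $\alpha_F(\lambda)\coloneqq \lambda Z'_F(\lambda)/Z_F(\lambda)=\EE|\bI|$ vertex by vertex. Deleting $N_F(v)$ one vertex at a time from $F-v$ and using $Z_{F'-w}/Z_{F'}\ge 1/(1+\lambda)$ at each step yields $Z_{F-N[v]}/Z_{F-v}\ge(1+\lambda)^{-\deg_F(v)}$; since $\Pr[v\in\bI]=\lambda Z_{F-N[v]}/(Z_{F-v}+\lambda Z_{F-N[v]})$ is monotonically increasing in that ratio, we obtain $\Pr[v\in\bI]\ge p\,(1+\lambda)^{-\deg_F(v)}$. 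Summing over $v\in V(F)$ and applying Jensen's inequality to the convex map $x\mapsto(1+\lambda)^{-x}$, together with $\bar d(F)\le a$, gives $\alpha_F\ge yp\,e^{-aL}$. Combining with the trivial bound $Z_F(\lambda)\le(1+\lambda)^y$, strong local $(\beta,\gamma)$-occupancy is implied by
\[
 \beta e^{-yL}+\gamma y e^{-aL}\ge 1/p \quad\text{for every integer } y\ge 0.
\]

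Setting $W\coloneqq W(d(1+\lambda)^a L)$ (so $We^W=dLe^{aL}$) and defining
\[
 \beta\coloneqq\frac{1+\lambda}{\lambda}\cdot\frac{e^W}{1+W},\qquad \gamma\coloneqq\frac{1+\lambda}{\lambda}\cdot\frac{Le^{aL}}{1+W},
\]
the function $y\mapsto\beta e^{-yL}+\gamma y e^{-aL}$ is strictly convex and attains its unique minimum at $yL=W$, where a direct calculation gives the value $1/p$; hence the sufficient condition above, and therefore strong local $(\beta,\gamma)$-occupancy, holds. Finally,
\[
 \beta+\gamma d=\frac{1+\lambda}{\lambda}\cdot\frac{e^W+dLe^{aL}}{1+W}=\frac{1+\lambda}{\lambda}\cdot e^W=\frac{1+\lambda}{\lambda}\cdot\frac{d(1+\lambda)^aL}{W(d(1+\lambda)^aL)},
\]
using $dLe^{aL}=We^W$ and $e^W=dLe^{aL}/W$. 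The main effort lies in part~(i): the task of certifying strong local occupancy for every subgraph $F\subset G[N(u)]$ is reduced to a single one-parameter inequality in $y$ by the bounds on $Z_F$ and $\alpha_F$ above, after which the Lambert-$W$ minimizer emerges naturally and explains the stated formula.
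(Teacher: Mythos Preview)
Your argument for part~(i) is correct and essentially identical to the paper's: both derive the vertex-wise bound $\Pr[v\in\bI]\ge p(1+\lambda)^{-\deg_F(v)}$ (you via the deletion inequality $Z_{F'-w}/Z_{F'}\ge 1/(1+\lambda)$, the paper via the spatial Markov property), sum and apply Jensen to get $\alpha_F\ge py(1+\lambda)^{-a}$, combine with $Z_F\le(1+\lambda)^y$, and then minimise the resulting convex function of $y$ to land on the Lambert-$W$ expression. The only cosmetic difference is that the paper first sets $g(y^*)=1$ and solves for $\beta$ in terms of $\gamma$, then minimises $\beta+\gamma d$ over $\gamma$, whereas you write down the optimal $(\beta,\gamma)$ directly.

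For part~(ii) you take a genuinely different route. The paper obtains $\log Z_F(\lambda)\ge\frac{y}{a}\bigl(1-(1+\lambda)^{-a}\bigr)$ by integrating the occupancy lower bound from part~(i), and then relaxes this via $e^{-x}\le 1-x+x^2/2$ to the stated form. Your FKG/Harris argument, giving $Z_F(\lambda)\ge(1+\lambda)^y(1-p^2)^{e(F)}$, is a clean alternative that avoids integration and decouples part~(ii) from part~(i); the paper's route has the mild advantage of yielding the slightly sharper intermediate bound $\frac{y}{a}(1-(1+\lambda)^{-a})$ for free.

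There is, however, a gap in your final step for part~(ii). You claim that $h'(L)=2L-2(1-u)/(2-u)\ge 0$ ``reduces to the quadratic inequality $(2-u)^2\ge u$''. It does not: since $L=-\log u$, the condition $L\ge(1-u)/(2-u)$ is transcendental, not algebraic, and $(2-u)^2\ge u$ alone says nothing about $\log u$. The inequality you need is nonetheless true and easy: from $-\log u\ge 1-u$ for $u\in(0,1]$ and $2-u\ge 1$ one gets $L\ge 1-u\ge(1-u)/(2-u)$. With this correction your proof of part~(ii) goes through.
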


To apply the above result to $G$ as in \cref{thm:Ckapplication} we prove a suitable $\mad$ bound.

\begin{lemma}\label{lem:Ckmad}
  Let $u\in V(G)$ be contained in at most $t$ copies of the fan $F_k$. Then the average degree of any graph $F\subset G[N(u)]$ is at most $k-3 + \sqrt{2t}$.
\end{lemma}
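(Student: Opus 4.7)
The plan is to translate the hypothesis on $F_k$-copies through $u$ into a bound on the number of copies of the path $P_{k-1}$ in $F$, and then combine the Erd\H{o}s--Gallai extremal bound for paths with the trivial degree bound $d \le n-1$ via a geometric-mean-type step.

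First I would observe that every copy of $P_{k-1}$ contained in $F$, taken together with $u$ (which is adjacent to every vertex in $V(F) \subseteq N(u)$), yields a copy of $F_k$ in $G$ in which $u$ plays the role of the apex; distinct $P_{k-1}$-copies yield distinct $F_k$-copies because the two fans differ in their edge sets on $V(G) \setminus \{u\}$. Since $u$ lies in at most $t$ copies of $F_k$, the graph $F$ therefore contains at most $t$ copies of $P_{k-1}$.

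Next I would build a $P_{k-1}$-free subgraph $F' \subseteq F$ greedily: while a copy of $P_{k-1}$ remains, delete any one of its edges. Each deletion destroys at least one copy, so at most $t$ deletions suffice. Applying the Erd\H{o}s--Gallai theorem to the $P_{k-1}$-free graph $F'$ gives $|E(F')| \le (k-3)n/2$, where $n \coloneqq |V(F)|$, so $|E(F)| \le (k-3)n/2 + t$, whence the average degree $d$ of $F$ satisfies
\[ d \le k-3 + \frac{2t}{n}. \]

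Finally, after reducing to the case $d \ge k-3$ (otherwise there is nothing to prove, and in particular $n \ge k-2$), I would multiply this with the trivial bound $d - (k-3) \le n-k+2 \le n$ to obtain
\[ (d - (k-3))^2 \le \frac{2t}{n}\cdot (n-k+2) \le 2t, \]
and take square roots. There is no substantive obstacle: the Erd\H{o}s--Gallai theorem is used as a black box, and the only delicate point is recognising that the $\sqrt{2t}$ term arises precisely from this geometric-mean-style combination of the Erd\H{o}s--Gallai bound with the trivial $d \le n-1$.
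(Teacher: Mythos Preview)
Your proof is correct and follows essentially the same approach as the paper: both arguments observe that copies of $P_{k-1}$ in $F$ correspond to copies of $F_k$ through $u$, delete at most $t$ edges to destroy all such paths, apply Erd\H{o}s--Gallai to get $d \le k-3 + 2t/n$, and then combine this with the trivial bound $d \le n-1$. The only cosmetic difference is in the last step, where the paper bounds $\min\{n-1,\, k-3+2t/n\}$ by splitting into the cases $n \le \sqrt{2t}$ and $n > \sqrt{2t}$, while you multiply the two inequalities for $d-(k-3)$; both routes give the same bound.
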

\begin{proof}
  Let $F$ have $y$ vertices. We assert that the average degree of $F$ is at most
  \[ \min\left\{ y-1,\,k-3+ \frac{2t}{y}\right\} \le k-3 + \sqrt{2t}. \]
  The first bound is straightforward as there are at most $y-1$ possible neighbours for any vertex in
  $F$, and the second follows from a theorem of Erd\H{o}s and Gallai~\cite[Theorem~2.6]{EG59} that bounds the average degree of $P_{k-1}$-free graphs.
  By removing at most $t$ edges from $F$ we can remove all copies of $P_{k-1}$, and hence the resulting graph has at most $y(k-3)/2$ edges, which
  means $F$ has at most $y(k-3)/2+t$ edges.  The first expression in the assertion follows, and we
  consider the subcases $y\le \sqrt{2t}$ and $y>\sqrt{2t}$ to crudely bound from above the minimum.
\end{proof}

For the rest of this section let $G$ be as in \cref{thm:Ckapplication}.
Let~$\cF$ be the class of $P_{k-1}$-free graphs on at most $\Delta$ vertices and let $\cC$ be the class of list-covers of graphs in~$\cF$ with at most $\Delta$ colours in each list.
For any vertex $u\in V(G)$, subgraph~$F$ of~$G[N(u)]$, and list-cover $\sH=(L',H')$ of~$F$ with at most $\Delta$ colours in each list we can identify all copies of $P_{k-1}$ in $F$ in time $O(\Delta^k)$ by enumerating all ordered sets of $k-1$ vertices in $F$.
To implement $\remove$ we simply choose an arbitrary edge~$vw$ of each~$P_{k-1}$ found in this way and remove all edges from~$H'$ between~$L(v)$ and~$L(w)$.
For each of the at most $t\le \Delta^{2\eps}$ copies of~$P_{k-1}$ found, this removal takes time at most $\Delta^2$ so $\remove$ as in \cref{thm:main:alg} can be done in time $\Delta^{O(k)}$.

The following result which we prove in \cref{sec:appproofs} implies that we can implement $\sample$ on~$\cC$ in time ${(k\Delta)}^{O(k^3)}$.

\begin{restatable}{theorem}{sampling}\label{thm:sampling}
  Let $k\ge 3$ and $F$ be a $P_{k-1}$-free graph on $y$ vertices, and let $\hat \sH=(\hat L,\hat H)$ be a list-cover of $F$ with at most $q$ colours in each list.
  Then there is an absolute constant $c$ such that for any $\lam>0$ we can sample from the hard-core model on $\hat H$ in time $y^{3k^2}{(1+q)}^{k^3/2}{(ck)}^{k^3}$.
\end{restatable}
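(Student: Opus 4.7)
The plan is to exploit a shallow elimination forest of $F$. Since $F$ is $P_{k-1}$-free, any path in $F$ has at most $k-2$ vertices. Let $T$ be a depth-first search (DFS) forest of $F$: any root-to-leaf path in $T$ is a path in $F$ (using only tree edges), so $T$ has depth at most $k-2$. Moreover, since in DFS on an undirected graph every non-tree edge is a back edge, for every $uv \in E(F)$ one of $u,v$ is an ancestor of the other in $T$. Hence for each vertex $v$ the set $A(v)$ of ancestors of $v$ in $T$ (including $v$) has $|A(v)| \le k-2$, and all neighbours of $v$ in $F$ lie in $A(v) \cup V(T_v)$, where $T_v$ is the subtree of $T$ rooted at~$v$.

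Given $T$, I would fill a bottom-up dynamic programming table indexed by patterns on ancestors. For each vertex $v$ and each \emph{pattern} $\sigma$ assigning each proper ancestor $u$ of $v$ an element of $\{\blank\} \cup \hat L(u)$ (with $\blank$ encoding ``no colour selected from $\hat L(u)$''), define
\[ Z[v,\sigma] \coloneqq \sum_{I} \lambda^{|I|}, \]
where the sum is over all independent sets $I$ of $\hat H[\hat L(V(T_v))]$ such that no vertex of $I$ is adjacent in $\hat H$ to any $\sigma(u)\ne\blank$. Because an edge of $\hat H$ between $\hat L(u)$ and $\hat L(w)$ requires $uw \in E(F)$, and any such $u,w$ stand in ancestor--descendant relation in $T$, the colour chosen at $v$ is constrained only by $\sigma$, yielding the recursion
\[ Z[v,\sigma] = \sum_{c \in \{\blank\} \cup \hat L(v),\ c\text{ compatible with }\sigma} w(c) \prod_{u\,\text{child of }v} Z[u, \sigma \cup \{v \mapsto c\}], \]
with $w(\blank) = 1$ and $w(c) = \lambda$ for $c\ne\blank$. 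Since each $v$ has at most $k-3$ proper ancestors, the table has at most $y(1+q)^{k-3}$ entries and can be filled in time $O(y^2(1+q)^{k-2})$. The full partition function $Z_{\hat H}(\lambda)$ is the product over roots of $T$ of the corresponding entries at the empty pattern.

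Sampling then proceeds by the standard self-reducibility trick: traverse $T$ from roots downward, and at each vertex $v$, having already fixed the values $\sigma_v$ on proper ancestors, choose $c \in \{\blank\}\cup\hat L(v)$ compatible with $\sigma_v$ with probability proportional to $w(c) \prod_{u\,\text{child of }v} Z[u, \sigma_v \cup \{v \mapsto c\}]$. The chain rule shows that any final independent set $I$ is produced with probability exactly $\lambda^{|I|}/Z_{\hat H}(\lambda)$, the desired hard-core law. The sampling phase adds only $O(y(1+q))$ time given the table.

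No serious obstacle is expected; the main technical checkpoints are (a) that every edge of $\hat H$ respects the ancestor--descendant relation in $T$, so the recursion is valid; (b) that a disconnected $F$ is handled by using a DFS forest and the factorisation of the hard-core model across connected components; and (c) that the ``compatibility'' condition correctly accounts for the matching structure of $\hat H$ between $\hat L(u)$ and $\hat L(v)$ when $uv \in E(F)$. The resulting running time $O(y^2 (1+q)^{k-2})$ is far below the bound $y^{3k^2}(1+q)^{k^3/2}(ck)^{k^3}$ claimed in the theorem, so even a very crude analysis will suffice.
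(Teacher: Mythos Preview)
Your proposal is correct and takes a genuinely different route from the paper.

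The paper proceeds by induction on $k$: in each connected component it finds a longest path $P$ (on at most $k-2$ vertices), enumerates every independent set $J$ of $\hat H[\hat L(P)]$, and computes the marginal probabilities $\Pr(\bI\cap \hat L(P)=J)$ by recursively evaluating partition functions on the component with $P$ deleted; that remainder is $P_{k-2}$-free by the elementary fact that in a connected graph any two longest paths meet.  Sampling then draws $\bI\cap\hat L(P)$ from these marginals and recurses.  The repeated peeling of a longest path, together with the drop from $k$ to $k-1$ at each level, is exactly what produces the $k^2$ and $k^3$ exponents in the stated running time.

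Your argument instead makes a single structural observation---a DFS forest of a $P_{k-1}$-free graph has depth at most $k-2$ and every edge of $F$ joins an ancestor to a descendant---and then runs one dynamic programme over this shallow forest, indexed by assignments to the at most $k-3$ proper ancestors.  This avoids the nested recursion on $k$, yields the much stronger bound $\poly(y)\cdot(1+q)^{O(k)}$ in place of $y^{O(k^2)}(1+q)^{O(k^3)}$, and in fact works for arbitrary covers, not only list-covers.  The paper's approach has the minor expository advantage of resting on nothing beyond the ``two longest paths intersect'' lemma; your approach is tighter and more transparent, exploiting directly the bounded tree-depth of $P_{k-1}$-free graphs.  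Your checkpoints (a)--(c) are all in order: (a) holds because non-tree edges in an undirected DFS are back edges, (b) is handled automatically by taking a DFS forest and noting that $Z_{\hat H}$ factorises over components, and (c) is subsumed by your compatibility test since the matching condition only says which pairs in $\hat L(u)\times\hat L(v)$ are edges of $\hat H$.
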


We can now finish the proof of \cref{thm:Ckapplication}.

\begin{proof}[Proof of \cref{thm:Ckapplication}]
  Fix an arbitrary vertex~$u\in V(G)$ and an arbitrary subgraph~$F\subset G[N(u)]$.
  By \cref{lem:Ckmad} we have $\mad(F)\le a \eqqcolon k-3+\sqrt{2t}$.
  It is convenient to exclude the case $a=0$ in the argument, which is one place the assumption $t\ge1/2$
  comes in useful, giving $a\ge 1$.

  We want to apply \Cref{thm:main:alg} to conclude the proof.
    Above we defined a class $\cC$, $\remove$, and $\sample$ such that \cref{itm:main-alg-remove,itm:main-alg-sample} hold with $T_r = \Delta^{O(k)}$ and $T_s = {(k\Delta)}^{O(k^3)}$.
  We now define the parameters so that the remaining requirements of \Cref{thm:main:alg} are satisfied:
  those are~$t\le \ell/40$ (\cref{itm:main-alg-tvsell}) and the hypothesis of 
  \Cref{thm:main}. In particular, we need to show that there is strong local $(\beta,\gamma)$-occupancy
  for the hard-core model on~$G$ at fugacity~$\lam$ for some positive reals~$\beta,\gamma$ and~$\lam$.
  To this end we use \Cref{lem:mad}, which will also provide the requirement on~$Z_F(\lam)$.  Indeed,
  with $a=k-3+\sqrt{2t}$, given any~$\lam>0$ and $\ell>7\log\Delta$, and with~$r$ and~$q$ as in
  \cref{thm:main}, \cref{itm:mad-hcm} of \cref{lem:mad} gives us $\beta$ and $\gam$ such that the
  hard-core model on $G$ at fugacity $\lam$ has strong local $(\beta,\gam)$-occupancy with
  \[ q= r\left(\beta + \gam \frac{\Delta}{r}\right)
    = \frac{1+\lam}{\lam}
    \frac{\Delta{(1+\lam)}^{a}\log(1+\lam)}{W(\Delta{(1+\lam)}^{a}\log(1+\lam)/r)}.  \]
  We set
  \begin{align*} \log(1+\lam) & \coloneqq \frac{1}{a\log(\Delta/\sqrt t)},
                 & \text{and}                                 &
                 & \ell                                       & \coloneqq \frac{40a}{\log(\Delta/\sqrt t)}{\left(\frac{\Delta}{\sqrt
      t}\right)}^{\frac{\eps}{1+\eps}}.
  \end{align*}
  First, recall that $t \le \Delta^{\frac{2\eps}{1+2\eps}}/{(\log\Delta)}^2$, and hence $40\cdot t/\ell
    = O\left({(\log\Delta)}^{-(1+\varepsilon)/(1+2\varepsilon)}\right)$ tends to~$0$ as~$\Delta$ goes to
  infinity, so~$t$ is indeed at most~$\ell/40$ if~$\Delta_0$ is large enough.

  Second, we must show that
  \begin{equation}
    \frac{\ell}{8} \ge \frac{\log(8\Delta^4)}{\log(1+\lam)\left(1 - \frac{a}{2}\log(1+\lam)\right)}
    = \frac{a\log(\Delta/\sqrt t)\log(8\Delta^4)}{1-\frac{1}{2\log(\Delta/\sqrt t)}},\label{eq-ellovereight}
  \end{equation}
  so that \cref{itm:mad-largeZ} of \cref{lem:mad} ensures that $Z_G(\lam)\ge 8\Delta^4$ for any subgraph
  $F$ of any $G[N(u)]$ on at least $\ell/8$ vertices.  The right-hand side of~\eqref{eq-ellovereight} is
  $O\left(a \log(\Delta/\sqrt t)\log\Delta\right)$ as $\Delta$ and hence~$\Delta/\sqrt t$ tend to
  infinity.  This is less than $\ell$ for large enough $\Delta_0$ in terms of $\eps$ because the bound
    $t\le\Delta^{\frac{2\eps}{1+2\eps}}/{(\log\Delta)}^2$ gives
  \[ \log(\Delta/\sqrt t) \ge \frac{1+\eps}{1+2\eps}\log\Delta - \log\log\Delta
    = \Omega(\log\Delta), \]
    so that $a \log(\Delta/\sqrt t)\log\Delta = O( a {(\log(\Delta/\sqrt t))}^2) = o(\ell)$.

  As $\Delta\to\infty$ we now have $\lam =o(1)$, $a\lam=o(1)$, $7\log\Delta=o(\ell)$, $a= O(\sqrt t)$,
  and
  \[ q
    \sim \frac{\Delta}{\log(\Delta/\ell)}
    \sim \frac{\Delta}{\log\left({(\Delta/\sqrt t)}^{1/(1+\eps)}\log(\Delta/\sqrt t)\right)}.  \]
  To obtain \cref{thm:Ckapplication}, note that for large enough $\Delta_0$ and $\Delta\ge\Delta_0$,
  \[ (1+\eps)\frac{\Delta}{\log(\Delta/\sqrt t)} \ge q , \]
  so the result follows from the application of \cref{thm:main:alg}.
\end{proof}

%!TEX root = hcm_cs.tex
%%%%%%%%%%%%%%%%%%%%%%%%%%%%%%%%%%%%%%%%%%%%%%%%%%%%%%%%%%%%%%%%%%%%%%hcm_cs_conclusions.tex

\section{Concluding remarks}\label{sec:conclusions}

In this section we compare the algorithmic framework given here with the more combinatorial treatment in the companion paper~\cite{DKPS20main}.
While in the introduction we deliberately omitted all mention of stronger graph colouring concepts, here we describe some subtleties related to several strengthenings handled within our framework, including local and correspondence colouring.

First our methods immediately generalise to a `local' formulation where each vertex is given a list of size depending primarily on $\deg(v)$ instead of $\Delta$.
See \cite{DJKP18,DKPS20main} for details, and in particular an interesting minimum list size phenomenon that arises.
Secondly, when applying Theorem~\ref{thm:main:alg} with a trivial implementation of $\remove$ that does nothing, our proof gives two extra properties.
When $\remove$ is trivial Theorem~\ref{thm:main:alg} works for correspondence colouring; one can dispense with the stated list-cover assumption in this case.
Moreover, if $\remove$ is trivial and we do have a list-cover then Theorem~\ref{thm:main:alg} only requires weak local occupancy: a variant of local occupancy that applies only to \emph{induced subgraphs} $F$ of $G[N(u)]$ instead of to arbitrary subgraphs.

This means that our methods give an algorithmic version of Bernshteyn's strengthening of Theorem~\ref{thm:molloy} to correspondence colouring~\cite{Ber19}, as well as the generalisation to $F_k$-free graphs for any $k\ge 3$ (recall that $F_3$ is a triangle and $F_3, C_k\subset F_k$). 
In cases where we have local occupancy (e.g.~one of the many settings covered in~\cite{DKPS20main}), the primary bottleneck for an efficient algorithm is $\sample$, and one of our contributions here is a polynomial-time implementation of $\sample$ that suffices for Theorem~\ref{thm:Ckfree}, or equivalently the case $t<1$ in Theorem~\ref{thm:Ckapplication}.
Without seeking an efficient algorithm, raising $t$ presents no serious challenge as our quantitative local occupancy guarantee degrades smoothly and slowly as $t$ increases. This is one of the key breakthroughs of our framework, see~\cite[Sec.~5]{DKPS20main}.  By contrast, the efficient implementation of $\sample$ is extremely fragile and fails completely at $t=1$. With the power of $\remove$ we can handle $t$ from $1$ up to the stated bound in Theorem~\ref{thm:Ckapplication}, but the cost of this is twofold. 
First we require a list-assignment as the analysis of the uncolouring steps depends crucially upon this. Second we require strong instead of weak local occupancy because removing edges can create arbitrary subgraphs of neighbourhoods $G[N(u)]$ at the relevant sampling stage, even with a list-assignment.

Here summarises some key differences among these strengthened colourings as treated both here and in~\cite{DKPS20main}.
Consider the problem of colouring a graph $G$ of maximum degree $\Delta$ in which each vertex is contained in at most $t$ $F_k$'s with $(1+\eps)\Delta/\log(\Delta/\sqrt t)$ colours, or lists of this size. 
Let $\Delta_0$ be large enough in terms of $\eps$ and $\Delta\ge\Delta_0$, and let $c=c(\eps)$ be a large enough constant. 
For usual graph colouring our \emph{existence} methods work up to $t\le\Delta^2/c$, but for the largest $t$ this requires a reduction that does not apply to list colouring. 
For list colouring we require $t \le \Delta^2/(\log\Delta)^{2/\eps}$, and our \emph{algorithmic} methods work for $t\le \Delta^{\frac{2\eps}{1+2\eps}}/(\log\Delta)^2$. 
For correspondence colouring our existence methods behave the same as for list colouring, but for an efficient algorithm we need $t<1$. 
It would be very interesting to learn if these differences are essential, or whether refined techniques can unify these results.

\section{Acknowledgements}

We thank Alistair Sinclair, Fotis Iliopoulos, and Charlie
Carlson for insightful discussions.

\appendix
%!TEX root = hcm_cs.tex
%%%%%%%%%%%%%%%%%%%%%%%%%%%%%%%%%%%%%%%%%%%%%%%%%%%%%%%%%%%%%%%%%%%%%%hcm_cs_appendix.tex

\section{Tightness}\label{sec:tightness}

In this section we state a proposition indicating that for a large range of~$\lam$, the local
occupancy of \cref{lem:mad} is asymptotically best possible for $F_k$-free graphs.  First we note an extra (but foundational) component of our framework essentially originating in~\cite{DJPR18}, cf.~\cite{DKPS20main}.

\begin{theorem}\label{thm:localocc}
  Let $G$ be a graph of maximum degree $\Delta$ such that the hard-core model
  on $G$ at fugacity $\lam$ has (strong) local $(\beta,\gam)$-occupancy for
  some $\lam,\beta,\gam>0$. Then the occupancy fraction of $G$ at fugacity $\lam$ is at least $1/(\beta+\gam\Delta)$.
\end{theorem}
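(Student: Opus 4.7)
The plan is to sum an easily derived per-vertex consequence of local occupancy and combine it with a standard double-count on vertex--neighbourhood incidences.

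The starting point is the remark following \cref{def:localocc} that local $(\beta,\gam)$-occupancy implies
\[ \beta\Pr(u\in\bI) + \gam\EE\lvert\bI\cap N(u)\rvert \ge 1 \]
for every $u\in V(G)$, where $\bI\sim\mu_{G,\lam}$. I would verify this by conditioning on the external restriction $\bI\cap(V\setminus N[u])$, which determines the induced subgraph $F\subset G[N(u)]$ on those neighbours of $u$ not blocked by an external vertex of $\bI$. Conditional on this, the distribution on $N[u]$ is the hard-core model on the graph obtained from $F$ by joining $u$ to every vertex, and a direct computation gives $\Pr(u\in\bI\mid F) = \lam/(\lam+Z_F(\lam))$ and $\EE[\lvert\bI\cap N(u)\rvert\mid F] = \lam Z'_F(\lam)/(\lam+Z_F(\lam))$. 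Multiplying the local occupancy inequality by the nonnegative factor $Z_F(\lam)/(\lam+Z_F(\lam))$ and collecting terms yields
\[ \frac{\beta+1+\lam}{1+\lam}\Pr(u\in\bI\mid F) + \gam\EE[\lvert\bI\cap N(u)\rvert\mid F] \ge 1. \]
The $F=\emptyset$ instance of local occupancy forces $\beta\ge(1+\lam)/\lam$, equivalently $(\beta+1+\lam)/(1+\lam)\le\beta$, so the coefficient of the first term may be enlarged to $\beta$; averaging over the random~$F$ then delivers the displayed per-vertex bound.

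With this in hand, summing over $u\in V(G)$ gives $\beta\EE\lvert\bI\rvert + \gam\sum_u\EE\lvert\bI\cap N(u)\rvert \ge \lvert V(G)\rvert$. A standard exchange of summation rewrites the second sum as $\sum_{v\in V(G)}\deg(v)\Pr(v\in\bI)\le\Delta\EE\lvert\bI\rvert$, using the maximum degree hypothesis. Combining yields $(\beta+\gam\Delta)\EE\lvert\bI\rvert \ge \lvert V(G)\rvert$, from which the claimed lower bound on the occupancy fraction follows by dividing by $(\beta+\gam\Delta)\lvert V(G)\rvert$.

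The argument uses only local (not strong local) $(\beta,\gam)$-occupancy, consistent with the parenthetical in the statement. The substantive work is the spatial Markov conditioning used to derive the per-vertex inequality; the concluding double-count is routine. I anticipate no real obstacle.
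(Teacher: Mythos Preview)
Your argument is correct and is essentially the standard one: the paper itself does not prove \cref{thm:localocc} here (it is cited from~\cite{DJPR18,DKPS20main}), but the per-vertex inequality you derive via the spatial Markov property is precisely the one the paper asserts just after \cref{def:localocc}, and the summation plus the double-count $\sum_u\EE\lvert\bI\cap N(u)\rvert=\sum_v\deg(v)\Pr(v\in\bI)\le\Delta\,\EE\lvert\bI\rvert$ is the natural way to finish. Your observation that only induced subgraphs $F$ arise in the conditioning, so that weak local occupancy suffices, is also correct and consistent with the parenthetical in the statement.
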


\noindent
Supposing that~$k$ is
a fixed integer greater than~$2$ and $\lam=o(1)$ such that $\lam\sqrt t=o(1)$ as $\Delta\to\infty$ we
know from \cref{lem:mad,thm:localocc} that for any graph~$G$ with maximum degree $\Delta$ in which each vertex is
the centre of at most $t$ copies of $F_k$, the occupancy fraction of $G$ at fugacity $\lam$ is at least
\[ (1-o(1))\frac{W(\Delta\lam)}{\Delta}.  \]
But the occupancy fraction is monotone increasing in~$\lam$ (see~\cite{DJPR18})
so this lower bounds holds for all larger values of~$\lam$ too. 
This is asymptotically tight at least for $\lam\le\Delta^{1+o(1)}$, but in fact no improvement to this lower bound is known for larger values of~$\lam$ even for the case of triangle-free graphs, that is
when~$k=3$.  In addition, when $t=\Delta^{o(1)}$ we can take $\lam = 1/\log\Delta$ and obtain a lower bound on the occupancy fraction which is~$(1-o(1))\log\Delta/\Delta$, and hence an improvement to the leading order for any $\lam$ larger than~$1/\log\Delta$ would immediately lead to an improvement to
Shearer's result that every $n$-vertex triangle-free graph of maximum degree $\Delta$ contains
an independent set of size at least $(1-o(1))n\log\Delta/\Delta$.

\begin{proposition}[\cite{DJPR18}]\label{prop:occtightness}
  Given $\eps>0$ there is $\Delta_0$ such that for all fixed $\Delta>\Delta_0$ and $\lam\le \Delta^{1+o(1)}$,
  there is a $\Delta$-regular $F_k$-free graph $G$ with occupancy fraction at most
  $(1+\eps)W(\Delta\lam)/\Delta$.
\end{proposition}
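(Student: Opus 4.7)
The plan is to take $G$ to be a random $\Delta$-regular bipartite graph on a sufficiently large number of vertices. Since the fan $F_k$ contains a triangle for every $k\ge 3$ and bipartite graphs contain no odd cycles, any bipartite graph is automatically $F_k$-free, so no further modification of $G$ is needed. The task then reduces to bounding the occupancy fraction of $G$ at fugacity $\lam$ from above by $(1+\eps)W(\Delta\lam)/\Delta$.

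The first step is to compute the target value on the infinite $\Delta$-regular tree $T_\Delta$. Using the standard tree recursion for the hard-core model, the probability $q$ that a child of the root is occupied in the subtree rooted at that child satisfies $q=\lam(1-q)^{\Delta-1}/(1+\lam(1-q)^{\Delta-1})$, and the probability that the root is occupied is $p=\lam(1-q)^{\Delta}/(1+\lam(1-q)^{\Delta})$. In the regime $\lam=o(1)$ with $\Delta\lam\to\infty$, writing $q=x/\Delta$ and using $(1-q)^{\Delta-1}\sim e^{-x}$ gives $xe^x\sim\Delta\lam$ and hence $x\sim W(\Delta\lam)$, so that $p\sim W(\Delta\lam)/\Delta$ via the identity $e^{-W(\Delta\lam)}=W(\Delta\lam)/(\Delta\lam)$. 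By monotonicity of the fixed point of the recursion in $\lam$, the same asymptotic in fact persists throughout $\lam\le\Delta^{1+o(1)}$.

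The second step is to transfer this tree computation to the random bipartite graph. A random $\Delta$-regular bipartite graph is locally tree-like: for any fixed $R$, the probability that the ball of radius $R$ around a uniformly chosen vertex is a tree tends to $1$ as the number of vertices grows. Combining this with spatial correlation decay for the hard-core model on $T_\Delta$ in the relevant range of $\lam$, one deduces that $\EE|\bI|/|V(G)|$ converges to the root occupancy probability on $T_\Delta$. A standard concentration argument based on exposing the edges of the configuration model one at a time and applying Azuma's inequality then ensures that with positive probability the realised occupancy fraction lies within a factor $1+\eps$ of its expectation, producing an explicit $G$ with the required property.

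The main obstacle is supplying the correlation-decay input for values of $\lam$ above the tree uniqueness threshold $(\Delta-1)^{\Delta-1}/(\Delta-2)^{\Delta}$, since the classical analytic arguments for pointwise decay break down there. This is precisely the technical content carried out in~\cite{DJPR18}, where a direct first-moment computation on the partition function of the bipartite configuration model together with bipartiteness ruling out frustrated cycles bypasses the need for pointwise correlation decay; once $\EE|\bI|/|V(G)|$ is shown to match the tree value asymptotically, the remaining steps are routine.
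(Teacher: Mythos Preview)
The paper does not give its own proof of this proposition; it is simply quoted from~\cite{DJPR18}. Your outline, however, contains a genuine error in the choice of graph.

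Taking $G$ to be a random $\Delta$-regular \emph{bipartite} graph fails as soon as $\lam$ exceeds the tree-uniqueness threshold $\lam_c\approx e/\Delta$. Above $\lam_c$ the hard-core model on such a graph enters an ordered phase in which a typical independent set occupies almost all of one side of the bipartition; as $\lam\to\infty$ the occupancy fraction tends to $1/2$, and already at $\lam=1$ it is far larger than $W(\Delta\lam)/\Delta\sim(\log\Delta)/\Delta$. Bipartiteness does not ``rule out frustrated cycles'' in any helpful sense here --- quite the opposite, it is the global two-colouring that allows the ordered phase and inflates the occupancy well beyond the disordered tree value you computed. So your transfer step cannot go through for most of the range $\lam\le\Delta^{1+o(1)}$, and the description you attribute to~\cite{DJPR18} is not what that paper does.

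The construction in~\cite{DJPR18}, and the one this paper points to via \cref{prop:regular}, is the \emph{non-bipartite} random $\Delta$-regular graph $G_{n,\Delta}$. For large enough $\Delta$ it is triangle-free (hence $F_k$-free) with high probability, and its independence number is only $(2+o(1))n(\log\Delta)/\Delta$, which already caps the occupancy at every $\lam$ by roughly $2(\log\Delta)/\Delta$. The lack of a global bipartition is exactly what suppresses the ordered phase, and a first-moment calculation on the configuration-model partition function then pins the occupancy to the disordered tree fixed point $(1+o(1))W(\Delta\lam)/\Delta$ across the stated range. Your asymptotic identification of that fixed point is correct; only the host graph is wrong.
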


\section{Proofs for the main framework}\label{sec:mainproofs}

We require the following standard concentration inequality.
Given a probability space, the $\{0,1\}$-valued random variables~$\bX_1,\dotsc,\bX_n$ are
\emph{negatively correlated} if for each subset~$S$ of the set~$\{1,\dotsc,n\}$,
\[ \Pr\big(\bX_i=1, \forall i\in S\big)\le\prod_{i\in S}\Pr(\bX_i=1).  \]
\begin{lemma}[Panconesi and Srinivasan~\cite{PS97}]\label{lem:chernoff}
  Given a probability space, let~$\bX_1, \dotsc, \bX_n$ be $\{0,1\}$-valued random variables.
  Set~$\bX\coloneqq\sum_{i=1}^{n}\bX_i$ and~$\bY_i\coloneqq 1-\bX_i$ for each~$i\in\{1,\dotsc,n\}$.
  If the variables~$\bY_1,\dotsc,\bY_n$ are negatively correlated, then for any $\eta\in(0,1)$,
  \[ \Pr\big(\bX\le(1-\eta)\EE\bX\big) \le e^{-\eta^2\EE\bX/2}.  \]
\end{lemma}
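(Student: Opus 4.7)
The plan is to run the standard exponential moment argument for a one-sided Chernoff bound, checking that the negative correlation hypothesis on $\bY_1,\dotsc,\bY_n$ is enough to replace independence in the key step where the moment generating function is factored. Fix any $t>0$ and use Markov's inequality on the positive random variable $e^{-t\bX}$:
\[
  \Pr\bigl(\bX\le(1-\eta)\EE\bX\bigr)
  =\Pr\bigl(e^{-t\bX}\ge e^{-t(1-\eta)\EE\bX}\bigr)
  \le e^{t(1-\eta)\EE\bX}\,\EE\bigl[e^{-t\bX}\bigr].
\]

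The crucial step is to bound $\EE[e^{-t\bX}]=\EE\bigl[\prod_i e^{-t\bX_i}\bigr]$ by $\prod_i \EE[e^{-t\bX_i}]$. For this I would write $e^{-t\bX_i}=e^{-t}+(1-e^{-t})\bY_i$, a non-negative affine function of $\bY_i$ (here $1-e^{-t}>0$). Expanding the product,
\[
  \prod_{i=1}^n e^{-t\bX_i}
  =\sum_{S\subseteq\{1,\dotsc,n\}} e^{-t(n-|S|)}{(1-e^{-t})}^{|S|}\prod_{i\in S}\bY_i,
\]
so every coefficient is non-negative. Taking expectations and applying the negative correlation hypothesis termwise, $\EE\prod_{i\in S}\bY_i\le\prod_{i\in S}\EE\bY_i$, and then refactoring the sum yields exactly $\prod_i\EE[e^{-t\bX_i}]$. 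This is the only place the independence of the $\bX_i$ is usually used, and it is the only step whose correctness requires genuine thought; everything else below is mechanical.

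Having reduced to the product form, I would bound each factor by $\EE[e^{-t\bX_i}]=1-(1-e^{-t})\EE\bX_i\le e^{-(1-e^{-t})\EE\bX_i}$ via the inequality $1-x\le e^{-x}$, multiply to obtain $\EE[e^{-t\bX}]\le e^{-(1-e^{-t})\EE\bX}$, and substitute back to get
\[
  \Pr\bigl(\bX\le(1-\eta)\EE\bX\bigr)\le \exp\!\Bigl(\EE\bX\cdot\bigl(t(1-\eta)-(1-e^{-t})\bigr)\Bigr).
\]

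Finally, optimise over $t>0$: the exponent is minimised at $t=-\log(1-\eta)$, giving the bound $\exp\!\bigl(-\EE\bX\cdot((1-\eta)\log(1-\eta)+\eta)\bigr)$. To conclude, I would verify the elementary inequality $(1-\eta)\log(1-\eta)+\eta\ge \eta^2/2$ for $\eta\in(0,1)$: letting $f(\eta)$ denote the difference, $f(0)=0$, $f'(0)=0$, and $f''(\eta)=\eta/(1-\eta)\ge0$ on $[0,1)$, so $f\ge0$ throughout. Plugging this in produces the stated bound $e^{-\eta^2\EE\bX/2}$.
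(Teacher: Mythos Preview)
Your argument is correct and is the standard exponential-moment derivation of the lower-tail Chernoff bound under negative correlation. The paper itself does not prove this lemma at all: it is stated as a known concentration inequality and attributed to Panconesi and Srinivasan~\cite{PS97}, so there is nothing to compare your approach against. Your write-up would serve perfectly well as a self-contained proof if one were desired.
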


\subsection{Proof of Lemma~\ref{lem:key}}\label{sec:key}

Recall that for \cref{lem:key} we assume the hypotheses of \cref{thm:main:alg} (and hence also
\cref{thm:main}), and we consider the transition from an arbitrary partial $\sH$-colouring $\sigma$
of $G$ induced by the action $\fix(B_u,\sigma)$ for some $u\notin\col(\sigma)$.  In what follows, we
write $\tau$ for the random state arising from this transition, and $\Pr$ and $\EE$ represent
probabilities and expectations over the randomness in the action $\fix(B_u,\sigma)$, respectively.

We first argue that $|L_{\tau}(u)|$ is large and concentrated around its expectation, and second that for
each $x\in L(u)$ the probability that $x$ is in~$L_{\tau}(u)$ and has large degree in $H^*_{\tau}$ is
small.  For this we require some additional notation.  We reuse the notation of the procedure $\fix$,
writing~$\sigma'$ for the partial $\sH$-colouring of $G$ obtained from $\sigma$ by setting
$\sigma'(v)\coloneqq\blank$ for all vertices~$v\in N_G(u)$, and $\sigma'(v)\coloneqq\sigma(v)$ otherwise.  We
also have the definitions $H'\coloneqq H_{\sigma'}[L(N_G(u))]$ and $\hat{H}\coloneqq\remove(G_{\sigma'}[N_G(u)], H')$, and~$\bJ_0$ is an independent set in~$\hat H$ sampled from
the hard-core model at fugacity~$\lam$.  If we write $L'(v) \coloneqq L(v)\cap V(H')$ then $\sH$ and
$\sH'\coloneqq(L',H')$ are list-covers of $G$ and $G_{\sigma'}[N_G(u)]$,
respectively.  This provides the additional structure that each colour~$x\in V(H)$ is considered a copy of
some natural number $c$ in the list-assignment, and $xy$ is an edge of $H^*$ if and only if $x\in L(u)$
and $y\in L(v)$ are copies of the same natural number and~$uv\in E(G)$.

For each~$x\in L(u)$, let $\Lam_x$ be the \emph{layer} of $x$, given by $\Lam_x\coloneqq N_{H^*_{\sigma'}}(x)$.  This
consists of the colours in~$L_{\sigma'}(N_G(u))$ that conflict with $x$, and hence $\Lam_x$ consists of
every colour $y\in V(H')$ that is a copy of a fixed natural number, written~$c$.  So for distinct $x,y\in
  L(u)$ the layers $\Lam_x$ and~$\Lam_y$ are necessarily disjoint.  The fact that $\sH'$ is a list-cover of
$G[N_G(u)]$ means that every edge leaving~$\Lam_x$ in~$H'$ joins
two colours belonging to some set~$L(v)$ with~$v\in N_G(u)$, which means there are no edges between one layer $\Lam_x$ and another $\Lam_y$ and facilitates the analysis of each layer separately. 
The set $\bJ_0$ sampled in step~\ref{line:sample} of the procedure $\fix$ is not
necessarily independent in $H'$ and the uncolouring steps~\ref{line:unc:start}--\ref{line:unc:end}
yield the subset $\bJ\subset \bJ_0$ which is independent in $H'$.  For any $x\in L(u)$ we note that
$\bJ\setminus \Lam_x$ depends only on $\bJ_0\setminus \Lam_x$ and not on $\bJ_0\cap \Lam_x$ because of
the above property of edges leaving $\Lam_x$.  To see this, observe that every uncolouring step is due to
some tuple $(v_1,v_2,y_1,y_2)$ such that $\{v_1,v_2\}\subset N_G(u)$, $v_1v_2\in E(G)$, $y_1\in
  L(v_1)\cap \bJ_0$, $y_2\in L(v_2)\cap \bJ_0$, and $y_1y_2\in E(H')$.  But then $y_1$ and $y_2$ must be a copy
of the same natural number $c'$ and hence either both are in~$\Lam_x$ or neither one is in~$\Lam_x$.
This means the uncolouring steps due to edges of $H'$ inside $\Lam_x$ are independent of the other
uncolouring steps, and hence $\bJ\setminus \Lam_x$ depends only on $\bJ_0\setminus \Lam_x$.

We now show some key properties of how $\bJ_0$ is distributed on the sets
$\Lam_x$, and how this affects~$L_\tau(u)$ and the flaw $B_u$.  Let us write
$\bU_0(x)$ for the set of vertices obtained by revealing~${\bJ_0\setminus
\Lam_x}$ and taking those vertices in $\Lam_x$ that in the graph $\hat H$ are
not adjacent to any vertex of $\bJ_0\setminus \Lam_x$.  That is, $\bU_0(x)
\coloneqq \Lam_x \setminus N_{\hat H}(\bJ_0 \setminus \Lam_x)$.  Then write
$\bF_0(x) \coloneqq \hat H [\bU_0(x)]$.  By the spatial Markov property of the
hard-core model, $\bJ_0\cap \Lam_x$ is distributed according to the hard-core
model on the graph $\bF_0(x)$ at fugacity $\lam$.  It is important to observe
that $\hat H[\Lam_x]$ is isomorphic to a subgraph of $G[N(u)]$, as is
$\bF_0(x)$, so the assumptions of \cref{thm:main} give that $\bF_0(x)$ has
strong local $(\beta,\gam)$-occupancy.

The above definitions deal with the cover $\hat H$ in which we sample, but we must also deal
with the original cover $H'$.  To this end, we analogously write $\bU(x)$ for the set of vertices
obtained by revealing $\bJ\setminus \Lam_x$ and taking those vertices in $\Lam_x$ that in the graph $H'$
are not adjacent to any vertex of $\bJ\setminus \Lam_x$.  That is, $\bU(x) \coloneqq \Lam_x
  \setminus N_{H'}(\bJ \setminus \Lam_x)$.  We also write $\bF(x) \coloneqq H' [\bU(x)]$.
We now note the following facts that hold for all $x\in L(u)$.
\begin{description}
  \item[\namedlabel{itm:xinLtau}{Fact~1}]
        $x\in L_\tau(u)$ if and only if $\bJ_0\cap \Lam_x = \varnothing$.
  \item[\namedlabel{itm:Jsubset}{Fact~2}]
        $\bJ\subset \bJ_0$ and hence $\bU(x) \supset \bU_0(x)$.
  \item[\namedlabel{itm:degstarx}{Fact~3}]
        If $x\in L_\tau(u)$ then $\deg^*_{\sH_\tau}(x) = |\bU_0(x)|$.
\end{description}

To see~\ref{itm:xinLtau}, observe that $x\in L_\tau(u)$ if and only if for every $v\in\col(\tau)\cap
  N_G(u)$ we have $x\tau(v)\notin H$.  That is, if $x$ is a copy of the natural number $c$ then $x\in
  L_\tau(u)$ if and only if no neighbour of $u$ is coloured with a copy of $c$ under $\tau$.  This clearly
holds if $\bJ_0\cap \Lam_x = \varnothing$, and also if $\bJ_0\cap \Lam_x \ne \varnothing$ then at least
one neighbour of $u$ is coloured with a copy of $c$.  But in each uncolouring step we only uncolour one
end of a monochromatic edge, so that if $\bJ_0\cap \Lam_x \ne \varnothing$ we must also have
$\bJ\cap\Lam_x \ne \varnothing$.

For~\ref{itm:Jsubset} note that in each uncolouring step we remove a vertex from $\bJ_i$ to form
$\bJ_{i+1}$ and so $\bJ$ (which is the set after all uncolouring steps) is a subset of $\bJ_0$.

Finally, for~\ref{itm:degstarx} suppose that $x$ is a copy of the natural number $c$ and note that
$\deg^*_{\sH_\tau}(x)$ counts the number of colours in $L_{\tau}(N_G(u))$ that conflict with $x$, or
equivalently the number of neighbours $v\in N_G(u)$ that are coloured blank by $\tau$ and that have a
copy of $c$ present in their list~$L_\tau(v)$.  A colour in $L_{\tau}(N_G(u))$ that conflicts with $x$
must be present in $\bU(x)$, and in order that $x\in L_\tau(u)$ we must have $\bJ\cap \bU(x)=\varnothing$
by~\ref{itm:xinLtau} and~\ref{itm:Jsubset}.  This means that every colour~$y\in \bU(x)$ such that $y\in L(v)$ for some
$v\in\bla(\tau)$ contributes to $\deg^*_{\sH_\tau}(x)$.  Then it suffices to show that for $y\in L(v)$ we
have the following two properties
\begin{align}
   & y\in \bU_0(x) \Rightarrow v\in\bla(\tau),\label{eq:U0bla}                    \\
   & y\in\bU(x)\setminus\bU_0(x)\Rightarrow v\in\unc(\tau).\label{eq:UminusU0unc}
\end{align}
This crucially exploits the fact that $\sH'$ is a list-cover.  To prove~\eqref{eq:U0bla},
observe that if $y\in\bU_0(x)$ then no neighbour of $y$ in~$\hat H$ belongs to $\bJ_0\setminus\Lam_x$,
and since $x\in L_\tau(u)$ we have $\bJ_0\cap \Lam_x=\varnothing$ by~\ref{itm:xinLtau}.  Together
these facts mean that $v$ is neither coloured nor uncoloured, and hence belongs to~$\bla(\tau)$
as required.  To
prove~\eqref{eq:UminusU0unc}, suppose that $y\in \bU(x)\setminus \bU_0(x)$.  Then there must be some
$z\in (\bJ_0\setminus \Lam_x) \setminus (\bJ\setminus\Lam_x)$ such that $yz$ is an edge of~$H'$.  As
reported earlier, since $\sH'$ is a list-cover we know that every edge leaving $\Lam_x$ is
inside some set~$L(w)$, and as~$y\in L(v)$ we deduce that $z\in L(v)$.  But then the reason we have
$z\in (\bJ_0\setminus \Lam_x) \setminus (\bJ\setminus\Lam_x)$ is because $z$ had been coloured and was
then removed from some $\bJ_i$ in the creation of the next~$\bJ_{i+1}$, and so
necessarily~$v\in\unc(\tau)$ as required.

We are now ready to show that $|L_\tau(u)|$ is likely to be at least~$\ell$.

\begin{lemma}\label{lem:LIu}
  Writing
  \[ m \coloneqq
    \frac{1+\lam}{\beta\lam}(q-\gam\Delta) \qquad\text{and}\qquad \eta\coloneqq\sqrt{7(\log\Delta)/\ell}, \]
  we have $\Pr\big(|L_{\tau}(u)| \le (1-\eta)m\big) \le e^{-\eta^2m/2} \le 1/(8\Delta^3)$.
\end{lemma}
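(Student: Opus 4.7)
The plan is to apply the Chernoff-type bound \cref{lem:chernoff} to $|L_\tau(u)|=\sum_{x\in L(u)}\bX_x$, where $\bX_x\coloneqq\mathbb{1}[x\in L_\tau(u)]$. By~\ref{itm:xinLtau}, $\bX_x=\mathbb{1}[\bJ_0\cap\Lam_x=\varnothing]$, so I need a lower bound on $\EE|L_\tau(u)|$ together with the negative correlation of the complements $\bY_x\coloneqq 1-\bX_x$.

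For the expectation, I would condition on $\bJ_0\setminus\Lam_x$: by the spatial Markov property, $\bJ_0\cap\Lam_x$ is then distributed as the hard-core model at fugacity $\lam$ on $\bF_0(x)$. Since $\bF_0(x)$ is isomorphic to a subgraph of $G[N(u)]$, strong local $(\beta,\gam)$-occupancy yields
\[ \beta\frac{\lam}{1+\lam}\Pr\bigl(\bJ_0\cap\Lam_x=\varnothing\bigm|\bJ_0\setminus\Lam_x\bigr)+\gam\,\EE\bigl[|\bJ_0\cap\Lam_x|\bigm|\bJ_0\setminus\Lam_x\bigr]\ge 1. \]
Taking expectations and summing over $x\in L(u)$, while using that the $\Lam_x$ are pairwise disjoint and that the intra-list cliques of $\hat H$ force $|\bJ_0\cap L(v)|\le 1$ (so $\sum_x|\bJ_0\cap\Lam_x|\le|\bJ_0|\le\Delta$), gives $\beta\tfrac{\lam}{1+\lam}\EE|L_\tau(u)|+\gam\Delta\ge q$, which rearranges to $\EE|L_\tau(u)|\ge m$.

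I expect the main obstacle to be showing negative correlation of the $\bY_x$. These are indicators of increasing events supported on the pairwise-disjoint layers $\Lam_x$ in the hard-core measure on $\hat H$. My plan is a sequential-exposure argument: fix an arbitrary ordering $\Lam_{x_1},\ldots,\Lam_{x_q}$ of the layers and use the spatial Markov property iteratively to show that $\Pr\bigl(\bY_{x_i}=1\bigm|\bigcap_{j<i}\{\bY_{x_j}=1\}\bigr)\le\Pr(\bY_{x_i}=1)$, arguing that each additional conditioning on `$\Lam_{x_j}$ is hit' can only remove vertices from the graph governing the marginal on $\Lam_{x_i}$ (through the intra-list cliques in $\hat H$) and hence only decreases the probability of hitting $\Lam_{x_i}$.

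Finally, \cref{lem:chernoff} yields $\Pr\bigl(|L_\tau(u)|\le(1-\eta)\EE|L_\tau(u)|\bigr)\le\exp(-\eta^2\EE|L_\tau(u)|/2)$, and hence $\Pr\bigl(|L_\tau(u)|\le(1-\eta)m\bigr)\le\exp(-\eta^2 m/2)$ since $\EE|L_\tau(u)|\ge m$. A short calculation using $r=\tfrac{\lam}{1+\lam}\tfrac{\ell}{1-\eta}$ and $q-\gam\Delta=r\beta$ gives $m=\ell/(1-\eta)$, whence $\eta^2 m/2=7(\log\Delta)/(2(1-\eta))\ge(7/2)\log\Delta\ge\log(8\Delta^3)$, where the last step uses $\Delta\ge 2^6$. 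Therefore $\exp(-\eta^2 m/2)\le 1/(8\Delta^3)$, as required.
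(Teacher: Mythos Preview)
Your proposal is correct and follows the same approach as the paper: bound $\EE|L_\tau(u)|$ by summing strong local occupancy over the disjoint layers, establish negative correlation of the $\bY_x$, and apply \cref{lem:chernoff}; your numerical endgame via $m=\ell/(1-\eta)$ and $\Delta\ge 2^6$ matches the paper's. For the negative-correlation step the paper (following Bernshteyn~\cite{Ber19}) applies Bayes' rule to reduce instead to showing $\Pr\bigl(\bY_y=1\ \forall y\in Y \bigm| \bJ_0\cap\Lam_x=\varnothing\bigr)\ge\Pr\bigl(\bY_y=1\ \forall y\in Y\bigr)$; this is cleaner than your direct route because conditioning on $\bJ_0\cap\Lam_x=\varnothing$ is exactly passing to the hard-core model on $\hat H\setminus\Lam_x$, whereas conditioning on the event ``$\Lam_{x_j}$ is hit'' does not literally remove vertices from the graph governing the marginal on $\Lam_{x_i}$ as your sketch suggests---it only shifts the \emph{distribution} of $\bJ_0\setminus\Lam_{x_i}$, so your heuristic would need an additional monotonicity or coupling argument to make rigorous.
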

\begin{proof}

  We first note that $e^{-\eta^2m/2} \le 1/(8\Delta^3)$ holds because the parameter choices in
  \cref{thm:main} give
  \[ m = \frac{\ell}{1-\eta} \ge \ell = \frac{7\log\Delta}{\eta^2} \ge \frac{6\log(2\Delta)}{\eta^2}.  \]

  Now by~\ref{itm:xinLtau}, we have $x\in L_\tau(u)$ if and only if $\bJ_0\cap\Lam_x=\varnothing$.
  As reported earlier,~$\bJ_0\cap\Lam_x$ is distributed according to the hard-core model on a graph
  $\bF_0(x)$ that is isomorphic to a subgraph of $G[N_G(u)]$.  Then $\bF_0(x)$ has strong local
  $(\beta,\gam)$-occupancy and so
  \begin{equation}\label{eq:layer:localocc}
    \beta \frac{\lam}{1+\lam}\Pr(x\in L_{\tau}(u))+ \gam\EE|\bJ_0\cap \Lam_x| \ge 1,
  \end{equation}
  because elementary calculations with the hard-core model (see e.g.~\cite{DJPR18}) now give
  \[ \Pr(\bJ_0\cap\Lam_x=\varnothing) = \frac{1}{Z_{\bF_0(x)}(\lam)}
    \qquad\text{and}\qquad
    \EE|\bJ_0\cap \Lam_x| = \frac{\lam Z'_{\bF_0(x)}(\lam)}{Z_{\bF_0(x)}(\lam)}.  \]
  We sum~\eqref{eq:layer:localocc} over all $q$ colours $x\in L(u)$ to obtain
  \[ q
    \le \beta \frac{\lam}{1+\lam}\EE|L_{\tau}(u)| + \gam\sum_{x\in L(u)}\EE|\bJ_0\cap \Lam_x|
    \le \beta \frac{\lam}{1+\lam}\EE|L_{\tau}(u)| + \gam\Delta,  \]
  where the last inequality holds because $\sum_{x\in L(u)}\EE|\bJ_0\cap \Lam_x|=\EE\sum_{x\in
      L(u)}|\bJ_0\cap \Lam_x|$ and every neighbour of~$u$ contributes at most~$1$ to the sum
  as $|\bJ_0\cap L(v)|\le1$ for every vertex~$v$.
  Rearranging immediately yields $\EE|L_{\tau}(u)|\ge m$, and the result will follow from an
  application of \cref{lem:chernoff}.

  For this application, note that $\EE|L_{\tau}(u)|$ is a sum over $x\in L(u)$ of the indicator
  variables~$\bX_x$ for the events~$\{ \bJ_0\cap \Lam_x = \varnothing \}$.  We can apply
  \cref{lem:chernoff} if we show that the random variables~$\bY_x\coloneqq 1-\bX_x$ are negatively
  correlated.  This correlation was shown formally by Bernshteyn~\cite{Ber19} in the triangle-free
  case, and is somewhat intuitive here.  Consider the random set~$\bJ_0$.
  Given~$x\in L(u)$, if $\bJ_0 \cap \Lam_x = \varnothing$ then no colours conflicting with~$x$ are chosen
  for vertices in~$N(u)$.  This makes other colours more likely to be chosen, such as those which
  conflict with $x'\in L(u)\setminus\{x\}$.  We repeat Bernshteyn's argument for completeness.

  It is enough to show that for all~$x\in L(u)$ and~$Y\subset L(u)\setminus\{x\}$ we have
  \[ \Pr\big(x\notin L_{\tau}(u) \bigm| Y \cap L_{\tau}(u) = \varnothing\big) \le
    \Pr(x\not\in L_{\tau}(u)), \]
  which is equivalent to
  \[ \Pr\big(Y \cap L_{\tau}(u) = \varnothing \bigm| x\in L_{\tau}(u) \big)
    \ge \Pr(Y \cap L_{\tau}(u)=\varnothing), \]
  which we can write (using~\ref{itm:xinLtau}) as
  \begin{align*}
    \Pr\big(\bJ_0\cap \Lam_y \neq \varnothing \text{ for all } y\in Y \bigm| \bJ_0\cap{} & \Lam_x =\varnothing\big)
    \\&\ge \Pr\big(\bJ_0\cap \Lam_y \neq \varnothing \text{ for all } y\in Y\big).
  \end{align*}
  This holds because the layers $\Lam_z$ for $z\in L(u)$ are pairwise disjoint.
\end{proof}

We now prove a result designed to handle the degree condition in the flaw~$B_u$.

\begin{lemma}\label{lem:deg}
  For any~$x\in L(u)$, writing
  \[ D_x \coloneqq \big\{x\in L_{\tau}(u) \text{ and } \deg^*_{\sH_{\tau}}(x) > \ell/8\big\}, \]
  we have $\Pr(D_x) \le 1/(8q\cdot\Delta^3)$.
\end{lemma}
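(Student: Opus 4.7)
The plan is to use Facts~\ref{itm:xinLtau} and~\ref{itm:degstarx} to rewrite $D_x$ purely in terms of the sample $\bJ_0$, and then exploit the spatial Markov property of the hard-core model on $\hat H$ to reduce the required bound to an application of the partition-function hypothesis of \cref{thm:main}.

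By Fact~\ref{itm:xinLtau} the event $\{x\in L_\tau(u)\}$ equals $\{\bJ_0\cap\Lam_x=\varnothing\}$, and by Fact~\ref{itm:degstarx} we have $\deg^*_{\sH_\tau}(x)=|\bU_0(x)|$ on this event; hence
\[
D_x=\bigl\{\bJ_0\cap\Lam_x=\varnothing\bigr\}\cap\bigl\{|\bU_0(x)|>\ell/8\bigr\}.
\]
Since $\bU_0(x)=\Lam_x\setminus N_{\hat H}(\bJ_0\setminus\Lam_x)$ is determined by $\bJ_0\setminus\Lam_x$, the second event is measurable with respect to $\bJ_0\setminus\Lam_x$. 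The spatial Markov property of the hard-core model on $\hat H$ (already used in the proof of \cref{lem:LIu}) then says that, conditional on $\bJ_0\setminus\Lam_x$, the restriction $\bJ_0\cap\Lam_x$ has the hard-core distribution on $\bF_0(x)=\hat H[\bU_0(x)]$ at fugacity $\lam$, so
\[
\Pr\!\bigl(\bJ_0\cap\Lam_x=\varnothing\bigm|\bJ_0\setminus\Lam_x\bigr)=\frac{1}{Z_{\bF_0(x)}(\lam)}.
\]

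The last step is to argue that, by the matching property of covers together with the fact that $\remove$ only deletes edges of $H'$, the graph $\hat H[\Lam_x]$ (and hence $\bF_0(x)$) is isomorphic to a subgraph of $G[N(u)]$ on $|\bU_0(x)|$ vertices. On the event $\{|\bU_0(x)|>\ell/8\}$ the partition-function hypothesis of \cref{thm:main} therefore yields $Z_{\bF_0(x)}(\lam)\ge 8\Delta^4$; taking expectations over $\bJ_0\setminus\Lam_x$ produces $\Pr(D_x)\le 1/(8\Delta^4)$, which is at most $1/(8q\Delta^3)$ since we may freely assume $q\le\Delta$ (otherwise a greedy argument immediately produces an $\sH$-colouring). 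The only delicate point is the identification of $\bF_0(x)$ with a subgraph of $G[N(u)]$ so that the hypothesis of \cref{thm:main} becomes applicable; once that is in place the bound follows from the spatial Markov decoupling and routine hard-core book-keeping.
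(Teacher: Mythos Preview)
Your proposal is correct and follows essentially the same approach as the paper: both use Facts~1 and~3 to reduce $D_x$ to the event $\{\bJ_0\cap\Lam_x=\varnothing\}\cap\{|\bU_0(x)|>\ell/8\}$, invoke the spatial Markov property to get $\Pr(\bJ_0\cap\Lam_x=\varnothing\mid\bJ_0\setminus\Lam_x)=1/Z_{\bF_0(x)}(\lam)$, and then apply the partition-function hypothesis of \cref{thm:main} together with $q\le\Delta$. Your write-up is arguably more explicit about the conditioning than the paper's, and the identification of $\bF_0(x)$ with a subgraph of $G[N(u)]$ that you flag as the ``delicate point'' is in fact already established in the paper's setup preceding the lemma.
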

\begin{proof}
  By~\ref{itm:degstarx} we know that if $x\in L_\tau(u)$ then $\deg^*_{\sH_\tau}(x) = |\bU_0(x)|$, so it
  suffices to show whenever $|\bU_0(x)|>\ell/8$ that
  \[ \Pr(x\in L_{\tau}(u)) \le \frac{1}{8q\Delta^3}.  \]
  As already reported, by~\ref{itm:xinLtau} (and an elementary property of the hard-core model) we
  have $\Pr(x\in L_{\tau}(u)) = 1/Z_{\bF_0(x)}(\lam)$.  We recall that $\bF_0(\lam)$ is isomorphic to a
  subgraph of~$G[N(u)]$, so that the upper bound on $\Pr(x\in L_{\tau}(u))$ follows directly from the
  assumptions on $\ell$ and~$Z_F(\lam)$ for $F\subset G[N(u)]$ stated in
  \cref{thm:main}.  There we assume that for all such~$F$ on at least $\ell/8$ vertices we have
  $Z_F(\lam)\ge 8\Delta^4$. Consequently, noticing that without loss of generality we have $q\le\Delta$
  (for otherwise a greedy argument finds any $q$-list colouring of $G$), we infer that
  \[ \Pr(x\in L_{\tau}(u)\bigm| |\bU_0(x)|>\ell)\le\frac{1}{8\Delta^4}\le\frac{1}{8q\Delta^3}.\]
  The result follows.
\end{proof}

The combination of \cref{lem:LIu,lem:deg} completes the proof of~\cref{lem:key}.

\subsection{Proof of Lemma~\ref{lem:chargeboundsB}}\label{sec:chargeboundsB}
\begin{proof}[Proof of \cref{lem:chargeboundsB}]
  Addressing $B_u$ by executing $\fix(u,\sigma)$ only modifies $\sigma$ on $N(u)$ and hence can only
  introduce flaws $B_v$ for $v\in N^3[u]$, or $U_v^{vw}$ for $v,w\in N(u)$ and $vw\in E(G[N(u)])$.

  In the procedure $\fix$ we can only introduce at most $t$ flaws of the form $U_v^{vw}$ since the
  procedure $\remove$ removes at most $t$ edges from $F$ that can be monochromatic, and each
  monochromatic edge leads to at most one uncolouring.

  If addressing $B_u$ results in $\tau$ with positive probability then the previous state $\sigma$ must
  agree with $\tau$ outside of $N(u)$, and inside $N(u)$ we must have $\unc(\tau)\supset\unc(\sigma)$ as
  no uncoloured vertices are coloured by the procedure $\fix$.  In fact, by the definition of covering
  for sets of flaws, when bounding $c^S(B_u)$ we can restrict our attention to triples
  $(S,\sigma,\tau)$ such that $\tau$ and $\sigma$ agree outside $N(u)$ and
  \[ \unc(\tau) = \unc(\sigma) \cup \{ v\in N(u) : U_v^e\in S \text{ for some $e\in E(G)$}\}.  \]
  That is, we can restrict attention to triples $(S,\sigma,\tau)$ where $S$ carries the information
  necessary to deduce $\unc(\sigma)$ from $\unc(\tau)$.  To this end, write
  \[ \nu(S) \coloneqq \{ v\in N(u) : \text{$U_v^e\in S$ for some $e\in E(G)$}\}, \]
  so that $\nu(S)$ is the set of uncoloured vertices present in $\tau$ but not $\sigma$ for any $\sigma\in\In^S_{B_u}(\tau)$.
  We now assert that for such triples $(S,\sigma,\tau)$ we have
  \[ \rho_{B_u}(\sigma,\tau) = \frac{\lam^{|\col(\tau)\cap N(u)| + |\nu(S)|}}{Z_{\hat{H}}(\lam)}, \]
  where $\hat{H}$ is the graph $\hat{H}$ appearing in $\fix(u,\sigma)$; repeated here for convenience,
  $\hat{H}$ is constructed as follows.  Let $\sigma'$ be obtained from $\sigma$ by setting $\sigma'(v)
    \coloneqq \blank$ for all $v\in N(u)\setminus\unc(\sigma)$, and $\sigma'(v)\coloneqq\sigma(v)$
  otherwise, and then let $F\coloneqq G_{\sigma'}[N_G(u)]$ and $H'\coloneqq H_{\sigma'}[L(N_G(u))]$.
  Then $\hat{H}\coloneqq\remove(F,H')$.  Note that~$\hat{H}$ does not depend on $\sigma$ in the sense
  that when $(S,\sigma,\tau)$ are as above we can construct~$\hat{H}$ from~$S$ and~$\tau$ alone.  Here
  we crucially exploit the definition of covering for sets of flaws.  The key point is that $\sigma'$
  as above can be constructed from $S$ and $\tau$ because the uncoloured vertices of $\sigma$ are given
  by $\unc(\sigma)=\unc(\tau)\setminus\nu(S)$.

  The assertion now follows from the definition of $\fix(u,\sigma)$ because from $(S,\tau)$ we can
  recover the size of the independent set $\bJ$ sampled in the procedure when $\tau=\fix(u,\sigma)$.
    More accurately, we can determine $\col(\bJ)$, which must be the disjoint union
    of~${\col(\tau)\cap N(u)}$ and the set $\nu(S)$ of vertices uncoloured during the execution
    of~$\fix(u,\sigma)$.  Given~$\col(\bJ)$ the uncolouring steps are deterministic so we have the
    required expression for~$\rho_{B_u}(\sigma,\tau)$ by the definition of the hard-core model.  We
    also see that
  \[ \frac{\tilde\mu(\sigma)}{\tilde\mu(\tau)} = \frac{\lam^{|\col(\sigma)\cap N(u)|+|\unc(\sigma)\cap
        N(u)|}}{\lam^{|\col(\tau)\cap N(u)|+|\unc(\tau)\cap N(u)|}} = \frac{\lam^{|\col(\sigma)\cap
        N(u)|}}{\lam^{|\col(\tau)\cap N(u)|+|\nu(S)|}}, \]
  which holds by the definition of $\nu$ and because $\sigma$ and $\tau$ only differ in $N(u)$.  Then
  we have shown that given $S$ and $\tau$, for any $\sigma\in\In^S_{B_u}(\tau)$ we have
  \begin{equation}\label{eq:chargeterm}
    \frac{\tilde\mu(\sigma)}{\tilde\mu(\tau)}\rho_{B_u}(\sigma,\tau)
    = \frac{\lam^{|\col(\sigma)\cap N(u)|}}{Z_{\hat{H}}(\lam)},
  \end{equation}
  where $\hat{H}$ can be obtained from $S$ and $\tau$ alone.

  Turning to the charge $c^S(B_u)$, we have
  \begin{align}
    c^S(B_u)
     & =  \max_{\tau\in\Omega}\left\{\sum_{\sigma\in\In_{B_u}^S(\tau)}\frac{\tilde\mu(\sigma)}{\tilde\mu(\tau)}\rho_{B_u}(\sigma,\tau)\right\}
    =\max_{\substack{\tau\in\Omega : \\ u\notin \col(\tau)}}\left\{\sum_{\sigma\in\In_{B_u}^S(\tau)}\frac{\lam^{|\col(\sigma)\cap N(u)|}}{Z_{\hat{H}}(\lam)}\right\},\label{eq:cBubound}
  \end{align}
  where $\hat{H}$ can be computed from $S$ and $\tau$ as above.  The restriction to $\tau$ such that
  $u\notin\col(\tau)$ is valid because for any $\tau\in\Omega$ with $u\in\col(\tau)$ we have
  $\In^S_{B_u}(\tau)=\varnothing$ since $\sigma\in B_u$ means that $u\notin\col(\sigma)$, but then
  $\rho_{B_u}(\sigma,\tau)=0$ because the procedure $\fix(u,\sigma)$ does not alter $\sigma(u)$.

  Nearing conclusion, we now argue that Lemma~\ref{lem:key} implies $c^S(B_u) \le 1/(4\Delta^3)$.
  This holds because given $S$ and $\tau$ we can construct the $\hat{H}$ occurring
  in~\eqref{eq:cBubound}, and note that it is a bona fide cover of some induced subgraph $F\subset
    G[N(u)]$.  That is, there is a cover $\hat\sH=(L,\hat H)$ of $G$ with the following two properties.
  \begin{enumerate}
    \item One obtains $\hat H$ from $H$ by removing edges in $H^*$, and hence $\hat\sH$ satisfies the hypotheses of \cref{thm:main}.
    \item There is a partial colouring independent set $\hat J\in\cI(\hat H)$ such that $\hat{H} = \hat H_{\hat J}[L(N_G(u))]$.
  \end{enumerate}
  Then the sum in~\eqref{eq:cBubound} over states $\sigma\in\In^S_{B_u}(\tau) \subset B_u$ can be interpreted as
  the probability that when $\col(\sigma)\cap N(v)$ is a random independent set from the hard-core model
  on $\hat{H}$ at fugacity $\lam$, we have $\sigma\in \In^S_{B_u}(\tau)$.  Since
  $\In^S_{B_u}(\tau)\subset B_u$, we can bound this from above by the probability that $\sigma$ belongs
  to~$B_u$ given this random experiment; and Lemma~\ref{lem:key} shows this probability to be at most~$1/(4\Delta^3)$.
\end{proof}

\subsection{Proof of Lemma~\ref{lem:chargeboundsU}}\label{sec:chargeboundsU}
\begin{proof}[Proof of \cref{lem:chargeboundsU}]
  Addressing $U_u^e$ at state $\sigma$ can only introduce flaws of the form $B_v$ for $v$ in~$N^2[u]$
  because it causes no uncolouring, and only affects $\sigma(u)$.

  If addressing $U_u^e$ results in $\tau$ with positive probability then the previous state $\sigma$ must
  be obtained from $\tau$ by setting $\tau(u)=e$.  That is, for $S\subset F$, the set
  $\In^S_{U_u^e}(\tau)$ is either empty, or contains exactly one state $\sigma$.  In the former case
  the charge is zero; and in the latter case, if $u\in\bla(\tau)$ then $\tilde\mu(\sigma)/\tilde\mu(\tau)=\lam$ and
  $\rho_{U_u^e}(\sigma,\tau) = 1/(1+ |L_\sigma(u)|\lam)$, and if $u\in\col(\tau)$ then
  $\tilde\mu(\sigma)/\tilde\mu(\tau)=1$ and $\rho_{U_u^e}(\sigma,\tau) = \lam/(1+ |L_\sigma(u)|\lam)$, hence both
  possibilities yield that
  \[ \frac{\tilde\mu(\sigma)}{\tilde\mu(\tau)}\rho_{U_u^e}(\sigma,\tau) = \frac{\lam}{1+ |L_\sigma(u)|\lam}.  \]
  Then $c^S(U_u^e) \le \lam/(1+ \ell\lam)$ because the ordering on flaws ensures that $\sigma\notin B_u$
    whenever we are addressing $U_u^e$, and hence $u$ has at least $\ell$ available colours.
\end{proof}

\subsection{Proof of Lemma~\ref{lem:phase2}}\label{sec:phase2}

We can prove \cref{lem:phase2} with the local lemma of Moser and Tardos~\cite{MT10}.  Let $\sigma$ be a
flawless partial colouring of $G$, and if necessary remove colours from~$H_\sigma$ such that
$|L_\sigma(u)|=\ell$ for all $u\in G_\sigma$.  We let $I'\subset V(H_\sigma)$ be obtained by choosing for
each remaining blank vertex~$u\in \bla(\sigma)$ a uniform random colour from $L_\sigma(u)$.  Since
$\sigma$ is flawless there are no uncoloured vertices, so the conclusion follows if $I'$ is independent.
Then for this algorithmic local lemma application we have a flaw $A_{xy}$ for each edge $xy$ of
$H_\sigma^*$, and $I'\in A_{xy}$ if and only if $\{x,y\}\subset I'$.  To address $A_{xy}$ we resample
$I'\cap L(u)$ and $I'\cap L(v)$ independently, uniformly at random.

Suppose that $x\in L(u)$, $y\in L(v)$, and $xy\in E(H^*_\sigma)$.  Then $A_{xy}$ is independent of any
$A_{x'y'}$ such that $x'$ and $y'$ are not in $L_\sigma(u)\cup L_\sigma(v)$.  In this simple setting the
full charge machinery of \cref{thm:algorithmicLLLL} is not necessary, and it suffices to use the
result of Moser and Tardos~\cite[Thm.~1.2]{MT10}. We use the slightly stronger version found in
Iliopoulos' doctoral thesis~\cite[Thm.~3.9, p.~17]{Ili}: it implies the original formulation by
setting~$x_A\coloneqq\frac{\psi_A}{1+\psi_A}\in(0,1)$.

\begin{theorem}[Iliopoulos~\cite{Ili}]
  Let $\cP$ be a finite set of mutually independent random variables, and let $\cA$ be a finite set of
  events determined by these variables.  Let $\Gam(A) \coloneqq \{ B\in \cA : \text{$A$ and $B$ are
      dependent}\}$, and let $\mu$ be the probability measure that results from sampling the variables
  $\cP$.  If for each $A\in\cA$ there exists a positive real $\psi_A$ such that
  \[ \frac{\mu(A)}{\psi_A} \sum_{S\subset \Gam(A)}\prod_{B\in S}\psi_B \le 1, \]
  then there is a randomised algorithm that finds an assignment to $\cP$ violating none of the events
  in $\cA$ with expected number of resamplings $\sum_{a\in\cA}\psi_A$.
\end{theorem}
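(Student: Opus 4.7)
The plan is to apply the Moser--Tardos witness tree framework to the natural algorithm: sample each variable in $\cP$ according to its distribution, and while some event $A \in \cA$ holds, pick one (say, according to a fixed order) and resample only the variables $\mathrm{vbl}(A)$ determining it. The target is to bound the expected number of resamplings of each fixed event $A$ by $\psi_A$, so the total is at most $\sum_{A \in \cA}\psi_A$ as claimed.

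First I would attach to each resampling step a \emph{witness tree}: its root is labeled by the event resampled at that step, and children are attached recursively by scanning back through earlier resamplings and adding those whose labels lie in $\Gam(\cdot)$ of the parent's label and are not already accounted for. The construction produces a \emph{proper} tree, in which each vertex together with its children carries pairwise distinct labels.

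The heart of the argument is the Witness Tree Lemma: for any proper tree $\tau$ with vertex labels $A_1,\dotsc,A_m$, the probability that $\tau$ is produced during execution is at most $\prod_{i=1}^m \mu(A_i)$. The proof uses the standard coupling with an i.i.d.\ \emph{resampling table}, in which each variable carries an infinite stream of i.i.d.\ values that the algorithm consumes in order; the tree prescribes table entries that must fall in the specified events, and properness guarantees that distinct tree vertices prescribe entries from disjoint variable--position pairs, so mutual independence of $\cP$ yields the product bound. Setting up this coupling and verifying the disjointness of prescribed entries is the main technical obstacle.

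Granting the lemma, let $T(A)$ denote the sum of $\prod_{v \in \tau}\mu(A_{[v]})$ over proper trees $\tau$ rooted at $A$; by the lemma this is an upper bound on the expected number of times $A$ is resampled. Factoring on the first-level children of the root yields the recursion
\[ T(A) \;\le\; \mu(A)\prod_{B \in \Gam(A)}\bigl(1+T(B)\bigr). \]
The hypothesis can be rewritten as $\mu(A)\prod_{B\in\Gam(A)}(1+\psi_B) \le \psi_A$ using the identity $\sum_{S\subset\Gam(A)}\prod_{B\in S}\psi_B = \prod_{B\in\Gam(A)}(1+\psi_B)$, and a straightforward induction on tree depth then gives $T(A)\le \psi_A$. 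Summing over $A \in \cA$ yields the claimed bound on the expected total number of resamplings, matching the original Moser--Tardos formulation via the substitution $x_A = \psi_A/(1+\psi_A)$ indicated before the statement.
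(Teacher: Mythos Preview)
The paper does not prove this theorem; it is quoted verbatim as a known result from Iliopoulos' thesis~\cite{Ili} (a mild repackaging of Moser--Tardos~\cite{MT10}) and then applied as a black box in the proof of \cref{lem:phase2}. Your sketch is the standard Moser--Tardos witness-tree argument and is correct: the key steps---the resampling-table coupling giving the Witness Tree Lemma, the factorisation $\sum_{S\subset\Gam(A)}\prod_{B\in S}\psi_B=\prod_{B\in\Gam(A)}(1+\psi_B)$, and the depth-truncated recursion $T_d(A)\le\mu(A)\prod_{B\in\Gam(A)}(1+T_{d-1}(B))$ yielding $T(A)\le\psi_A$ by induction---are exactly what the cited sources do, so there is nothing to compare against here beyond noting that you have correctly reconstructed the intended proof.
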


Note that we have $A\in\Gam(A)$ since without loss of generality $\mu(A)\in (0,1)$ and hence $A$ depends
on itself.

In our setting, the variables in~$\cP$ are $I'\cap L(u)$ for $u\in V(G_\sigma)$ and the set of events is
$\cA \coloneqq \{ A_{xy} : xy\in H^*_\sigma\}$.
Supposing that $x\in L(u)$, $y\in L(v)$, and $xy\in E(H^*_\sigma)$, the fact that~$\sigma$ is flawless
implies that
\[ \mu(A_{xy}) = \frac{1}{|L(u)||L(v)|} = \ell^{-2}, \]
and also that
\[ |\Gam(A_{xy})|\le \sum_{x'\in L(u)}\deg^*_{H_\sigma}(x')
  + \sum_{y'\in L(v)}\deg^*_{H_\sigma}(y')\le \ell^2/4.  \]
Hence it suffices to take $\psi \coloneqq 4\ell^{-2}$ to have, for each~$A\in\cA$,
\[ \frac{\mu(A)}{\psi_A} \sum_{S\subset \Gam(A)}\prod_{B\in S}\psi_B
  \le \frac14\cdot{(1+4/\ell^2)}^{\ell^2/4} \le\frac{e}{4}<1. \]
The expected number of resamplings is then at most~$n/4$, because~$H^*_\sigma$, having no more
than~$n\cdot\ell$ vertices and maximum degree no more than~$\ell/8$, contains at most~$n\ell^2/16$ edges.
Consequently, Markov's inequality implies that, for any real~$c\in(0,1)$, the algorithm succeeds in
$O\left(n^{1+c}\right)$ resamplings with probability at least~$1-1/n^c$. Since a resampling is done in
constant time, this constitutes a proof of \cref{lem:phase2}.

\section{Proofs for the application}\label{sec:appproofs}

\subsection{Proof of Lemma~\ref{lem:mad}}

The proof of \cref{lem:mad} relies on the following elementary lemma, which already
appeared~\cite{DJKP18a}, though we give the short proof here for completeness.

\begin{lemma}\label{lem:sparsehcm}
  For any graph~$F$ on~$y$ vertices with positive average degree at most~$a$,
  \begin{align*}
    \frac{\lam Z_F'(\lam)}{Z_F(\lam)}
      & \ge \frac{\lam}{1+\lam}y{(1+\lam)}^{-a},&\text{and}&&
    \log Z_F(\lam)
      & \ge \frac{y}{a}\left(1-{(1+\lam)}^{-a}\right).
  \end{align*}
\end{lemma}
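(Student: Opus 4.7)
The plan is to prove the first inequality by lower-bounding each marginal $\Pr(v\in\bI)$, where $\bI$ is sampled from the hard-core model on~$F$ at fugacity~$\lam$, and then to obtain the second inequality by integration in~$\lam$. Throughout I will use the identity $\lam Z_F'(\lam)/Z_F(\lam) = \EE|\bI|=\sum_{v}\Pr(v\in\bI)$.

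The first step is to establish the per-vertex bound
\[ \Pr(v\in\bI) \ge \frac{\lam}{(1+\lam)^{\deg(v)+1}}. \]
I would prove this via the standard recursion $Z_F(\lam) = Z_{F-v}(\lam) + \lam Z_{F-N[v]}(\lam)$, which gives $\Pr(v\in\bI) = \lam Z_{F-N[v]}(\lam)/Z_F(\lam)$. To control the denominator, classify each independent set of $F-v$ by its intersection $S$ with $N(v)$: this yields $Z_{F-v}(\lam) = \sum_{S}\lam^{|S|} Z_{F - N[v] - N_F(S)}(\lam)$, the sum being over independent $S\subseteq N(v)$. Upper-bounding each term by $\lam^{|S|}Z_{F-N[v]}(\lam)$ (removing vertices only decreases the partition function) and summing over all subsets of $N(v)$ gives $Z_{F-v}(\lam) \le (1+\lam)^{\deg(v)}Z_{F-N[v]}(\lam)$. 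The claimed per-vertex bound then follows using $(1+\lam)^{\deg(v)} + \lam \le (1+\lam)^{\deg(v)+1}$.

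Summing over $v$ and factoring out $\lam/(1+\lam)$ reduces the first inequality to $\sum_{v}(1+\lam)^{-\deg(v)} \ge y(1+\lam)^{-a}$, which I would deduce from Jensen's inequality applied to the convex function $t\mapsto(1+\lam)^{-t}$, together with $\sum_v\deg(v)=2|E(F)|\le ya$ and the fact that $(1+\lam)^{-t}$ is decreasing in $t$. The second inequality is then a one-line consequence: the first inequality reads $\frac{d}{d\lam}\log Z_F(\lam) \ge y(1+\lam)^{-a-1}$, and integrating from $0$ to $\lam$ with $Z_F(0)=1$ yields $\log Z_F(\lam) \ge (y/a)(1-(1+\lam)^{-a})$. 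The only non-routine step is the per-vertex marginal bound; the remaining manipulations are bookkeeping.
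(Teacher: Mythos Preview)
Your argument is correct and follows essentially the same route as the paper: establish the per-vertex bound $\Pr(v\in\bI)\ge \frac{\lam}{1+\lam}(1+\lam)^{-\deg(v)}$, sum and apply Jensen with the average-degree hypothesis, and then integrate to get the bound on $\log Z_F(\lam)$. The only cosmetic difference is in how the per-vertex bound is derived---the paper uses the spatial Markov property (conditioning on $\bI\setminus N(v)$ and bounding $Z_{\bF_{N(v)}}(\lam)\le(1+\lam)^{\deg(v)}$) rather than your partition-function recursion, but these are equivalent elementary manipulations yielding the identical inequality.
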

\begin{proof}
  Let~$\bS$ be a random independent set from the hard-core model at fugacity~$\lam$ on~$F$.
  First, for any $u\in V(F)$,
  \[ \Pr(u\in\bS) = \frac{\lam}{1+\lam}\Pr(\bS\cap N(u)
    = \varnothing)
    \ge \frac{\lam}{1+\lam}{(1+\lam)}^{-\deg(u)}, \]
  because the spatial Markov property gives that $\bS\cap N(u)$ is a random independent set drawn from
  the hard-core model on the subgraph $F[N(u)]$ induced by the externally uncovered neighbours of $u$.
  That is, when $\bU\coloneqq N(u)\setminus N(\bS\setminus N(u))$ is the set obtained by revealing
  $\bS\setminus N(u)$ and removing from~$N(u)$ any vertex with a neighbour in $\bS\setminus N(u)$, we
  write $\bF_{N(u)} \coloneqq F[\bU]$, and then $\bS\cap N(u)$ is distributed according to the
  hard-core model on $\bF_{N(u)}$ at fugacity~$\lam$.  The final inequality comes from the fact that
    any realisation of $\bF_{N(u)}$ has $Z_{\bF_{N(u)}}(\lam)\le {(1+\lam)}^{\deg(u)}$.  The lemma now
  follows by convexity:
  \begin{align*}
      \EE|\bS| = \sum_{u\in V(F)}\Pr(u\in\bS) & \ge \frac{\lam}{1+\lam}\sum_{u\in V(F)}{(1+\lam)}^{-\deg(u)}
                                              \ge \frac{\lam}{1+\lam}y{(1+\lam)}^{-a},
  \end{align*}
  and since
  \[ \EE|\bS| = \frac{\lam Z_F'(\lam)}{Z_F(\lam)} = \lam \frac{\partial}{\partial \lam} \log Z_F(\lam), \]
  integrating this bound gives the required lower bound on~$\log Z_F(\lam)$.
\end{proof}

\begin{proof}[Proof of \cref{lem:mad}]
  Let $u$ be an arbitrary vertex of $G$, and suppose that $F\subset G[N(u)]$ has $y$ vertices.
  By assumption we know that~$F$ has average degree at most~$a$.
    If $a=0$ then $F$ contains no edges and $Z_F(\lam) = {(1+\lam)}^y$, otherwise by \cref{lem:sparsehcm} we have
    \[ \log Z_F(\lam) \ge \frac{y}{a}(1-{(1+\lam)}^{-a}) \ge y\log(1+\lam)\left(1-\frac{a}{2}\log(1+\lam)\right), \]
  which establishes \cref{itm:mad-largeZ}.
    For \cref{itm:mad-hcm} we note that $Z_F(\lam)\le {(1+\lam)}^y$ and hence by \cref{lem:sparsehcm} we have
  \[ \beta\frac{\lam}{1+\lam}\frac{1}{Z_F(\lam)} + \gam \frac{\lam Z'_F(\lam)}{Z_F(\lam)}\ge
    \frac{\lam}{1+\lam}\Big(\beta{(1+\lam)}^{-y} + \gam y{(1+\lam)}^{-a}\Big), \]
  and we define the right-hand side to be~$g(y)$.

  The function $g$ is strictly convex with a stationary minimum at
  \[ y^* \coloneqq a + \frac{\log\left(\frac{\beta}{\gam}\log(1+\lam)\right)}{\log(1+\lam)}, \]
  and if we set~$g(y^*)=1$ for strong local $(\beta,\gam)$-occupancy, and solve for~$\beta$ we obtain
          \begin{align*}
              \beta & \coloneqq \frac{\gam{(1+\lam)}^{\frac{{(1+\lam)}^{1+a}}{\gam\lam}-a}}{e\log(1+\lam)}.
          \end{align*}
  Then the function~$\beta+\gam d$ is strictly convex
  in~$\gam$, and the unique minimiser is attained when 
            \begin{align*}
                \gam  & \coloneqq \frac{1+\lam}{\lam}\frac{{(1+\lam)}^{a}\log(1+\lam)}{1+W(d{(1+\lam)}^{a}\log(1+\lam))}.
          \end{align*}
  One checks that, indeed, setting~$\beta$
    and~$\gam$ to the announced values, and writing~$D$ for the expression~$d{(1+\lam)}^{a}\log(1+\lam)$, we have
  \[ \beta = \frac{1+\lambda}{\lambda}\cdot\frac{e^{W(D)}}{1+W(D)}
    = \frac{1+\lambda}{\lambda}\cdot\frac{D}{W(D)\cdot(1+W(D))}, \]
  noticing that~${(1+\lambda)}^{1+a}/(\gamma\lambda)=(1+W(D))/\log(1+\lambda)$,
  and hence
  \[ \beta+\gam d=\frac{1+\lam}{\lam}\left(\frac{D}{W(D)(1+W(D))}+\frac{D}{1+W(D)}\right)
    =\frac{1+\lam}{\lam} \frac{D}{W(D)}, \]
  as announced.  Furthermore, $y^* = W(D)/\log(1+\lambda)$, and hence indeed
  \begin{align*}
      g(y^*) & = \frac{D\cdot{(1+\lambda)}^{-W(D)/\log(1+\lambda)}}{W(D)(1+W(D))} + \frac{W(D)}{1+W(D)} \\
           & = \frac{1}{1+W(D)}+\frac{W(D)}{1+W(D)}=1.\qedhere
  \end{align*}
\end{proof}

\subsection{Proof of Theorem~\ref{thm:sampling}}

As is well known (and straightforward to realise), in a connected graph any two longest paths must have a common vertex, which
we state as follows.

\begin{lemma}\label{lem:longestPintersect}
  Let $k\ge 3$ and $F=(V,E)$ be a connected $P_{k-1}$-free graph. If $P$ is the vertex set of a longest path in $F$, then $F[V\setminus P]$ is $P_{k-2}$-free.
\end{lemma}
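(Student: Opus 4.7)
The plan is to derive the lemma from the classical fact, alluded to in the sentence immediately preceding the statement, that in a connected graph any two longest paths must share at least one vertex.

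First I would dispose of a trivial case. Since $F$ is $P_{k-1}$-free, every path in $F$ has at most $k-2$ vertices; in particular $|P|\le k-2$. If $|P|<k-2$ then $F$ contains no path on $k-2$ vertices at all, so $F[V\setminus P]$ is vacuously $P_{k-2}$-free. Hence we may assume $|P|=k-2$, so $P$ realises the maximum possible length among paths of $F$.

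Suppose for contradiction that $F[V\setminus P]$ contains a path $Q$ on $k-2$ vertices. Then $P$ and $Q$ are both longest paths of $F$, yet by construction they are vertex-disjoint. This contradicts the intersection fact, completing the argument.

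The main (and only) real ingredient is the intersection fact itself, which the paper treats as well known. A standard short proof proceeds by taking a shortest path $R$ in $F$ from $P$ to $Q$, say with endpoints $u\in P$ and $v\in Q$ and internal vertices off $P\cup Q$. The longer of the two subpaths of $P$ from $u$ to an endpoint has at least $\lceil(k-3)/2\rceil$ edges, and similarly for $Q$ from $v$. Concatenating these two subpaths with $R$ yields a single path in $F$ on at least $2\lceil(k-3)/2\rceil+2\ge k-1$ vertices, contradicting $P_{k-1}$-freeness. Thus the only obstacle is this concatenation estimate, and it is routine.
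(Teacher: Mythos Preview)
Your proposal is correct and follows exactly the approach the paper indicates: the paper gives no proof beyond the remark that in a connected graph any two longest paths intersect, and your argument is precisely the derivation of the lemma from that fact (with the additional courtesy of sketching a proof of the intersection fact itself, which the paper omits).
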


We now prove a warm-up to \cref{thm:sampling} that shows we can calculate the partition function and certain useful probabilities efficiently.

\begin{lemma}\label{lem:ZandP}
  Let $k\ge 3$ and $F$ be a $P_{k-1}$-free graph on $y$ vertices, and let $\sH=(L,H)$ be a cover of $F$ with at most $q$ colours in each list.
  Then there is an absolute constant $c$ such that for any positive~$\lam$ we can evaluate $Z_H(\lam)$ in time $y^{3k}{(1+q)}^{k^2/2}{(ck)}^{k^2}$.

  Further, let $F_1,\dotsc, F_r$ be the connected components of $F$.
  Write $H_i = H[L(V(F_i))]$ for the covers of the components of $F$ given by $H$, and let $\bI$ be a random independent set from the hard-core model at fugacity $\lam$ on $H$.
  Then in time $y^{3k}{(1+q)}^{k^2/2}{(ck)}^{k^2}$ we can compute for every~$i$ in~$[r]$ a longest path $P_i$ in $F_i$, and the probabilities $\Pr(\bI\cap L(P_i) = J_i)$ for every independent set $J_i\subset L(P_i)$.
\end{lemma}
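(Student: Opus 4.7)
The plan is to prove both parts simultaneously by induction on $k$, using Lemma~\ref{lem:longestPintersect} as the structural engine. The partition function factors over connected components: if $F_1,\dotsc,F_r$ are the components of $F$ with associated covers $H_1,\dotsc,H_r$, then $Z_H(\lam)=\prod_i Z_{H_i}(\lam)$, and hard-core samples on distinct components are independent. Hence it suffices to describe the algorithm on a connected $F$, and the per-component probabilities requested by the lemma are produced in parallel with each $Z_{H_i}(\lam)$.

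For the base case $k=3$, $F$ is $P_2$-free, hence edgeless. Then $H$ contains no cross-list edges, so $Z_H(\lam)=\prod_{u\in V(F)}(1+|L(u)|\lam)$ is computable in linear time, and for each single-vertex component $\{u\}$ the required probabilities are $\Pr(\bI\cap L(u)=J)=\lam^{|J|}/(1+|L(u)|\lam)$ for $J\subset L(u)$ with $|J|\le 1$.

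For the inductive step with $F$ connected and $P_{k-1}$-free, the algorithm first finds a longest path $P$ in $F$; by $P_{k-1}$-freeness $|V(P)|\le k-2$, and by Lemma~\ref{lem:longestPintersect} the subgraph $F'\coloneqq F[V(F)\setminus V(P)]$ is $P_{k-2}$-free. Next, enumerate all independent sets $J$ of $H$ contained in $L(V(P))$: since at most one colour per list can belong to an independent set, there are at most $(1+q)^{k-2}$ such $J$. For each $J$, let $\hat H^J$ be the cover of $F'$ obtained from $H[L(V(F'))]$ by deleting every colour adjacent in $H$ to some member of $J$. By the spatial Markov property of the hard-core model,
\begin{equation*}
  Z_H(\lam)=\sum_J \lam^{|J|}\,Z_{\hat H^J}(\lam)
  \quad\text{and}\quad
  \Pr\bigl(\bI\cap L(V(P))=J\bigr)=\frac{\lam^{|J|}\,Z_{\hat H^J}(\lam)}{Z_H(\lam)}.
\end{equation*}
By the inductive hypothesis applied to the $P_{k-2}$-free graph $F'$ on at most $y$ vertices, each $Z_{\hat H^J}(\lam)$ is computable within the promised bound, and the division above delivers the requested probabilities as a free byproduct.

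For the running time, the recursion has depth at most $k-2$; at each level the branching factor is at most $(1+q)^{k-2}$, and the non-recursive per-call work (finding the longest path, enumerating independent sets on $L(V(P))$, and assembling $\hat H^J$) can be bounded by $y^{O(1)}{(ck)}^{O(k)}$ if we use a bounded-length longest-path subroutine such as colour coding. Telescoping, the exponent of $(1+q)$ sums to $\sum_{j=1}^{k-2}j\le k^2/2$, and the product of at most $k$ polynomial-in-$k$ factors yields an overall bound of $y^{3k}(1+q)^{k^2/2}{(ck)}^{k^2}$ for a suitable absolute constant $c$. The main obstacle is respecting this running time budget: the naive approach of enumerating all $(k-2)$-tuples to detect the longest path costs $y^{\Theta(k)}$ per level and blows up to $y^{\Theta(k^2)}$ across the recursion, so the use of a subroutine whose $y$-dependence is fixed independently of $k$ (for instance, colour coding augmented by standard derandomisation) is essential, and the bookkeeping has to verify that no auxiliary step (component decomposition, deletion of forbidden colours, probability normalisation) exceeds these bounds.
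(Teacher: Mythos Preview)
Your proposal is correct and follows the paper's proof essentially step for step: induction on $k$, factoring over components, removing a longest path via an FPT subroutine (the paper cites Bodlaender's $O((k-2)!2^{k-2}y)$ algorithm where you suggest colour coding), invoking Lemma~\ref{lem:longestPintersect}, enumerating the at most ${(1+q)}^{k-2}$ independent sets on $L(V(P))$, and recursing on the $P_{k-2}$-free remainder. The multiplicative recurrence telescopes just as you describe, and your warning that a naive $y^{\Theta(k)}$ longest-path search would blow the $y^{3k}$ budget is precisely why the FPT subroutine is used.
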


\begin{proof}
  In the proof we write $c_1,c_2,\dotsc$ for some unspecified absolute constants, and let $f(y,q,k)$ be the upper bound on running time that we wish to calculate.

  If $k=3$ then $F$ is edgeless and hence
  \[ Z_H(\lam) = \prod_{v\in V(F)}(1+|L(v)|\lam), \]
  which is computable in time $c_1 y$.
  This gives the base case $f(y,q,3)=c_1y$.

  For $k>3$, we note that $F$ has at most $y$ connected components and deal with each one separately.
  The function $Z_H(\lam)$ is multiplicative over the induced covers of the components of~$F$, so
  \[ Z_H(\lam) = \prod_{i=1}^r Z_{H_i}(\lam). \]

  Let $F'=(V',E')$ be a component of $F$, and write $H'=H[L(V')]$ for the induced cover of~$F'$.
  Let $P=\{v_1,\dotsc,v_{k-2}\}$ be the vertex set of a longest path $F'$,
  which we can find in time $c_2 (k-2)!2^{k-2}y$ by executing for each~$j$ from~$k-1$ to~$2$ a fixed-parameter algorithm that finds a path of length $j$ (if one exists) in time $O(j!2^j y)$, see~\cite{Bod93}.

  Since $F'$ is connected, Lemma~\ref{lem:longestPintersect} implies that $F'[V'\setminus P]$ is $P_{k-2}$-free.
  We use the fact that the sum over independent sets $I$ in $H'$ that gives $Z_{H'}(\lam)$ can be split into terms according to~$I\cap L(P)$.
  Let $H^P = H'[L(P)]$ and $H^{V'\setminus P} = H'[L(V'\setminus P)]$. Then
  \[ Z_{H'}(\lam) = \sum_{J\in\cI(H^P)} \lam^{|J|} \cdot Z_{H^{V'\setminus P}_J}(\lam). \]
  It is possible to iterate over the required $J\in\cI(H^P)$ in time $c_3{(k-2)}^2{(1+q)}^{k-2}$ by iterating over all ${(1+q)}^{k-2}$ sets $J$ with $|J\cap L(v_i)|\le 1$ and checking each for independence.
    The independence check takes time $O\big({(k-2)}^2\big)$ as we must verify that each pair of vertices in $J$ is absent from~$E(H)$.
  Then constructing $H^{V\setminus P}_J$ can be done in time $c_4(k-2)y$ because $|J|\le k-2$ and each colour in $J$ can conflict with at most $y$ colours which are removed from $H^{V\setminus P}$ to form~$H^{V\setminus P}_J$.
  Now $Z_{H^{V\setminus P}_J}(\lam)$ can be computed in time $f(y,q,k-1)$ by induction, so we can compute $Z_{H'}(\lam)$ in time
  \[ c_2(k-2)!2^{k-2} y\cdot c_3{(k-2)}^2{(1+q)}^{k-2}\cdot c_4(k-2)y\cdot f(y,q,k-1). \]

  Since there are at most $y$ components of $F$, we have the recurrence
  \[ f(y,q,k) = y\cdot c_2(k-2)!2^{k-2} y\cdot c_3{(k-2)}^2{(1+q)}^{k-2}\cdot c_4ky\cdot f(y,q,k-1). \]
  With the base case of $f(y,q,3)=c_1y$  we have
  \[ f(y,q,k) \le c_1 y {(c_2c_3c_4y^3)}^{k-3}{(2(1+q))}^{k(k-3)/2}{((k-2)!)}^4\cdot \prod_{j=1}^{k-3}j!,\]
  which for a large enough constant $c$ is at most $y^{3k}{(1+q)}^{k^2/2}{(ck)}^{k^2}$.

  For the final statement, observe that in the above argument we compute for each $i\in [r]$ an evaluation of $Z_{H_{J_i}[V(F_i)\setminus P_i]}$ for some longest path $P_i$ in $F_i$, and each independent set $J_i\subset L(P_i)$ along the way to computing $Z_{H_i}(\lam)$.
  But
  \[ \Pr(\bI\cap L(P_i) = J_i) = \frac{\lam^{|J_i|} Z_{H_{J_i}[V(F_i)\setminus P_i]}(\lam)}{Z_{H_i}(\lam)}, \]
  so with some straightforward extra bookkeeping we have the required probabilities.
\end{proof}

With this result we can give the required sampling algorithm, which is restated here.

\sampling*

\begin{proof}
  We write $\bI$ for a random independent set from the hard-core model on $\hat H$.

  If $k=3$ then $F$ is edgeless and it suffices to sample independently for each vertex $v\in V$.
  With probability $1/(1+|\hat L(v)|\lam)$ take $\bI\cap \hat L(v)=\varnothing$, otherwise let $\bI\cap \hat L(v)$ be a uniform random element of $\hat L(v)$.
  This can be done in time $O(qy)$ provided sampling from a biased coin takes time $O(1)$ and sampling uniformly from a list of length $q$ takes time $O(q)$.

  If $k>3$ then let $F_1,\dotsc,F_r$ be the components of $F$.
  By Lemma~\ref{lem:ZandP} we can compute in time $y^{3k}{(1+q)}^{k^2/2}{(ck)}^{k^2}$ a longest path $P_i$ in $F_i$ and the probabilities $\Pr(\bI \cap \hat L(P_i) = J_i)$ for all independent sets $J_i\subset \hat L(P_i)$ and for all $i$.
  Hence we can sample $\bI \cap \hat L(P_i)$ for all $i$ in this time.
    Then we can construct $H'_i=\hat H_{\bI \cap \hat L(P_i)}[\hat L(V(F_i)\setminus P_i)]$ for each $i$ in time $O(rky)$, and use the fact that $\bI\cap \hat L(V(F_i)\setminus P_i)$ is distributed according to the hard-core model at fugacity $\lam$ on $H'_i$.

  With this scheme it is straightforward to show by induction that the time taken is at most
  \[ y^{3k^2}{(1+q)}^{k^3/2}{(ck)}^{k^3}. \qedhere \]
\end{proof}

\bibliographystyle{habbrv}
\bibliography{hcm_cs}

\end{document}